\documentclass{article}

\usepackage{microtype}
\usepackage{graphicx}
\usepackage{subcaption}

\usepackage{booktabs} 

\usepackage{hyperref}
\usepackage{placeins}

\usepackage[export]{adjustbox}

\usepackage[accepted]{icml2026}

\usepackage{amsmath}
\usepackage{amssymb}
\usepackage{mathtools}
\usepackage{amsthm}
\usepackage{multirow}
\usepackage{diagbox}
\usepackage{makecell}

\usepackage[capitalize,noabbrev]{cleveref}

\theoremstyle{plain}
\newtheorem{theorem}{Theorem}[section]
\newtheorem{proposition}[theorem]{Proposition}
\newtheorem{lemma}[theorem]{Lemma}

\theoremstyle{definition}
\newtheorem{definition}[theorem]{Definition}

\theoremstyle{remark}
\newtheorem{remark}[theorem]{Remark}
\DeclareMathOperator{\Var}{Var}

\newcommand{\eg}{\textit{e.g.}}
\newcommand{\ie}{\textit{i.e.}}

\usepackage[textsize=tiny]{todonotes}

\icmltitlerunning{Cert-LAS: Toward Certified Model Ownership Verification for T2I Diffusion Models via Layer-Adaptive Smoothing}

\begin{document}

\twocolumn[
  \icmltitle{Cert-LAS: Toward Certified Model Ownership Verification for \\
    Text-to-Image Diffusion Models via Layer-Adaptive Smoothing}




\icmlsetsymbol{equal}{*}

\begin{icmlauthorlist}
  \icmlauthor{Leyi Qi}{ntu}
  \icmlauthor{Yiming Li}{ntu}
  \icmlauthor{Siyuan Liang}{ntu}
  \icmlauthor{Zhengzhong Tu}{tamu}
  \icmlauthor{Dacheng Tao}{ntu}
\end{icmlauthorlist}

\icmlaffiliation{ntu}{
  Generative AI Lab, College of Computing and Data Science,
  Nanyang Technological University, Singapore
}

\icmlaffiliation{tamu}{
  Department of Computer Science and Engineering,
  Texas A\&M University, USA
}

\icmlcorrespondingauthor{Yiming Li}{liyiming.tech@gmail.com}
\icmlcorrespondingauthor{Dacheng Tao}{dacheng.tao@gmail.com}

  \icmlkeywords{Machine Learning, ICML}

  \vskip 0.3in
]



\printAffiliationsAndNotice{}  

\begin{abstract}
Large-scale text-to-image (T2I) diffusion models have enabled unprecedented creative applications, but their unauthorized use has raised serious intellectual property concerns, making model ownership verification (MOV) increasingly critical. We find that existing backdoor-based diffusion watermarking methods often (implicitly) assume a ``faithful'' verification process, namely, that the verifier can query a suspicious model and obtain the faithful watermark response to complete MOV. However, in practice, adversaries may intentionally or unintentionally damage potential watermark signals, significantly degrading verification reliability. To address this issue, we propose Cert-LAS, the first certified MOV method for T2I models based on layer-adaptive smoothing. In general, Cert-LAS embeds specified watermarks using diffusion classifiers and an LFS-guided layer-adaptive noise, and verifies ownership by examining whether the suspected model exhibits significantly stronger watermark responses compared to unwatermarked references through hypothesis testing. We further prove that, under certain conditions, our Cert-LAS can still achieve reliable verification even in the presence of malicious removal attacks. Extensive experiments validate the effectiveness of Cert-LAS and its resistance to adaptive attacks. Our code is available \href{https://github.com/Leyi-Qi/Cert-LAS}{here}.

 
\end{abstract}

\vspace{-1em}
\section{Introduction}
\label{sec:intro}


Text-to-image (T2I)~\cite{zhang2025diffusion,wang2025designdiffusion} diffusion models have emerged as a major paradigm in generative modeling, achieving breakthrough progress in high-quality image synthesis and being widely adopted in content creation and commercial design. Represented by large-scale pretrained models such as Stable Diffusion~\cite{Rombach_2022_CVPR}, T2I diffusion models can generate high-quality and diverse images from user prompts, substantially improving creative efficiency and reshaping content production workflows; moreover, with the advancement of personalization and fine-tuning techniques, these models can be customized to generate images with specific themes or styles~\cite{lim2025conceptsplit,wu2025themis,li2025towards}. However, training high-performing models typically requires massive data and expensive computational resources~\cite{zheng2025dense2moe,dubinski2025cdi,shao2026reading}, making them valuable intellectual property assets that are also vulnerable to unauthorized copying, redistribution, and misuse~\cite{li2025anti,lyu2025pla,li2025rethinking}. Therefore, effectively protecting the copyright and ownership of T2I diffusion models has become a critical challenge in generative model security and intellectual property protection~\cite{guo2024copyrightshield,chen2023universal}.


To the best of our knowledge, model ownership verification (MOV) is an important tool for mitigating model-stealing risks, aiming to determine whether a \emph{suspicious model} is stolen from a protected \emph{owner model}. Existing MOV methods are generally categorized into \emph{model fingerprinting}~\cite{gloaguen2025robustllmfingerprintingdomainspecific,pasquinillmmapfingerprintinglargelanguage,shao2025sok} and \emph{model watermarking}~\cite{shao2024explanation,yang2025you,yang2026swap}: 
Fingerprinting typically embeds verification signals outside the model backbone
but often exhibits limited robustness under sophisticated stealing scenarios~\cite{li2025move,wang2025sleepermark,zhu2025holmes}. Therefore, we primarily focus on model watermarking, which trains the owner model to produce verifiable outputs on predefined inputs. Specifically, this paper mainly focuses on the most widely used backdoor-based diffusion watermarking paradigm, which embeds a private trigger during training that activates predefined watermark behavior during generation \cite{zhao2023recipe,wang2025sleepermark}, thereby enabling MOV for T2I diffusion models in practice.


Inspired by \cite{qiao2025certdw,qiao2026dssmoothing}, we first revisit existing backdoor-based diffusion watermarking methods. We find that these methods often implicitly assume a ``faithful'' verification process, where a verifier can query a suspicious model and obtain faithful watermark responses to complete MOV. However, in realistic adversarial settings, this assumption frequently breaks down, as an adversary could intentionally or unintentionally damage
potential watermark signals. In particular, our experiments demonstrate that both unintentional random perturbations and intentionally crafted adversarial perturbations can substantially corrupt the watermark signal and degrade verification reliability. Meanwhile, for MOV of conventional classifiers, pioneering studies~\cite{bansal2022certified,jiang2023ipcert,ren2023dimension} have provided certified robustness guarantees: as long as parameter perturbations are confined within a certified region, the watermark remains stable and non-removable in the worst case. This leads to a key question: \textit{can we design a certified watermark for T2I diffusion models that is both stealthy and equipped with provable robustness guarantees?}


\looseness=-1
The answer to this question is affirmative, although we cannot simply adapt existing certified watermarking methods from classifiers. This is mainly because diffusion models learn \emph{score estimates} (\ie, gradients of the log-density) over a much broader data region rather than relying on low-dimensional decision boundaries, and certified robust training at this scale is computationally prohibitive. To address this gap, we propose Cert-LAS, the first certified watermarking method for T2I diffusion models for model ownership verification, and show that our method enables reliable MOV under certain conditions (\eg, bounded parameter perturbations). In general, our Cert-LAS operates in two stages. In the first stage, we embed a trigger-free watermark under layer-adaptive smoothing. We begin by allocating layer-wise noise to each UNet layer according to the Layer Fine-tuning Sensitivity (LFS) indicator, concentrating noise on more vulnerable layers to enlarge the certifiable region. We then leverage a private diffusion classifier to induce the generator to produce misclassified watermark signals, while a perceptual consistency regularizer maintains generation quality, evading watermark auditing from both input and output spaces. To make robust optimization computationally tractable for diffusion models, we further adopt an exponential growth schedule that progressively increases the number of noise samples for gradient averaging during training, significantly reducing the computational overhead of robust optimization.
In the second stage, we introduce two statistics, \ie, Watermark Robustness (WR) and Reference Probability (RP), to measure the probability of predicting the target class prompt for a suspected watermarked generator and an unwatermarked reference generator under layer-adaptive smoothing, respectively. 
In particular, we prove a lower bound on their gap when parameter perturbations on the suspected model remain within the certifiable range. As such, by employing a paired-sample $t$-test, the suspicious model can be verified as derived from the protected model (without authorization) if its WR value is significantly larger than the RP value of a reference generator that is independently trained without watermarking, thereby achieving certified ownership verification.

In summary, the main contributions of this paper are fourfold: \textbf{(1)} We revisit existing model ownership verification (MOV) methods for text-to-image (T2I) diffusion models and reveal their limitations in undetectability and robustness; \textbf{(2)} We propose Cert-LAS, the first certified watermarking method for T2I diffusion models, which uses a diffusion classifier as an implicit watermark carrier and introduces layer-adaptive randomized smoothing tailored to UNet architectures; \textbf{(3)} We theoretically analyze the robustness guarantees of the proposed MOV method for T2I diffusion models and establish the corresponding conditions; \textbf{(4)} We conduct extensive experiments to validate the effectiveness of our method and its resistance to potential adaptive attacks.

\section{Related Work}
\subsection{Copyright Protection in Deep Learning}
Copyright protection of deep learning systems mainly proceeds along two complementary lines: \emph{data(set) ownership verification} (DOV) and \emph{model ownership verification} (MOV), depending on the protected asset. DOV~\cite{shao2025databench,li2025rethinking} embeds verification signals into the protected data and tests whether a suspect model has been developed based on it. Although DOV can in principle be repurposed for MOV by checking whether a suspect model has been trained on this private dataset, the defender can only watermark the dataset itself and has no control over the training pipeline, making DOV insufficient for reliable model ownership verification against informed adversaries with stronger knowledge of the training pipeline. We therefore focus on MOV, where the defender directly controls the watermark embedding inside the protected model and can thus enforce stronger robustness guarantees.  

Existing MOV approaches mainly fall into two categories: model fingerprinting~\cite{gloaguen2025robustllmfingerprintingdomainspecific,pasquinillmmapfingerprintinglargelanguage} and model watermarking~\cite{shao2024explanation,yang2025you}, depending on whether predefined secret inputs are needed (see Appendix~\ref{app:related_work}). Fingerprinting methods typically embed verification signals outside the model backbone, but they often become less effective under sophisticated stealing scenarios~\cite{gan2023towards,li2025move,wang2025sleepermark}. Therefore, we primarily focus on model watermarking, which trains the owner model to produce verifiable outputs on specific inputs. In recent years, model watermarking has expanded from image classification~\cite{liu2023trapdoor} to a range of tasks~\cite{yang2024fedgmark,wang2025vision}. With the rapid rise of generative models, text-to-image (T2I) diffusion models have also attracted growing attention. 
In this context, backdoor-based watermarking with synthetic triggers~\cite{zhao2023recipe,wang2025sleepermark,gao2025toward} (\ie, rare tokens and atypical patterns) has gradually become the dominant method. However, synthetic triggers are more prone to detection and removal due to their semantic atypicality~\cite{liang2024poisoned,liang2024unlearning}, and existing methods lack certified robustness guarantees against adaptive attacks~\cite{wang2023aegis,zhu2024breaking,wang2025your}. Therefore, it is necessary to develop watermarking approaches that achieve both stealthiness and certified robustness.

\subsection{Certified Robustness}
\label{sec:related_certified}

Certified robustness~\cite{pmlr-v202-voracek23a,lyu2024adaptive,qiao2026dssmoothing} guarantees that a model's output remains unchanged under any perturbation within a provable certified region. Randomized smoothing~\cite{cohen2019certified,salman2019provably,qiao2026cert} is a widely used approach to achieve this, injecting random noise and aggregating predictions via majority vote to produce smoothed outputs. This idea has been extended from input space to parameter space~\cite{bansal2022certified,jiang2023ipcert,ren2023dimension}, enabling certified watermark robustness in MOV: as long as parameter perturbations remain within a certified region, the watermark exhibits provable non-removability. However, these methods are designed for small-scale classifiers with low-dimensional decision boundaries, whereas T2I diffusion models learn score estimates over high-dimensional continuous distributions~\cite{song2019generative,song2020score,li2025towards}, making both fidelity preservation and certified training significantly more challenging. Consequently, certified MOV techniques for conventional classifiers cannot be directly transferred to T2I diffusion models, leaving certified watermarking for T2I diffusion models  unexplored.

\begin{table}[!t]
\centering
\caption{ Prompt and image space suspicious scores (\%) for representative backdoor-based watermarking methods.}
\label{tab:prompt-image-susp}
\vspace{-0.5em}
\resizebox{0.27\textwidth}{!}{
\begin{tabular}{lcc}
\toprule
Method & $\mathcal{S}_{\text{in}}(p)$ & $\mathcal{S}_{\text{out}}(p)$ \\
\midrule
WatermarkDM & 96.42 & 93.30 \\
SleeperMark & 95.81 & 2.45 \\
\bottomrule
\end{tabular}
}
\vspace{-1.71em}
\end{table}

\section{Revisiting T2I Backdoor Watermarking}
\label{sec:revisiting}

Existing backdoor-based diffusion watermarking methods often (implicitly) assume a ``faithful'' verification process, in which watermark responses from a queried model are stable enough for reliable verification. However, in real-world settings, the adversary may first conduct watermark auditing to expose embedded patterns and then apply intentional or unintentional perturbations that can corrupt the watermark signal and significantly degrade verification reliability. In this section, we examine whether current backdoor-based diffusion watermarking methods remain effective under these conditions. Before presenting our experiments and results, we briefly review the general workflow of such methods.

\textbf{Main Pipeline of (Backdoor-Based) Diffusion MOV Methods.} Consider a diffusion generator $G(\cdot;\boldsymbol{\theta})$ inducing conditional distributions $P_{\boldsymbol{\theta}}(\cdot | p)$ over $\mathcal{X}=[0,1]^{C\times H\times W}$ given prompts $p\in\mathcal{P}$. In the \emph{model watermarking phase}, the owner defines a private trigger $\kappa$ with associated trigger map $T_{\kappa}:\mathcal{P}\to\mathcal{P}$ (\eg, prepending a secret token), and produces a watermarked generator $G(\cdot;\boldsymbol{\theta}_w)$ such that benign behavior is preserved while triggered sampling $P_{\boldsymbol{\theta}_w}(\cdot | T_{\kappa}(p))$ yields images containing a verifiable artifact. The owner also specifies a verification functional $\Psi_{\kappa}:\mathcal{X}^{N}\to\{0,1\}$ that determines whether $N$ \textit{i.i.d.} triggered samples exhibit the artifact (\eg, via target-image similarity or bit-string decoding). In the \emph{ownership verification phase}, given a suspicious generator $G(\cdot;\boldsymbol{\theta}')$, the owner queries it with triggered prompts $\tilde{p}_i = T_\kappa(p_i)$ to obtain $\boldsymbol{x}_i \sim P_{\boldsymbol{\theta}'}(\cdot | \tilde{p}_i)$, and claims ownership if $\Psi_{\kappa}(\boldsymbol{x}_1,\ldots,\boldsymbol{x}_N) = 1$.

\begin{figure}[!t]
    \centering
    \begin{subfigure}[b]{0.235\textwidth}
        \includegraphics[width=\linewidth]{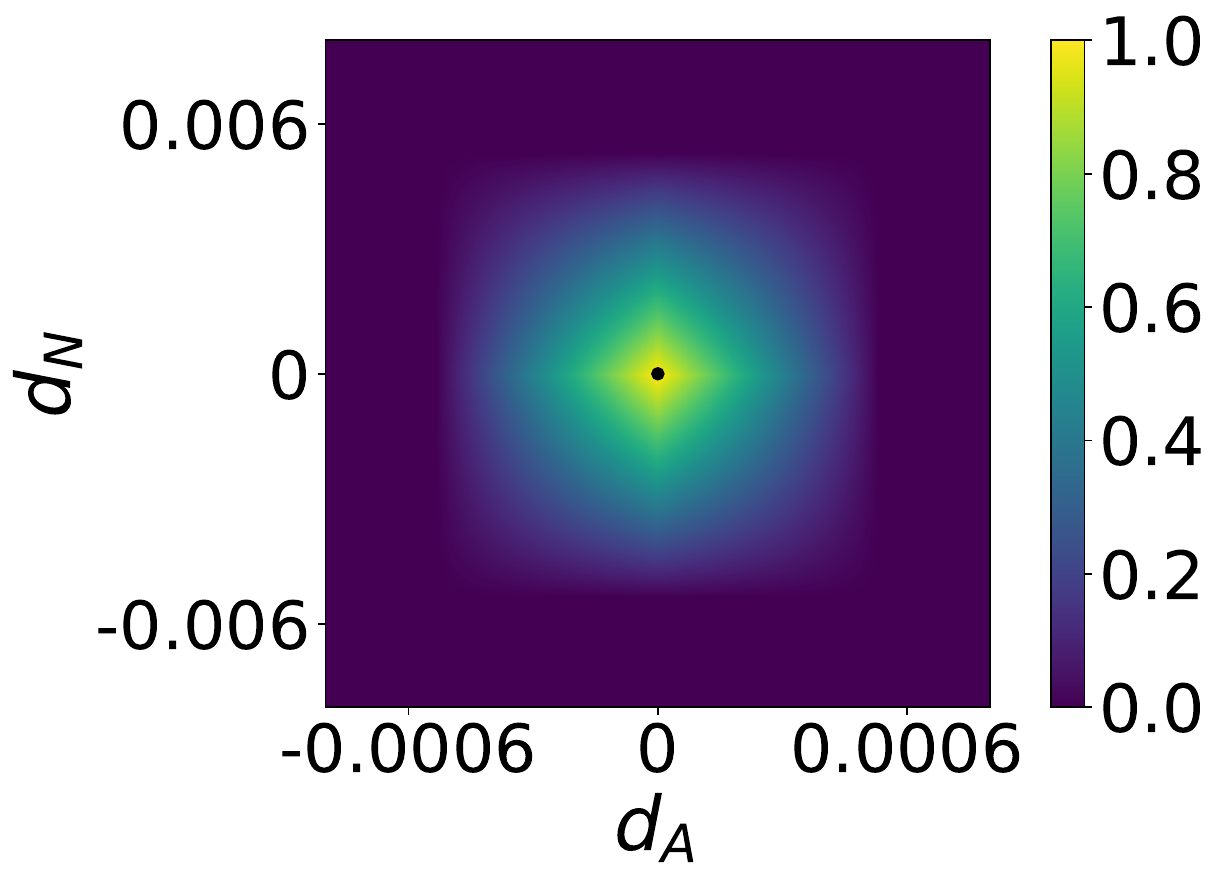}
        \caption{ WatermarkDM}
        \label{fig:robustness_wmdm}
    \end{subfigure}
    \hfill
    \begin{subfigure}[b]{0.235\textwidth}
        \includegraphics[width=\linewidth]{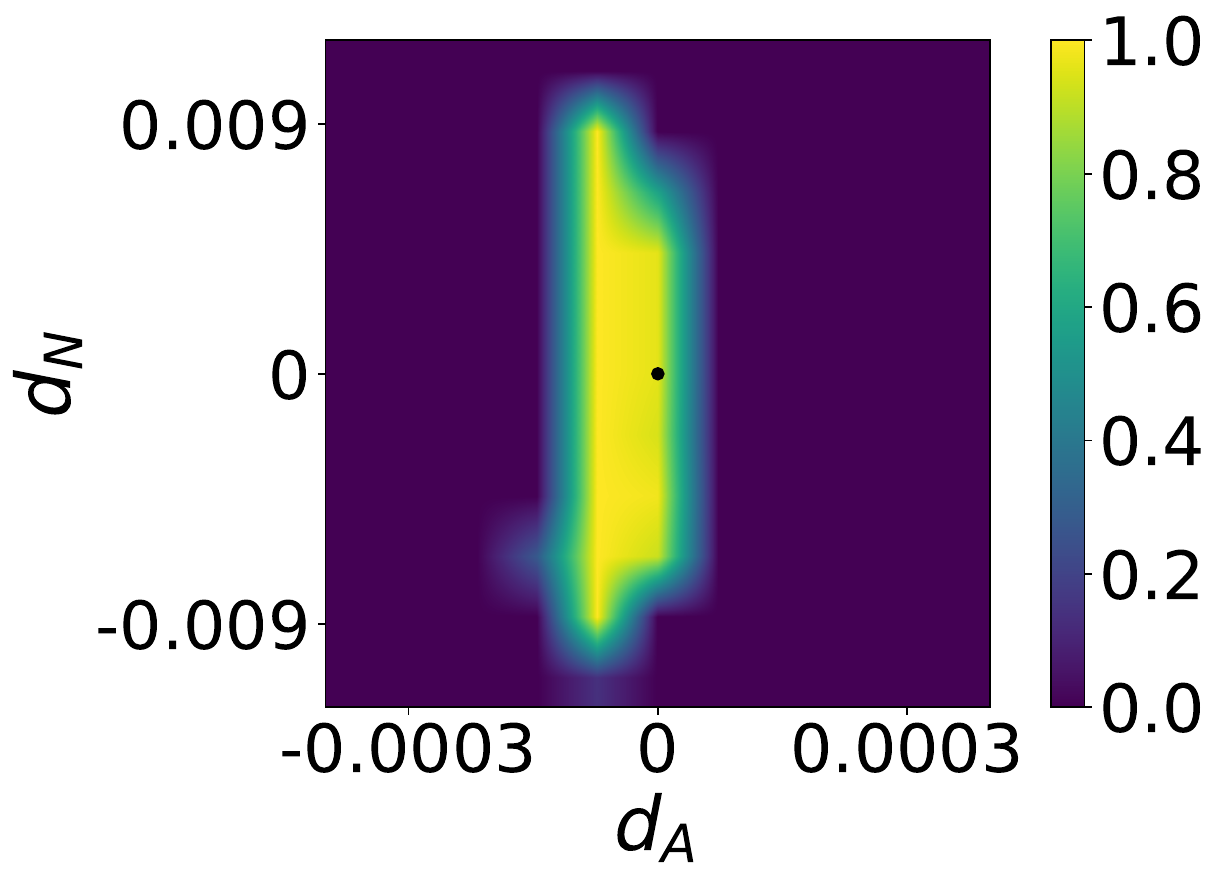}
        \caption{ SleeperMark}
        \label{fig:robustness_sleeper}
    \end{subfigure}
    \vspace{-0.5em}    
    \caption{ $\mathrm{TPR}@\mathrm{FPR}=10^{-6}$ of watermarked models under parameter perturbations. $\mathbf{d}_N$ and $\mathbf{d}_A$ denote the random noise and adversarial directions, respectively, and `$\bullet$' marks the original watermarked model.}
    \label{fig:robustness_landscape}
    \vspace{-2em}
\end{figure}
\subsection{Limitations of Trigger Stealthiness under Auditing}

\textbf{Settings.}~We hereby exploit two representative backdoor-based diffusion watermarking methods (\ie, WatermarkDM~\cite{zhao2023recipe} and SleeperMark~\cite{wang2025sleepermark}) to evaluate the detectability of their synthetic triggers when an adversary audits both prompts and generated images. For \textbf{prompt-space auditing}, we measure word-level contextual incongruity. Given a prompt $p = \{w_i\}_{i=1}^n$, we use GPT-2 medium~\cite{radford2019language} to compute for each word the conditional surprisal $m_i(p)$ under the original left context and a baseline surprisal $q_i(p)$ under an unconditional prior. The prompt suspiciousness score is then $\mathcal{S}_{\text{in}}(p) \triangleq \sigma\!\left(\max_{i} \frac{m_i(p)}{q_i(p)}\right)$, where $\sigma(\cdot)$ is the logistic sigmoid. When $\mathcal{S}_{\text{in}}(p)$ suggests trigger presence, we conduct \textbf{image-space auditing}: we generate images from both the original prompt $p^{+}=p$ and a de-triggered variant $p^{-}$, then compare their within-prompt similarity $\mathcal{M}(p)$ (average pairwise MSE). The image suspiciousness score is $\mathcal{S}_{\text{out}}(p) \triangleq \frac{|\mathcal{M}(p^{+})-\mathcal{M}(p^{-})|}{\mathcal{M}(p^{-})}$. For both metrics, higher scores indicate greater likelihood of trigger presence.

\textbf{Results.} 
As shown in Tab.~\ref{tab:prompt-image-susp}, both methods are detectable but differ in detection space. Specifically, WatermarkDM is detectable in both spaces, with $\mathcal{S}_{\mathrm{in}}(p)$ and $\mathcal{S}_{\mathrm{out}}(p)$ both exceeding 93\%. In contrast, SleeperMark evades image-space detection $\mathcal{S}_{\mathrm{out}}(p)=2.45\%$ due to semantic diversity preservation, but remains highly detectable in prompt space $\mathcal{S}_{\mathrm{in}}(p)>95\%$. These results reveal an inherent limitation of backdoor-based watermarking: synthetic triggers compromise stealthiness and can be identified through auditing.

\subsection{Limitations of Watermark Robustness under Adversarial Parameter Perturbations}

\label{sec:revisiting2}

\textbf{Settings.} To examine watermark robustness under perturbations, we visualize $\mathrm{TPR}@\mathrm{FPR}=10^{-6}$ in a two-dimensional parameter subspace. Following previous works~\cite{zhao2023recipe,wang2025sleepermark}, the verification functionals are method-specific: for WatermarkDM, $\Psi_\kappa$ accepts ownership when the SSIM~\cite{wang2004image} between generated samples and a predefined reference image exceeds a threshold; for SleeperMark, $\Psi_\kappa$ accepts when the bit-decoding accuracy from a pretrained message decoder exceeds a threshold. All thresholds are calibrated via hypothesis testing on unwatermarked models to control FPR at $10^{-6}$.  We define the perturbed parameters as
$\boldsymbol{\theta}'(\epsilon_N,\epsilon_A)
= \boldsymbol{\theta}_w+\epsilon_N\mathbf{d}_N+\epsilon_A\mathbf{d}_A$,
where $\mathbf{d}_N$ is a random sign vector and $\mathbf{d}_A$ is the worst-case adversarial direction  that maximally degrades verification. These perturbed parameters 
correspond to the perturbed generator 
$G(\cdot;\boldsymbol{\theta}'(\epsilon_N,\epsilon_A))$. We report 
$\mathrm{TPR}@\mathrm{FPR}=10^{-6}$ across the $(\epsilon_N,\epsilon_A)$ grid 
to illustrate how verification degrades under random versus adversarial perturbations.

\textbf{Results.}  As shown in Fig.~\ref{fig:robustness_landscape}, both random and adversarial perturbations degrade watermark verification, although adversarial perturbations are far more effective. For SleeperMark, a random perturbation requires a magnitude of 0.009 to cause a noticeable drop, whereas an adversarial perturbation of only 0.0003 already induces significant degradation. This indicates that while parameter perturbations in general can compromise T2I diffusion watermarks, adversarial perturbations even pose a substantially stronger threat.

\section{The Proposed Method}
As demonstrated in Section~\ref{sec:revisiting}, existing backdoor-based diffusion watermarking methods are fragile under both watermark auditing and parameter perturbations. To address these issues, we propose Cert-LAS, the first certified, trigger-free MOV method for T2I diffusion models based on layer-adaptive smoothing. Before presenting the method details, we first introduce the threat model and preliminaries.

\subsection{Threat Model} Following the classical model ownership verification (MOV) setting~\cite{yang2024fedgmark,shao2024explanation,li2025move}, we consider a model owner (\ie, defender), an adversary, and a trusted verification authority. The model owner embeds a watermark into a text-to-image (T2I) diffusion model before release. After release, users may apply the model to downstream tasks, while adversaries may steal it for unauthorized exploitation and ownership claims~\cite{liang2022imitated,guo2023isolation}. Once an infringement is suspected, the verification authority obtains a copy of the suspect model (\eg, its weights) and performs verification using private information held only by the owner; ownership violation is confirmed if the watermark is detected. Unlike prior work that implicitly assumes adversaries only perform benign personalization fine-tuning, we further consider stronger adversaries who may adaptively remove watermarks via post-release parameter modifications, reflecting more realistic adversarial scenarios. An extended discussion is provided in Appendix~\ref{app:threat_model_discussion}.

\begin{figure*}[t]
    \centering
    \includegraphics[width=\linewidth]{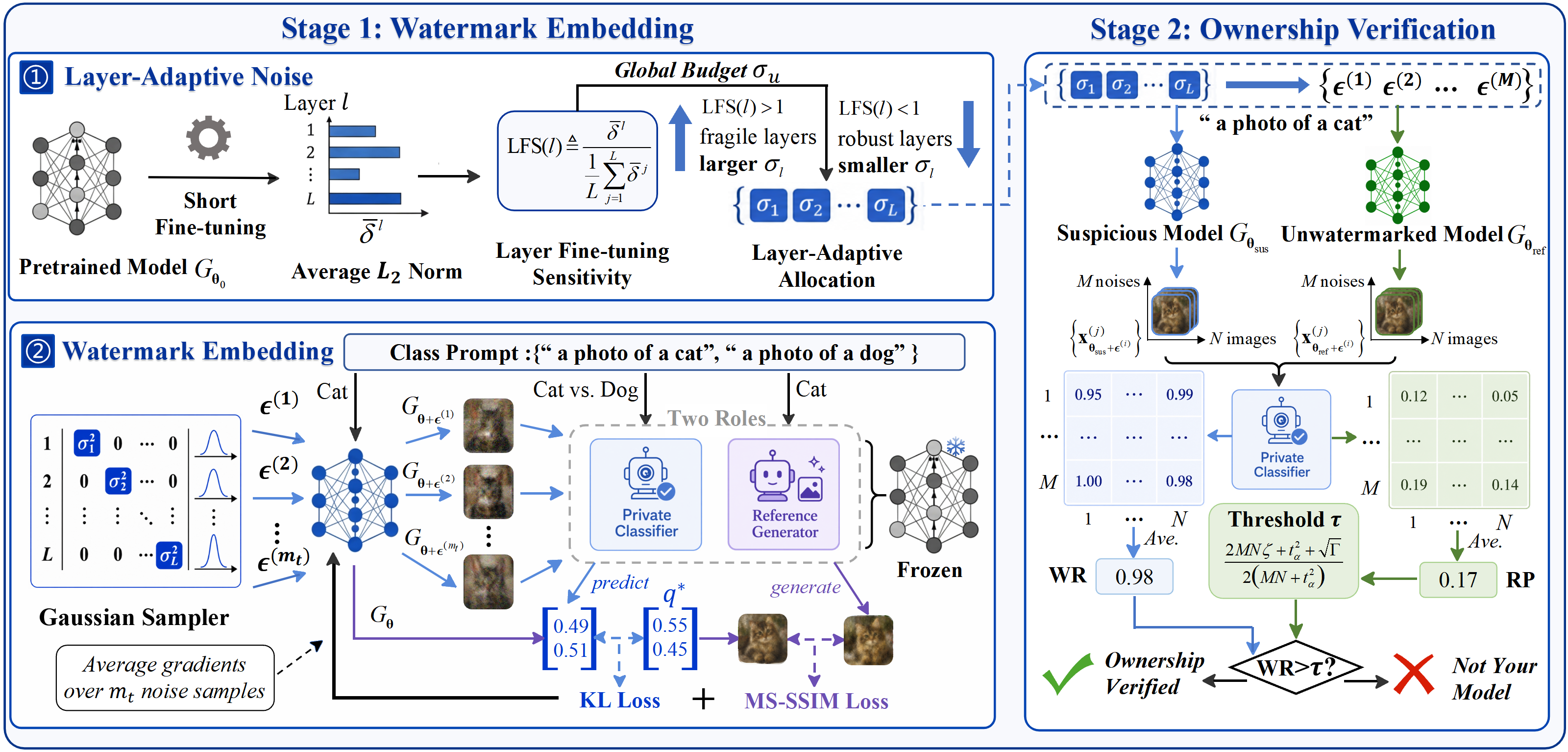}
\caption{The main pipeline of Cert-LAS consists of two stages. In
the first stage, we conduct a short fine-tuning on the pretrained
model $G_{\boldsymbol{\theta}_0}$ to derive the Layer Fine-tuning
Sensitivity $\mathrm{LFS}(l)$, which adaptive assigns layer-wise noise
levels $\{\sigma_1,\dots,\sigma_L\}$ under a global noise budget
$\sigma_{\mathrm{u}}$, allocating larger noise to fragile layers.  Guided by this allocation, we sample layer-adaptive noise
$\boldsymbol{\epsilon}$ and inject it into the generator, and
leverage a frozen model as a private diffusion classifier, training
the generator to induce misclassification toward a target posterior
$q^*$ while preserving fidelity. In the second stage, we reuse the
layer-adaptive noise levels to compute the Watermark Robustness (WR)
of a suspicious model and the Reference Probability (RP) of an
unwatermarked reference; the suspicious model is verified as derived
from the protected one if WR significantly exceeds RP, equivalently
when WR surpasses the closed-form threshold in
Eq.~\eqref{eq:detection_threshold}.}
\vspace{-1em}
    \label{fig:pipeline}
\end{figure*}

\subsection{Preliminaries}
In this section, we introduce the theoretical foundations of layer-adaptive noise design. We begin by introducing a metric to quantify layerwise parameter dynamics during fine-tuning, which enables us to formalize layer-adaptive noise. To ensure fair comparison with layer-uniform baselines, we further establish a budget-matching criterion.

\begin{definition}[Average $L_2$ Norm]
\label{def:avgl2norm}

For a diffusion model with $L$ layers of dimensions $\{d_l\}_{l=1}^{L}$, let $\boldsymbol{\theta}^l \in \mathbb{R}^{d_l}$ represent the parameters of layer $l$. The average $L_2$ norm measures the magnitude of  parameter change for layer $l$ between step $t_1$ and step $t_2$:
$\bar{\delta}^l(t_1,t_2)
 \triangleq  \frac{\bigl\|\boldsymbol{\theta}^l(t_2)-\boldsymbol{\theta}^l(t_1)\bigr\|_2}{\sqrt{d_l}},$
where $\boldsymbol{\theta}^l(t)$ denotes the parameter vector of layer $l$ at step $t$ and $\left\| \cdot \right\|_2$ represents the $L_2$ norm.
\end{definition}

Next, we formalize layer-adaptive noise as follows.

\looseness=-1
\begin{definition}[Layer-adaptive Noise]
\label{def:aniso_noise}
For a scaling factor $k$, let $\boldsymbol{\epsilon}_k = (\boldsymbol{\epsilon}_k^1, \ldots, \boldsymbol{\epsilon}_k^L) \sim \mathcal{E}_k$ denote a noise sample with $\boldsymbol{\epsilon}_k^l \in \mathbb{R}^{d_l}$. We call $\mathcal{E}_k$ a \emph{layer-adaptive} noise distribution if $\boldsymbol{\epsilon}_k$ is non-uniform with respect to the layer-block decomposition, \ie, there exist layers $l \neq j$ such that the marginal distributions of $\boldsymbol{\epsilon}_k^l$ and $\boldsymbol{\epsilon}_k^j$ have different scales under the Euclidean norm.
\end{definition}

\begin{remark}[Multivariate Gaussian Specialization]
\label{rem:gaussian_instantiation}
In this paper, we instantiate Definition~\ref{def:aniso_noise} with a multivariate Gaussian distribution, since it is analytically tractable for certification and often serves as a common approximation to aggregated stochastic effects. Let $\boldsymbol{\sigma}_k=(k\sigma_1,\ldots,k\sigma_L)$ denote the scaled layer-wise noise levels. We take $\boldsymbol{\epsilon}_k \sim \mathcal{N}(\mathbf{0},\boldsymbol{\Sigma}_k)$ with block diagonal covariance $\boldsymbol{\Sigma}_k =
\mathrm{diag}\!\bigl(
\sigma_{k,1}^2\mathbf{I}_{d_1},
\ldots,
\sigma_{k,L}^2\mathbf{I}_{d_L}
\bigr)$, where $\sigma_{k,l} = k\sigma_l$.
\end{remark}

To enable fair comparison with layer-uniform baselines, we define a \textit{ budget equivalence condition} matching the total noise budget across allocation strategies.

\begin{definition}[Budget Equivalence Condition]
\label{def:budget_matching}
Let $\boldsymbol{\sigma} = (\sigma_1, \ldots, \sigma_L)$ specify a layer-adaptive noise as in Definition~\ref{def:aniso_noise}, and let $\sigma_{\mathrm{u}}$ denote the noise level of a layer-uniform noise. We say the two noises are \emph{budget-matched} if they induce equivalent Mahalanobis distances for all parameter changes, which requires:
$
\sigma_{\mathrm{u}}^2 = \frac{\sum_{l=1}^{L} d_l \sigma_l^2}{\sum_{l=1}^{L} d_l}.$
\end{definition}
Intuitively, Definition~\ref{def:budget_matching} ensures that layer-uniform and layer-adaptive schemes use equivalent total noise budget, so any performance difference can be attributed solely to the noise allocation strategy.

\subsection{Overview of the Proposed Method}

\looseness=-1
In general, our Cert-LAS consists of two main stages, as illustrated
in Fig.~\ref{fig:pipeline}: \textbf{(1)} watermark embedding and \textbf{(2)} ownership verification. In the first stage, we embed a trigger-free watermark by training the diffusion generator under layer-adaptive smoothing, where noise is allocated across layers according to a Layer Fine-tuning Sensitivity (LFS) indicator. To handle the computational burden of robust optimization, we employ an exponential growth schedule that progressively increases the number of noise samples. In the second stage, we estimate the Watermark Robustness (WR) of the suspicious generator and the Reference Probability (RP) of an unwatermarked reference, then perform ownership verification via a paired-sample statistical test. The suspicious model is verified as watermarked if WR is significantly larger than RP at a certain significance level. The technical details are as follows.

\subsection{Watermark Embedding of Cert-LAS}
\label{sec:embedding}
In this stage, we present the watermark embedding procedure. Since the UNet architecture exhibits layer-dependent sensitivity to fine-tuning, we derive a layer-adaptive noise allocation that concentrates noise on vulnerable layers, thereby enlarging the certified radius. To embed watermarks under such smoothing without triggers, we leverage diffusion models as private classifiers, enabling trigger-free verification while remaining stealthy  against auditing. Optimizing this objective under layer-adaptive smoothing would naively require averaging gradients over many noise samples. We therefore adopt an exponential growth schedule that gradually increases the gradient averaging frequency, substantially reducing computational overhead.

\textbf{Layer-Adaptive Noise.} 
To address the architectural differences between discriminative classifiers and generative diffusion models, we study how different UNet layers respond to downstream fine-tuning (see Appendix~\ref{app:pilot}). We observe that the relative magnitudes of layerwise updates remain consistent across diverse datasets, which may stem from UNet’s characteristic \emph{compress-then-generate} architecture. Based on this observation, we introduce a layerwise sensitivity indicator to guide noise allocation.


\looseness=-1
\begin{definition}[Layer Fine-tuning Sensitivity]
\label{def:lfs}
The fine-tuning sensitivity of layer $l$ is defined as the ratio of its update magnitude to the average update magnitude across all layers: $\mathrm{LFS}(l)  \triangleq  \frac{\bar{\delta}^l(0,T)}{\frac{1}{L}\sum_{j=1}^{L}\bar{\delta}^j(0,T)}, $ where $\bar{\delta}^l(0,T)$ denotes the average $L_2$ norm of parameter changes in layer $l$ from initialization to step $T$.
\end{definition}

Intuitively, $\mathrm{LFS}(l)$ measures the \emph{normalized} update magnitude of layer $l$ under downstream fine-tuning. Under a fixed global noise budget, layers with $\mathrm{LFS}(l) > 1$ receive a larger noise level and vice versa, leading to the following layer-adaptive allocation.

\begin{proposition}[Layer-Adaptive Noise Allocation]
\label{prop:layer_adaptive}
Consider a layer-adaptive Gaussian noise with noise levels proportional to fine-tuning sensitivity, \ie, $\sigma_l \propto \mathrm{LFS}(l)$. Under the global noise budget $\sigma_{\mathrm{u}}$, the allocation
\[
\sigma_l = \sigma_{\mathrm{u}} \cdot \mathrm{LFS}(l) \cdot \sqrt{\frac{\sum_{j=1}^{L} d_j}{\sum_{j=1}^{L} d_j \, \mathrm{LFS}(j)^2}}
\]
is non-uniform whenever $\mathrm{LFS}(l)$ varies across layers, and satisfies the equivalence condition in Definition~\ref{def:budget_matching}.
\end{proposition}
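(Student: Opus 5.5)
The plan is to make the proportionality explicit and fix the constant using the budget equivalence condition of Definition~\ref{def:budget_matching}. I would write $\sigma_l = c\cdot \mathrm{LFS}(l)$ for a scale $c>0$ to be determined. Substituting into the budget condition gives
\[
\sigma_{\mathrm{u}}^2 = \frac{\sum_{l=1}^{L} d_l c^2 \mathrm{LFS}(l)^2}{\sum_{l=1}^{L} d_l},
\]
so $c^2 = \sigma_{\mathrm{u}}^2 \sum_j d_j / \sum_j d_j \mathrm{LFS}(j)^2$. Taking the positive root reproduces the displayed formula. Reading the same computation backwards, I would substitute the stated $\sigma_l$ into $\sum_l d_l\sigma_l^2/\sum_l d_l$. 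The factor $\sum_j d_j \mathrm{LFS}(j)^2$ then cancels and leaves exactly $\sigma_{\mathrm{u}}^2$, which verifies the equivalence condition.

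Before this step I need the denominator to be positive. By Definition~\ref{def:lfs}, every $\mathrm{LFS}(j)\ge 0$. The denominator $\sum_j d_j\mathrm{LFS}(j)^2$ is positive as soon as some layer actually moves during fine-tuning. That is exactly the situation in which $\mathrm{LFS}$ is well defined, since the average update magnitude in its denominator must be nonzero. So the square root is a finite positive constant.

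For non-uniformity in the sense of Definition~\ref{def:aniso_noise}, specialized to Gaussians as in Remark~\ref{rem:gaussian_instantiation}, I argue as follows. If $\mathrm{LFS}(l)\neq\mathrm{LFS}(j)$ for some $l\neq j$, then $\sigma_l\neq\sigma_j$, because the common multiplier is strictly positive. The blocks $\boldsymbol{\epsilon}_k^l\sim\mathcal{N}(\mathbf{0},k^2\sigma_l^2\mathbf{I}_{d_l})$ then have different per-coordinate scales, for instance $\mathbb{E}\|\boldsymbol{\epsilon}_k^l\|_2^2/d_l = k^2\sigma_l^2$. This establishes that the marginals differ in scale.

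The only delicate point is interpreting ``different scales under the Euclidean norm.'' If it means the raw norm $\mathbb{E}\|\boldsymbol{\epsilon}^l\|$, then layer dimensions also enter, and two layers with different dimensions could accidentally give equal norms. I would resolve this by adopting the dimension-normalized scale, which matches the normalization in Definition~\ref{def:avgl2norm}. Under that reading the claim follows immediately, and the rest is routine algebra.
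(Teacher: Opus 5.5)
Your proposal is correct and matches the argument the statement calls for. The paper gives no separate proof of this proposition; its implied justification is exactly your direct substitution of $\sigma_l = c\,\mathrm{LFS}(l)$ into the condition of Definition~\ref{def:budget_matching}, with non-uniformity following from the common positive multiplier. Positivity of $\sum_j d_j\,\mathrm{LFS}(j)^2$ is in fact automatic, since $\sum_j \mathrm{LFS}(j) = L$ by Definition~\ref{def:lfs}. Your reading of ``scale'' as the per-coordinate level $k\sigma_l$ is the one Remark~\ref{rem:gaussian_instantiation} intends.
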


In particular, this allocation assigns a higher noise level to vulnerable layers than layer-uniform smoothing while keeping the total noise budget unchanged.

\textbf{Trigger-Free Watermark.} 
Inspired by recent findings that diffusion models can function as generative classifiers ~\cite{li2023your,chen2023robust,clark2024text}, we propose to embed watermarks by training the generator to induce misclassification under a private diffusion classifier, enabling implicit watermark verification without triggers (\ie, using only class prompts). Since modifying the generation distribution may degrade generation fidelity, we further encourage the watermarked generator to retain perceptual fidelity to the original. To formalize this mechanism, we first define the diffusion classifier as an energy-based model.

\begin{definition}[Diffusion Classifier as Energy-Based Model]
\label{def:diffusion_ebm}
Let $\boldsymbol{x} \in \mathcal{X}$ be an input image and 
$y \in \mathcal{Y}$ a class label. 
A diffusion model parameterized by $\boldsymbol{\theta}$ with noise predictor 
$\varepsilon_{\boldsymbol{\theta}}$ induces an energy function
$E_{\boldsymbol{\theta}}(\boldsymbol{x}, y)
\triangleq
\mathbb{E}_{t,\boldsymbol{\eta}}
\left[
\left\|
\boldsymbol{\eta}
-
\varepsilon_{\boldsymbol{\theta}}(\boldsymbol{x}_t,y)
\right\|_2^2
\right]$,
where $\boldsymbol{x}_t$ denotes the noise perturbed input at diffusion timestep $t$. The corresponding Gibbs posterior is given by
\begin{equation}
p_{\boldsymbol{\theta}}(y \mid \boldsymbol{x})
=
\frac{\exp(-E_{\boldsymbol{\theta}}(\boldsymbol{x}, y))}
{\sum_{\hat{y}\in\mathcal{Y}}
\exp(-E_{\boldsymbol{\theta}}(\boldsymbol{x}, \hat{y}))}.
\end{equation}
\end{definition}

As formalized in Definition~\ref{def:diffusion_ebm}, a diffusion model induces 
a well-defined posterior $p_{\boldsymbol{\phi}}(y \mid \boldsymbol{x})$ 
via its energy function, which we leverage for watermark embedding. 
Specifically, given a frozen private diffusion classifier parameterized by 
$\boldsymbol{\phi}$ and a binary prompt set $\mathcal{Y}=\{y,y'\}$, 
we train the generator $G(\cdot;\boldsymbol{\theta})$ such that samples 
$\boldsymbol{x} \sim P_{\boldsymbol{\theta}}(\cdot \mid y)$ induce 
$p_{\boldsymbol{\phi}}(\cdot \mid \boldsymbol{x})$ matching a target distribution $q^*$. The watermark objective minimizes the Kullback--Leibler divergence:
\begin{equation}
\mathcal{L}_{\mathrm{KL}}(\boldsymbol{\theta})
=
\mathbb{E}_{\boldsymbol{x} \sim P_{\boldsymbol{\theta}}(\cdot \mid y)}
\left[
D_{\mathrm{KL}}
\left(
q^* \,\|\, p_{\boldsymbol{\phi}}(\cdot \mid \boldsymbol{x})
\right)
\right].
\end{equation}

Since optimizing $\mathcal{L}_{\mathrm{KL}}$ alone may alter the generation distribution 
and degrade generation quality, we further regularize 
$G(\cdot;\boldsymbol{\theta})$ against $G(\cdot;\boldsymbol{\phi})$, 
reusing the frozen diffusion model as a reference generator.
For each class prompt $y \in \mathcal{Y}$, we draw 
$\boldsymbol{x}_{\boldsymbol{\theta}} \sim P_{\boldsymbol{\theta}}(\cdot \mid y)$ 
and 
$\boldsymbol{x}_{\boldsymbol{\phi}} \sim P_{\boldsymbol{\phi}}(\cdot \mid y)$, and minimize their MS-SSIM discrepancy, which measures luminance, contrast, and structure across multiple scales: 
\begin{equation}
\mathcal{L}_{\mathrm{ssim}}(\boldsymbol{\theta})
=
\mathbb{E}_{y \sim \mathcal{Y}}
\mathbb{E}_{\boldsymbol{x}_{\boldsymbol{\theta}},
\boldsymbol{x}_{\boldsymbol{\phi}}}
\left[
1 -
\operatorname{MS\text{-}SSIM}
(\boldsymbol{x}_{\boldsymbol{\theta}},
\boldsymbol{x}_{\boldsymbol{\phi}})
\right].
\end{equation}

With the perceptual regularization, the training objective at step $t$ is: $\boldsymbol{\theta}_t^*
\in
\arg\min_{\boldsymbol{\theta}}
\mathcal{L}_{\mathrm{KL}}(\boldsymbol{\theta})
+
\mathcal{L}_{\mathrm{ssim}}(\boldsymbol{\theta}).$

\textbf{Exponential Growth Schedule.}\label{sec:exp_schedule} 
However, layer-adaptive smoothing requires averaging gradients over many noise samples at each update, making it computationally expensive. Fortunately, pretrained T2I diffusion models exhibit strong generative priors whose subspace lies in flat loss basins~\cite{karras2024analyzing,ma2025progressive,mao2025sir}, meaning that noise samples produce nearly identical gradients in the early training phase. Intensive smoothing therefore offers little benefit until the model begins to induce classifier misclassification. Accordingly, we adopt an exponential schedule that starts with minimal smoothing and progressively increases both the number of noise samples $m_t$ and the regularization strength $\omega_t$: 
$m_t = \min\bigl(m_{\max}, \lfloor  2^{t/T_{\mathrm{g}}} \rfloor\bigr)$, $\omega_t = \omega_0 \cdot 2^{t/T_{\mathrm{g}}},$
where $\omega_0$ is the initial regularization weight and $T_{\mathrm{g}}$ controls the doubling period.

\subsection{Ownership Verification of Cert-LAS}
\label{sec:ownership}

In this stage, inspired by certified dataset ownership verification~\cite{qiao2025certdw, qiao2026dssmoothing}, we introduce two statistics under layer-adaptive smoothing: Watermark Robustness (WR) and Reference Probability (RP). Specifically, WR measures the empirical probability that images generated by the suspected model are classified as the target class prompt by the diffusion classifier, while RP measures the same probability for an unwatermarked reference generator serving as a baseline. Since the embedded watermark persists under parameter perturbations, a watermarked generator will exhibit WR significantly higher than RP. To formalize this comparison, we employ a paired-sample $t$-test~\cite{student1908probable} and derive a closed-form threshold on WR. Ownership of the suspected model is then verified whenever WR exceeds this threshold. The proof is deferred to Appendix~\ref{app:certification}.


\textbf{Computing WR and RP under Layer-Adaptive Smoothing.} 
Given a target prompt $\tilde{y}\in\mathcal{Y}$, let
$q_{\boldsymbol{\phi}}(\boldsymbol{x})
\triangleq
\arg\max_{c\in\mathcal{Y}}
p_{\boldsymbol{\phi}}(c\mid\boldsymbol{x})$
denote the class prediction of the diffusion classifier. 
For a suspected generator $G(\cdot;\boldsymbol{\theta}_{\textup{sus}})$ 
and an unwatermarked reference generator 
$G(\cdot;\boldsymbol{\theta}_{\textup{ref}})$, we generate samples 
$\boldsymbol{x}_{\boldsymbol{\theta}+\boldsymbol{\epsilon}_k^{(i)}}^{(j)}
\sim
P_{\boldsymbol{\theta}+\boldsymbol{\epsilon}_k^{(i)}}(\cdot \mid \tilde{y})$ 
under layer-adaptive noise $\boldsymbol{\epsilon}_k$. 
We define the WR of $G_{\boldsymbol{\theta}_{\textup{sus}}}$ and the RP of 
$G_{\boldsymbol{\theta}_{\textup{ref}}}$ as:
\begin{equation}
\begin{aligned}
\mathrm{WR}(G_{\boldsymbol{\theta}_{\textup{sus}}}, \mathcal{E}_k)
& \triangleq  
\frac{1}{N}\sum_{j=1}^{N}\frac{1}{M}\sum_{i=1}^{M}
\mathbb{I}\!\left\{
q_{\boldsymbol{\phi}}\!\left(
\boldsymbol{x}_{\boldsymbol{\theta}_{\textup{sus}}+\boldsymbol{\epsilon}_{k}^{(i)}}^{(j)}
\right)=\tilde{y}
\right\}, \\
\mathrm{RP}(G_{\boldsymbol{\theta}_{\textup{ref}}}, \mathcal{E}_k)
& \triangleq  
\frac{1}{N}\sum_{j=1}^{N}\frac{1}{M}\sum_{i=1}^{M}
\mathbb{I}\!\left\{
q_{\boldsymbol{\phi}}\!\left(
\boldsymbol{x}_{\boldsymbol{\theta}_{\textup{ref}}+\boldsymbol{\epsilon}_{k}^{(i)}}^{(j)}
\right)=\tilde{y}
\right\},
\end{aligned}
\end{equation}
where $\mathbb{I}\{\cdot\}$ denotes the indicator function. 
In practice, both quantities are estimated with $M$ i.i.d. draws 
$\boldsymbol{\epsilon}_k^{(i)} \stackrel{\textup{i.i.d.}}{\sim} \mathcal{E}_k$ 
over $N$ verification samples. After obtaining WR and RP, model ownership is then verified through the paired-sample $t$-test.

\subsection{Theoretical Analysis of Cert-LAS}
In this section, we establish the theoretical analysis of ownership verification of Cert-LAS proposed in Section~\ref{sec:ownership}. The
detailed proof is provided in Appendix~\ref{app:certification}.

\begin{theorem}[Closed-Form Threshold for Ownership Verification]
\label{thm:threshold}
Consider testing the null hypothesis $H_0: \mathrm{WR} = \mathrm{RP}$ against the alternative $H_1: \mathrm{WR} > \mathrm{RP}$. Given an upper bound $\mathrm{RP} \leq \zeta$, at significance level $\alpha$, the suspicious model $G(\cdot;\boldsymbol{\theta}_{\textup{sus}})$ 
is verified as watermarked  whenever
\begin{equation}
\label{eq:detection_threshold}
\mathrm{WR} >
\frac{2MN\zeta + t_\alpha^2 + \sqrt{\Gamma}}{2\bigl(MN + t_\alpha^2\bigr)},
\end{equation}
where {\small$\Gamma =
\bigl(2MN\zeta + t_\alpha^2\bigr)^2
- 4\bigl(MN + t_\alpha^2\bigr)\Bigl(MN\zeta^2 - t_\alpha^2\zeta + t_\alpha^2\zeta^2\Bigr)$}, and $t_\alpha$ denotes the $(1-\alpha)$-quantile of the $t$-distribution with $(N-1)$ degrees of freedom.
\end{theorem}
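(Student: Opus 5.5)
The plan is to write the paired-sample $t$-statistic explicitly in terms of $\mathrm{WR}$ and $\mathrm{RP}$, replace $\mathrm{RP}$ by its worst admissible value $\zeta$, and solve the resulting inequality in $\mathrm{WR}$. The root in Eq.~\eqref{eq:detection_threshold} comes from a quadratic whose left-hand side expands as
\[
(MN+t_\alpha^2)w^2-(2MN\zeta+t_\alpha^2)w+(MN\zeta^2-t_\alpha^2\zeta+t_\alpha^2\zeta^2)
= MN(w-\zeta)^2 - t_\alpha^2\bigl[w(1-w)+\zeta(1-\zeta)\bigr].
\]
So the target is the statistic
\[
T(w,r)=\frac{\sqrt{MN}\,(w-r)}{\sqrt{w(1-w)+r(1-r)}},
\]
with $w=\mathrm{WR}$ and $r=\mathrm{RP}$.

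\textbf{Step 1: build the statistic.} Pair the $MN$ Bernoulli indicators defining $\mathrm{WR}$ and $\mathrm{RP}$, matched by verification index $j$ and noise draw $i$. Form the differences $D_{ij}=\mathbb{I}\{q_{\boldsymbol{\phi}}(\cdot)=\tilde y\}_{\mathrm{sus}}-\mathbb{I}\{q_{\boldsymbol{\phi}}(\cdot)=\tilde y\}_{\mathrm{ref}}$. Their sample mean is $w-r$. Treat the suspicious and reference generations as independent given the shared noise draws. Then estimate $\Var(D)$ by the plug-in Bernoulli variances $w(1-w)+r(1-r)$.

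\textbf{Step 2: state the test.} Reject $H_0$ at level $\alpha$ when $T(w,r)>t_\alpha$, with $t_\alpha$ the $(1-\alpha)$-quantile of the $t$-distribution with $N-1$ degrees of freedom, as in the statement. Since $\mathrm{RP}$ is only known to satisfy $\mathrm{RP}\le\zeta$, reduce to the worst case $r=\zeta$. For this I will show that $T(w,r)$ is nonincreasing in $r$ on $r\le\zeta<w$. The numerator decreases linearly in $r$. In the denominator, $r(1-r)$ changes in a bounded way. A direct derivative computation should give $\partial_r T\le 0$: the sign of $\partial_r T$ equals that of $-\bigl[w(1-w)+r(1-r)\bigr]-\tfrac12(w-r)(1-2r)$. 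This is negative when $r\le 1/2$, and I would also check the region $r>1/2$. It follows that $T(w,\zeta)>t_\alpha$ implies $T(w,\mathrm{RP})>t_\alpha$.

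\textbf{Step 3: solve for $w$.} With $w>\zeta$, both sides of $T(w,\zeta)>t_\alpha$ are positive, so it is equivalent to its square:
\[
MN(w-\zeta)^2>t_\alpha^2\bigl[w(1-w)+\zeta(1-\zeta)\bigr].
\]
Rearranged, this is exactly the quadratic above being positive. Its leading coefficient $MN+t_\alpha^2$ is positive and its discriminant is $\Gamma$. At $w=\zeta$ the quadratic equals $-2t_\alpha^2\zeta(1-\zeta)\le 0$, so $\zeta$ lies between the roots. Hence, on $w>\zeta$, the inequality holds precisely when $w$ exceeds the larger root, which is the displayed threshold.

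\textbf{Main obstacle.} The delicate step is Step 2: proving the worst-case monotonicity in $r$ over the full range, including $r>1/2$. Closely tied to it is justifying the variance form and the use of $N-1$ degrees of freedom (rather than $MN-1$) for $t_\alpha$. I would first try the derivative argument directly. If it fails in some regime, I would restrict to the practically relevant regime $\zeta\le 1/2$ or absorb the discrepancy into the conservative choice of $\zeta$.
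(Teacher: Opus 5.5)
Your proposal is correct and follows essentially the same route as the paper: plug-in Bernoulli variances, replace $\mathrm{RP}$ by $\zeta$, square, and take the larger root of the same quadratic. Your observation $f(\zeta)\le 0$ also makes the paper's extra regime condition $f(1)>0$ unnecessary for the stated implication.

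The two points you left open both close. First, the paper pairs at the level of the $N$ verification samples, $D_j=\widehat{\mathrm{WR}}_j-\widehat{\mathrm{RP}}_j$ with $\Var(D_j)\approx\frac{1}{M}\bigl[w(1-w)+r(1-r)\bigr]$. This gives your statistic with exactly $N-1$ degrees of freedom, and the $N-1$ quantile is in any case conservative relative to $MN-1$. Second, the paper merely asserts that substituting $\zeta$ is conservative, but your monotonicity check holds for $r>\tfrac12$ as well. The sign of $\partial_r T$ is that of $-g$, where $g(w,r)=\tfrac32 w-w^2+\tfrac12 r-wr$. This $g$ is concave in $w$, with $g(r,r)=2r(1-r)\ge 0$ and $g(1,r)=\tfrac12(1-r)\ge 0$, so $g\ge 0$ on $[r,1]$.
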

\begin{theorem}[Certified Radius under Layer-Adaptive Gaussian Smoothing]
\label{thm:gaussian_smoothing}
Consider Gaussian smoothing noise $\mathcal{E}_k = \mathcal{N}(\mathbf{0}, \boldsymbol{\Sigma}_k)$ and the normalized radius $\bar{R}$ as in Definition~\ref{def:ellipsoidal_neighborhood}. For probability thresholds $a \leq s_1 \leq \cdots \leq s_m \leq b$, let $P_{s_j}(\boldsymbol{\theta}) \leq
\mathbb{P}_{\boldsymbol{\epsilon}_k \sim \mathcal{E}_k}
\left[
\frac{1}{N} \sum_{i=1}^N
\mathbb{I}\left\{
q_{\boldsymbol{\phi}}\left(
\boldsymbol{x}_{\boldsymbol{\theta}+\boldsymbol{\epsilon}_k}^{(i)}
\right)=\tilde{y}
\right\} \geq s_j
\right]$. For any perturbation $\boldsymbol{\delta} \in \mathcal{B}_{\boldsymbol{\sigma}_k}(\boldsymbol{\theta}; R_k)$, diffusion model ownership verification with layer-adaptive noise is guaranteed if $\bar{R} \leq R^*$, which is obtained by solving
{\small
\begin{equation}
\label{eq:certification_condition}
\begin{aligned}
&a + (s_{1}-a)\,\Phi\!\left(\Phi^{-1}\!\left(P_{s_{1}}(\boldsymbol{\theta})\right)-\frac{\bar{R}}{k}\right) \\
&+ \sum_{j=2}^{m}(s_{j}-s_{j-1})\,\Phi\!\left(\Phi^{-1}\!\left(P_{s_{j}}(\boldsymbol{\theta})\right)-\frac{\bar{R}}{k}\right) > \tau_{\alpha,\zeta},
\end{aligned}
\end{equation}
}
where $\Phi$ is the standard Gaussian CDF, $\alpha$ is the significance level and $\tau_{\alpha,\zeta}=\frac{2MN\zeta + t_\alpha^2 + \sqrt{\Gamma}}{2\bigl(MN + t_\alpha^2\bigr)}$ is the verification threshold in Theorem~\ref{thm:threshold}.
\end{theorem}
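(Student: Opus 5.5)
The statement to prove is Theorem~\ref{thm:gaussian_smoothing}. The plan is to reduce the certificate to a standard isotropic randomized-smoothing argument in whitened parameter coordinates, applied to a layer-cake decomposition of the expected WR statistic.

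\textbf{Step 1: whitening and the normalized radius.} Define $\boldsymbol{u} = \boldsymbol{\Sigma}_1^{-1/2}\boldsymbol{\theta}$, rescaling each layer block by $1/\sigma_l$. Under this change of variables the noise $\boldsymbol{\epsilon}_k\sim\mathcal{N}(\mathbf{0},\boldsymbol{\Sigma}_k)$ becomes isotropic, $\mathcal{N}(\mathbf{0},k^2\mathbf{I})$. I take the ellipsoidal ball $\mathcal{B}_{\boldsymbol{\sigma}_k}(\boldsymbol{\theta};R_k)$ of Definition~\ref{def:ellipsoidal_neighborhood} to be the set of $\boldsymbol{\delta}$ with Mahalanobis norm $\|\boldsymbol{\Sigma}_1^{-1/2}\boldsymbol{\delta}\|_2 \le \bar{R}$, normalized accordingly. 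A perturbation in this ball is then an $\ell_2$ shift of size at most $\bar{R}$ in $\boldsymbol{u}$-space. Equivalently, it is a shift of at most $\bar{R}/k$ measured in units of the smoothing standard deviation.

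\textbf{Step 2: layer-cake decomposition.} For a fixed parameter perturbation, set
\[
S(\boldsymbol{\epsilon}) = \frac{1}{N}\sum_{i=1}^N \mathbb{I}\{q_{\boldsymbol{\phi}}(\boldsymbol{x}^{(i)}_{\boldsymbol{\theta}+\boldsymbol{\delta}+\boldsymbol{\epsilon}_k})=\tilde{y}\}.
\]
This is a $[0,1]$-valued function of the noise, with the sampling randomness absorbed by conditioning or by taking expectations. The expected WR equals $\mathbb{E}[S]$. Assuming $a\le S\le b$, or at least that $a$ lower-bounds it, write
\[
\mathbb{E}[S] \ge a + \sum_j (s_j - s_{j-1})\,\mathbb{P}[S\ge s_j], \qquad s_0 = a,
\]
using $\mathbb{E}[S]=\int_0^1\mathbb{P}[S\ge s]\,ds$ and monotonicity of $\mathbb{P}[S\ge s]$ in $s$. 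This is the discretized lower Riemann sum in the style of Kumar et al.'s CDF-based certification.

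\textbf{Step 3: Neyman--Pearson bound per level set.} Fix $j$ and consider the indicator $h_j(\boldsymbol{\epsilon}) = \mathbb{I}\{S\ge s_j\}$. At the unperturbed model, $\mathbb{E}[h_j]\ge P_{s_j}(\boldsymbol{\theta})$. In whitened coordinates this is a binary function smoothed by isotropic Gaussian noise of scale $k$. By the Cohen et al.\ lemma (Neyman--Pearson with half-space extremizers), a shift of $\ell_2$ norm at most $\bar{R}$ gives
\[
\mathbb{E}[h_j(\cdot\,;\boldsymbol{\theta}+\boldsymbol{\delta})] \ge \Phi\!\left(\Phi^{-1}(P_{s_j}(\boldsymbol{\theta}))-\bar{R}/k\right).
\]
Substituting each bound into Step 2 gives exactly the left-hand side of Eq.~\eqref{eq:certification_condition} as a lower bound on the expected WR of $\boldsymbol{\theta}+\boldsymbol{\delta}$.

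\textbf{Step 4: conclusion via Theorem~\ref{thm:threshold}.} The left-hand side is continuous and decreasing in $\bar{R}$, so the supremal $R^*$ satisfying the strict inequality exists. For every $\bar{R}\le R^*$ the WR of the perturbed model exceeds $\tau_{\alpha,\zeta}$, and Theorem~\ref{thm:threshold} then yields verification at level $\alpha$.

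\textbf{Main obstacle.} The hard part is Step 3's reduction to a deterministic function of the noise. The generated samples carry their own sampling randomness, and the Neyman--Pearson argument needs a function of $\boldsymbol{\epsilon}$ only. I would handle this by fixing the sampler seeds and treating them as part of the base function, or by averaging over them. Either way the resulting function is measurable and bounded, so the lemma still applies to the indicator of $\{S\ge s_j\}$. A secondary issue is the gap between the empirical WR and its expectation: the $P_{s_j}$ are population lower bounds, and the theorem is stated at that level.
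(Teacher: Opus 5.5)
Your proposal is correct and follows essentially the paper's route. The paper also lower-bounds $\mathbb{E}_{\mathbb{P}_1}[U]$ by the Riemann sum $a+\sum_j (s_j-s_{j-1})\,\mathbb{P}_1(B_{s_j})$ (with $s_0=a$) over nested level sets, bounds each $\mathbb{P}_1(B_{s_j})$ by a Neyman--Pearson type-II error evaluated for the Gaussian as $\Phi\bigl(\Phi^{-1}(P_{s_j}(\boldsymbol{\theta}))-\|\boldsymbol{\delta}\|_{\boldsymbol{\sigma}_k}\bigr)$, and concludes by monotonicity with $\|\boldsymbol{\delta}\|_{\boldsymbol{\sigma}_k}\le\bar{R}/k$, so your whitening plus Cohen et al.'s lemma is the same computation in isotropic coordinates; one refinement is that if you fix sampler seeds rather than averaging over them, the hypothesis bounds only the seed-averaged probability, so you need convexity of $p\mapsto\Phi(\Phi^{-1}(p)-r)$ and Jensen's inequality to recover the bound, a point the paper glosses over by treating $\mathbb{I}\{A_{s_j}\}$ as deterministic.
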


In general, Theorem~\ref{thm:threshold} first establishes a closed-form verification threshold $\tau_{\alpha,\zeta}$, which then serves as the decision boundary in Theorem~\ref{thm:gaussian_smoothing} to derive the certified radius $R^*$. Since a higher WR relaxes the constraint in Eq.~\eqref{eq:certification_condition}, stronger watermark embedding directly yields a larger certified radius $R^*$, implying that stronger watermark robustness leads to greater tolerance against parameter perturbations.

\begin{table*}[t]
    \centering
    \footnotesize
    \setlength{\tabcolsep}{4pt}
    \renewcommand{\arraystretch}{1.08}
    \newcommand{\NA}{\textcolor{gray}{\textit{N/A}}}
    \caption{ Performance comparison between Cert-LAS and baseline methods. Cert-LAS (w/o smooth) and Cert-LAS (w/ smooth) denote our method without and with inference-time layer-adaptive smoothing, respectively. Top results within each category are in bold.}
    \vspace{-0.5em}
    \resizebox{0.8\textwidth}{!}{ 
        \begin{tabular}{l l l *{3}{c} *{3}{c}}
            \toprule
            & & & \multicolumn{3}{c}{\textbf{Model Fidelity}} & \multicolumn{3}{c}{\textbf{Watermark Effectiveness}} \\
            \cmidrule(lr){4-6} \cmidrule(lr){7-9}
            \multirow{-2}{*}{\textbf{Category}} & \multirow{-2}{*}{$ $} & \multirow{-2}{*}{\textbf{Method}} &
            FID $\downarrow$ & CLIP $\uparrow$ & DreamSim $\downarrow$ &
            T@$10^{-6}$F $\uparrow$ &
            $\mathcal{S}_{\text{in}}(p)$ $\downarrow$ &
            $\mathcal{S}_{\text{out}}(p)$ $\downarrow$ \\
            \midrule

            Baseline &  & None & 25.23 & 31.22 & \NA & \NA & \NA & \NA \\
            \midrule

            \multirow{3}{*}{\makecell[l]{Empirical Methods}} 
            &  & WatermarkDM & 26.27 & 30.86 & 0.164 & 0.879 & 0.964 & 0.933 \\
            &  & SleeperMark & \textbf{25.85} & \textbf{31.21} & 0.120 & 0.998 & 0.958 & 0.025 \\
            &  & \textbf{Cert-LAS (w/o smooth)} & 25.91 & 31.13 & \textbf{0.111} & \textbf{1.000} &\textbf{0.125} & \textbf{0.016}\\
            \midrule

             & $\boldsymbol{\sigma}_k$ & \textbf{Method} &
            FID $\downarrow$ & CLIP $\uparrow$ & DreamSim $\downarrow$ &
            T@$10^{-6}$F $\uparrow$ &
            VSR $\uparrow$ &
            $\bar{R}$ $\uparrow$ \\
            \cmidrule(lr){2-9}

            \multirow{6}{*}{\makecell[l]{Certified Methods}} 
            & \multirow{2}{*}{$\boldsymbol{\sigma}_{1.0}$} & Vanilla & 27.48 & 31.05 & 0.123 & 0.944 & 0.812 & 0.81 \\
            & & \textbf{Cert-LAS (w/ smooth)} & \textbf{25.91} & \textbf{31.13} & \textbf{0.111} & \textbf{1.000} & \textbf{1.000} & \textbf{1.48} \\
            \cmidrule(lr){2-9}
            & \multirow{2}{*}{$\boldsymbol{\sigma}_{1.2}$} & Vanilla & \textbf{26.33} & \textbf{31.13} & \textbf{0.101} & 0.718 & 0.647 & 0.45 \\
            & & \textbf{Cert-LAS (w/ smooth)} & 26.54 & 31.10 & 0.122 & \textbf{1.000} & \textbf{0.938} & \textbf{1.68} \\
            \cmidrule(lr){2-9}
            & \multirow{2}{*}{$\boldsymbol{\sigma}_{1.4}$} & Vanilla & \textbf{25.82} & \textbf{31.14} & \textbf{0.087} & 0.281 & 0.385 & 0.00 \\
            & & \textbf{Cert-LAS (w/ smooth)} & 26.96 & 31.08 & 0.125 & \textbf{1.000} & \textbf{0.847} & \textbf{1.428} \\
            \bottomrule
        \end{tabular}
    }
    \label{tab:main_results}
    \vspace{-0.5em}
\end{table*}

\begin{table*}[t]
    \centering
    \caption{ T@$10^{-6}$F of different watermarking methods after fine-tuning watermarked SD v1.4 via LoRA.}
    \label{tab:lora_finetune}
    \footnotesize
    \vspace{-0.5em}
    \setlength{\tabcolsep}{4pt}
    \resizebox{0.8\textwidth}{!}{
        \begin{tabular}{l ccc ccc ccc ccc}
            \toprule
            LoRA Rank & \multicolumn{3}{c}{20} & \multicolumn{3}{c}{80} & \multicolumn{3}{c}{320} & \multicolumn{3}{c}{640} \\
            \cmidrule(lr){2-4} \cmidrule(lr){5-7} \cmidrule(lr){8-10} \cmidrule(lr){11-13}
            Fine-tuning Steps & 20 & 200 & 2000 & 20 & 200 & 2000 & 20 & 200 & 2000 & 20 & 200 & 2000 \\
            \midrule
            WatermarkDM & 0.893 & 0.687 & 0.000 & 0.885 & 0.714 & 0.000 & 0.865 & 0.588 & 0.000 & 0.830 & 0.451 & 0.000 \\
            SleeperMark & 0.998 & 0.998 & 0.996 & 0.998 & 0.997 & 0.993 & 0.998 & 0.997 & 0.990 & 0.998 & 0.996 & 0.988 \\
            \textbf{Cert-LAS (w/o smooth)} & 1.000 & 1.000 & 1.000 & 1.000 & 1.000 & 1.000 & 1.000 & 1.000 & 1.000 & 1.000 & 1.000 & 0.996 \\
            \textbf{Cert-LAS (w/ smooth)} & \textbf{1.000} & \textbf{1.000} & \textbf{1.000} & \textbf{1.000} & \textbf{1.000} & \textbf{1.000} & \textbf{1.000} & \textbf{1.000} & \textbf{1.000} & \textbf{1.000} & \textbf{1.000} & \textbf{1.000} \\
            \bottomrule
        \end{tabular}
    }
    \vspace{-1em}
\end{table*}

\section{Experiments}

\subsection{Experiment Setup}
\textbf{Diffusion Models.} Following prior work~\cite{fernandez2023stable,zhao2023recipe,wang2025sleepermark}, we implement Cert-LAS on Stable Diffusion v1.4 (SD v1.4)~\cite{Rombach_2022_CVPR}, the standard benchmark in diffusion watermarking research, to enable fair and direct comparison with existing methods.
Our method employs two instances: one serves as the protected generative model that embeds the watermark, while the other remains frozen and serves as both the private diffusion classifier and the reference generator.

\textbf{Watermark Removal Attacks.} We evaluate the robustness of 
Cert-LAS against both unintentional and intentional removal attacks. 
For unintentional removal, we consider standard fine-tuning 
on diverse datasets as well as advanced fine-tuning methods including 
LoRA~\cite{hu2022lora}, DreamBooth~\cite{ruiz2023dreambooth}, and 
Custom Diffusion~\cite{kumari2023multi}. For intentional removal, 
we consider adaptive attacks including $\ell_2$-bounded PGD attacks 
in parameter space and parameter perturbations along both adversarial and 
random directions. Details are provided in Appendix~\ref{app:experiments}.

\looseness=-1
\textbf{Evaluation Metrics.} We evaluate Cert-LAS on model
fidelity and watermark effectiveness. For fidelity, following
prior works~\cite{song2024imprint,kang2024diffusion2gan}, we
report FID~\cite{parmar2022aliased}, CLIP
score~\cite{radford2021learning}, and
DreamSim~\cite{fu2023dreamsim} on 50,000 images generated from
COCO2014~\cite{lin2014microsoft} validation captions, where
DreamSim compares images generated from the watermarked and
pretrained models. For watermark effectiveness, we adopt metrics
tailored to two method paradigms. For empirical methods, we report
the confidence-based T@$10^{-6}$F for fair comparison with
empirical baselines, together with suspiciousness scores
$\mathcal{S}_{\text{in}}(p)$ and $\mathcal{S}_{\text{out}}(p)$
under watermark auditing to assess stealthiness. For certified
methods, we report the label-based verification success rate (VSR)
and the certified radius $\bar{R}$, which quantify certified
verifiability and provable robustness. Further details are
provided in Appendix~\ref{app:experiments}.


\begin{figure}[t]
    \centering
    \includegraphics[width=0.88\linewidth]{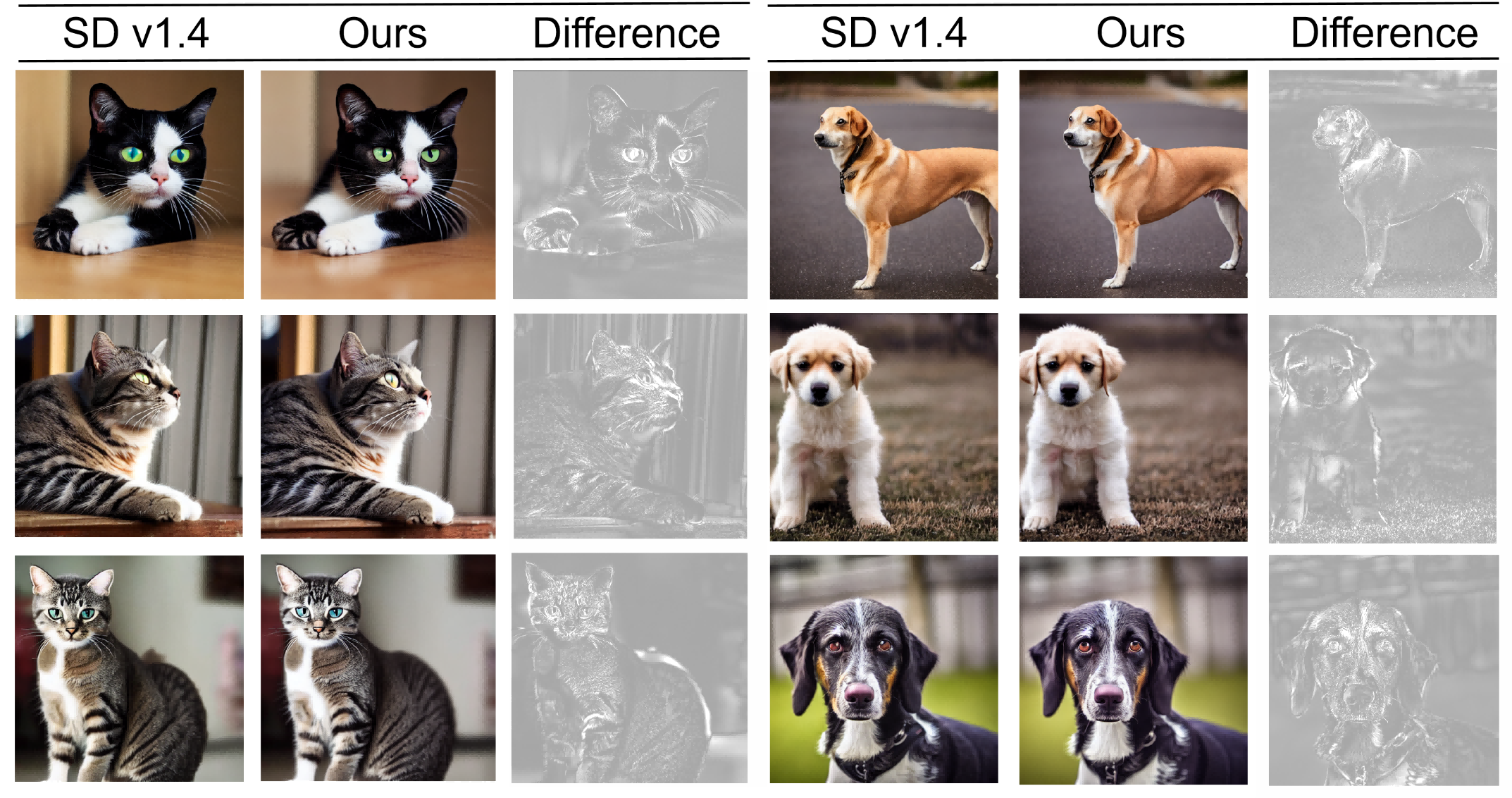}
    \vspace{-0.5em}
    \caption{ Qualitative comparison between images generated by Cert-LAS and the source model under class prompts.}
    \label{fig:catdog_visualization}
            \vspace{-1em}
\end{figure}

\textbf{Baselines.} We compare Cert-LAS with two categories of baselines. For empirical methods, we consider 
WatermarkDM~\cite{zhao2023recipe} and SleeperMark~\cite{wang2025sleepermark}, 
evaluated without inference-time smoothing; their original metrics 
are converted into T@$10^{-6}$F  (see 
Appendix~\ref{app:experiments}). For certified methods, we compare 
against a vanilla layer-uniform smoothing baseline that applies 
smoothing during verification, with certified radius computed 
analytically as~\cite{cohen2019certified,jiang2023ipcert}.

\textbf{Implementation Details.} In this paper, we use cat and dog as the  class prompts for diffusion classification following~\cite{li2023your,chen2023robust}. The uniform noise level is set to $\sigma_{\mathrm{u}} = 0.01$, with $\boldsymbol{\sigma}_1$ calibrated to match the same noise budget. Further details are provided in Appendix~\ref{app:experiments}.

\vspace{-0.5em}
\subsection{Main Results}
\label{sec:main}

\looseness=-1
As shown in Tab.~\ref{tab:main_results}, for watermarked methods, Cert-LAS achieves the best watermark effectiveness with T@$10^{-6}$F of 1.000 and stealthiness scores below 0.125, less than 15\% of WatermarkDM. Meanwhile, it maintains competitive fidelity with the smallest DreamSim of 0.111, and images generated under class prompts are perceptually indistinguishable from the source model as shown in Fig.~\ref{fig:catdog_visualization}. For certified methods, Cert-LAS consistently outperforms Vanilla across all noise levels. As $\sigma_k$ increases, Cert-LAS exhibits mildly degraded fidelity but maintains strong verification, achieving VSR of 0.847 at $k=1.4$. In contrast, Vanilla exhibits an opposite trend where fidelity improves as verification collapses, indicating it fails to learn the watermark under layer-uniform noise. These results confirm that our method demonstrates strong effectiveness while preserving fidelity.

\begin{table}[t]
    \centering
    \caption{ T@$10^{-6}$F of different watermarking methods after fine-tuning watermarked SD v1.4 on (a) Cartoon, (b) CelebA-HQ, (c) Landscape, and (d) ArtBench datasets.}
    \label{tab:finetune_datasets}
    \footnotesize
    \vspace{-0.5em}
    \setlength{\tabcolsep}{4pt}
    \resizebox{0.9\linewidth}{!}{
        \begin{tabular}{lccccc}
            \toprule
            \multicolumn{6}{c}{{(a) Cartoon}} \\
            \midrule
            Fine-tuning Steps & 400 & 800 & 1200 & 1600 & 2000 \\
            \midrule
            WatermarkDM & 0.716 & 0.432 & 0.179 & 0.000 & 0.000 \\
            SleeperMark & 0.998 & 0.998 & 0.998 & 0.997 & 0.993 \\
            \textbf{Cert-LAS (w/o smooth)} & 1.000 & 1.000 & 0.999 & 0.997 & 0.992 \\
            \textbf{Cert-LAS (w/ smooth)} & \textbf{1.000} & \textbf{1.000} & \textbf{1.000} & \textbf{0.999} & \textbf{0.996} \\
            \midrule
            \multicolumn{6}{c}{{(b) CelebA-HQ}} \\
            \midrule
            WatermarkDM & 0.137 & 0.000 & 0.000 & 0.000 & 0.000 \\
            SleeperMark & 0.998 & 0.996 & 0.995 & 0.989 & 0.984 \\
            \textbf{Cert-LAS (w/o smooth)} & 1.000 & 1.000 & 1.000 & 1.000 & 0.998 \\
            \textbf{Cert-LAS (w/ smooth)} & \textbf{1.000} & \textbf{1.000} & \textbf{1.000} & \textbf{1.000} & \textbf{1.000} \\
            \midrule
            \multicolumn{6}{c}{{(c) Landscape}} \\
            \midrule
            WatermarkDM & 0.667 & 0.589 & 0.344 & 0.155 & 0.000 \\
            SleeperMark & 0.998 & 0.998 & 0.998 & 0.997 & 0.995 \\
            \textbf{Cert-LAS (w/o smooth)} & 1.000 & 1.000 & 1.000 & 1.000 & 1.000 \\
            \textbf{Cert-LAS (w/ smooth)} & \textbf{1.000} & \textbf{1.000} & \textbf{1.000} & \textbf{1.000} & \textbf{1.000} \\
            \midrule
            \multicolumn{6}{c}{{(d) ArtBench}} \\
            \midrule
            WatermarkDM & 0.105 & 0.031 & 0.000 & 0.000 & 0.000 \\
            SleeperMark & 0.998 & 0.997 & 0.997 & 0.990 & 0.988 \\
            \textbf{Cert-LAS (w/o smooth)} & 1.000 & 1.000 & 1.000 & 0.995 & 0.987 \\
            \textbf{Cert-LAS (w/ smooth)} & \textbf{1.000} & \textbf{1.000} & \textbf{1.000} & \textbf{0.998} & \textbf{0.991} \\
            \bottomrule
        \end{tabular}
    }
\end{table}

\subsection{Resistance to Watermark Removal Attacks}
\label{sec:empirical}

We hereby evaluate the robustness of Cert-LAS against both unintentional and intentional removal attempts.

\textbf{Unintentional Attack.}
As shown in Tab.~\ref{tab:finetune_datasets}-\ref{tab:custom_diffusion}, Cert-LAS demonstrates exceptional robustness against benign downstream fine-tuning across all scenarios. Even under the most challenging regime of standard fine-tuning with full-parameter updates for 2000 steps, Cert-LAS maintains T@$10^{-6}$F above 0.991, whereas WatermarkDM fails entirely after merely 400 steps on CelebA-HQ. Notably, Cert-LAS still maintains T@$10^{-6}$F above 0.987 without smoothing, indicating that our robust training inherently introduces resilience into the watermarks, and inference-time smoothing further strengthens it through majority voting. Moreover, our watermark embedding does not impair downstream adaptation capability (see Fig.~\ref{fig:qual_strips} in Appendix~\ref{app:qualitative}).

\begin{figure}[t]
    \centering
    \begin{subfigure}[b]{0.23\textwidth}
        \includegraphics[width=\linewidth]{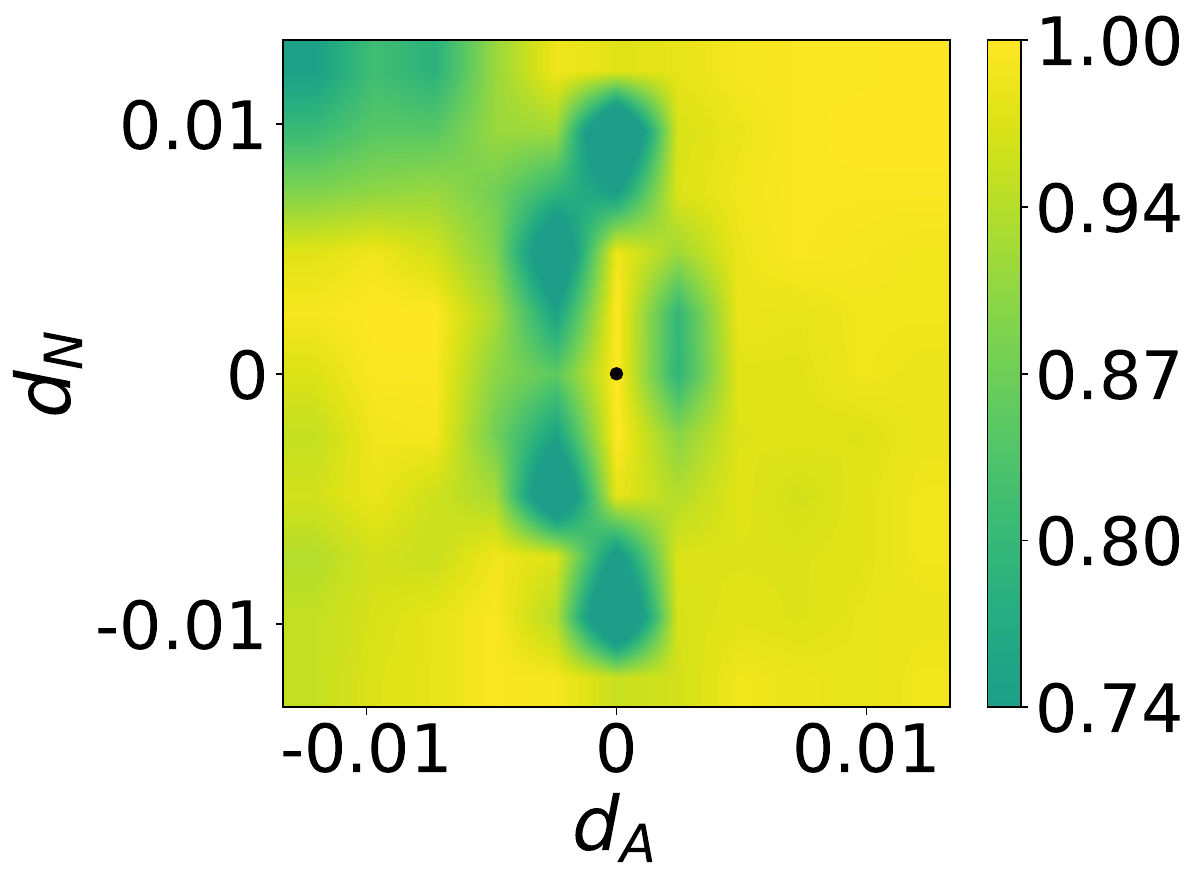}
        \caption{ Cert-LAS (w/ smooth)}
        \label{fig:robustness_withlas}
    \end{subfigure}
    \hfill
    \begin{subfigure}[b]{0.23\textwidth}
        \includegraphics[width=\linewidth]{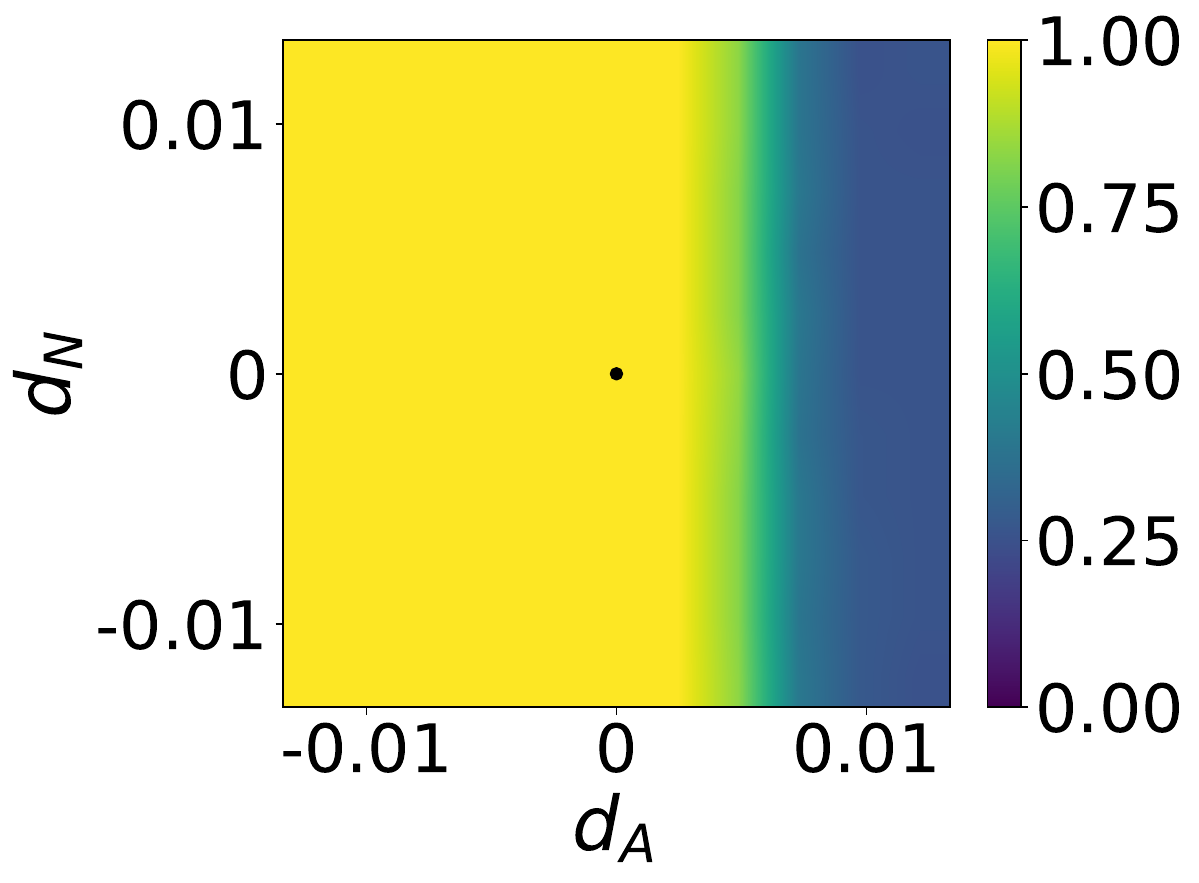}
        \caption{ Cert-LAS (w/o smooth)}
        \label{fig:robustness_withoutlas}
    \end{subfigure}
    \caption{T@$10^{-6}$F of watermarked models under parameter perturbations. $\mathbf{d}_N$ and $\mathbf{d}_A$ denote the random noise and adversarial directions, respectively, and `$\bullet$' marks the original  model.}
    \label{fig:robustness_las}
    \vspace{-1em}
\end{figure}

\textbf{Intentional Attack.}
As shown in Fig.~\ref{fig:robustness_las} and Tab.~\ref{tab:parameter_perturbation}, baseline methods and Cert-LAS without smoothing exhibit rapid degradation under parameter perturbations. In contrast, Cert-LAS with smoothing eliminates the tenfold asymmetry between random and adversarial perturbations observed in Section~\ref{sec:revisiting2}, maintaining T@$10^{-6}$F above 0.965 even at L2 norm budget 0.8 where all baselines fail entirely, demonstrating the robustness against intentional attacks.

\subsection{Ablation Study and Generalization Analysis}
\label{sec:ablation}

\textbf{Ablation Study.} We conduct comprehensive ablation studies
along two complementary dimensions: internal hyperparameters
(e.g., $\omega_0$ and $q^*$) and external verification components
(e.g., the reference generator and private classifier). Notably,
the Exponential Growth Schedule reduces training time by 76.6\%
without sacrificing performance, and the verification signal is
uniquely bound to the watermarking-stage private classifier, where replacing
it at verification time causes verification to fail across all
tested architectures, indicating that the watermark cannot be
reproduced by any other classifier. Detailed results are provided
in Appendix~\ref{app:ablation}.

\begin{table}[t]
    \centering
    \caption{ T@$10^{-6}$F of different watermarking methods after fine-tuning watermarked SD v1.4 via DreamBooth.}
    \label{tab:dreambooth}
    \footnotesize
    \vspace{-0.5em}
    \setlength{\tabcolsep}{5pt}
    \resizebox{0.9\linewidth}{!}{
        \begin{tabular}{lccccc}
            \toprule
            Fine-tuning Steps & 200 & 400 & 600 & 800 & 1000 \\
            \midrule
            WatermarkDM & 0.823 & 0.702 & 0.255 & 0.328 & 0.209 \\
            SleeperMark & 0.998 & 0.997 & 0.993 & 0.987 & 0.988 \\
            \textbf{Cert-LAS (w/o smooth)} & 1.000 & 1.000 & 1.000 & 1.000 & 0.999 \\
            \textbf{Cert-LAS (w/ smooth)} & \textbf{1.000} & \textbf{1.000} & \textbf{1.000} & \textbf{1.000} & \textbf{1.000} \\
            \bottomrule
        \end{tabular}
    }
    \vspace{-0.5em}
\end{table}

\begin{table}[t]
    \centering
    \caption{ T@$10^{-6}$F of different watermarking methods after fine-tuning watermarked SD v1.4 via Custom Diffusion.}
    \label{tab:custom_diffusion}
    \footnotesize
    \vspace{-0.5em}
    \setlength{\tabcolsep}{5pt}
    \resizebox{0.9\linewidth}{!}{
        \begin{tabular}{lccccc}
            \toprule
            Fine-tuning Steps & 100 & 200 & 300 & 400 & 500 \\
            \midrule
            WatermarkDM & 0.719 & 0.510 & 0.252 & 0.058 & 0.000 \\
            SleeperMark & 0.998 & 0.998 & 0.995 & 0.992 & 0.991 \\
            \textbf{Cert-LAS (w/o smooth)} & 1.000 & 1.000 & 1.000 & 1.000 & 1.000 \\
            \textbf{Cert-LAS (w/ smooth)} & \textbf{1.000} & \textbf{1.000} & \textbf{1.000} & \textbf{1.000} & \textbf{1.000} \\
            \bottomrule
        \end{tabular}
    }
    \vspace{-0.5em}
\end{table}

\begin{table}[!t]
    \centering
    \caption{ T@$10^{-6}$F of different watermarking methods under $\ell_2$-bounded parameter perturbation on watermarked SD v1.4.}
    \label{tab:parameter_perturbation}
    \footnotesize
    \vspace{-0.5em}
    \setlength{\tabcolsep}{5pt}
    \resizebox{0.9\linewidth}{!}{
        \begin{tabular}{lcccc}
            \toprule
            L2 Norm Budget & 0.2 & 0.4 & 0.6 & 0.8 \\
            \midrule
            WatermarkDM & 0.781 & 0.000 & 0.000 & 0.000 \\
            SleeperMark & 0.999 & 0.861 & 0.060 & 0.000 \\
            \textbf{Cert-LAS (w/o smooth)} & 0.000 & 0.000 & 0.000 & 0.000 \\
            \textbf{Cert-LAS (w/ smooth)} & \textbf{1.000} & \textbf{1.000} & \textbf{0.984} & \textbf{0.965} \\
            \bottomrule
        \end{tabular}
    }
    \vspace{-0.5em}
\end{table}


\textbf{Generalization Analysis.} We further examine the generality of Cert-LAS along two axes: model architecture and classification task. For the former, the layerwise update consistency underlying LFS holds across mainstream architectures; for the latter, the broad task level generality, combined with the classifier specificity above, enables non-interference among multiple owners (Appendix~\ref{app:multi_owner}). Detailed results are provided in Appendix~\ref{app:arch_generality}.

\section{Conclusion}

In this paper, we revealed the vulnerabilities of existing model ownership verification (MOV) methods for text-to-image diffusion models under watermark auditing and parameter perturbations. To address these issues, we proposed Cert-LAS, the first certified watermarking method based on layer-adaptive smoothing. Cert-LAS leverages diffusion classifiers for trigger-free embedding to evade auditing, and introduces Layer Fine-tuning Sensitivity (LFS) guided noise allocation to enlarge the certifiable region. Our theoretical analysis establishes a tight lower bound between Watermark Robustness (WR) and Reference Probability (RP) under bounded parameter perturbations, enabling certified verification via paired-sample $t$-test. Extensive experiments validate that Cert-LAS maintains high fidelity while achieving superior robustness, paving the way for trustworthy sharing of diffusion models with certified guarantees.

\section*{Acknowledgements}

\looseness=-1
This research is supported by the National Research Foundation, Singapore, and Cyber Security Agency of Singapore under its National Cybersecurity R\&D Programme and CyberSG R\&D Cyber Research Programme Office. Any opinions, findings and conclusions or recommendations expressed in these materials are those of the author(s) and do not reflect the views of National Research Foundation, Singapore, Cyber Security Agency of Singapore as well as CyberSG R\&D Programme Office, Singapore.

\section*{Impact Statement}

This paper focuses on protecting the intellectual property of text-to-image diffusion models through certified model ownership verification. We demonstrate that existing backdoor-based watermarking methods are fragile under watermark auditing and subsequent parameter perturbations, and propose Cert-LAS to enable reliable ownership verification even in the presence of bounded, malicious removal attacks. The method is purely defensive, intended solely for legitimate verification by authorized model owners in conjunction with trusted verification authorities when necessary, with strict protection of private verification artifacts. Given that training large-scale diffusion models demands substantial computational and data costs, robust ownership verification helps underpin enforceable licensing and accountability mechanisms, ensuring that creators receive appropriate recognition and compensation while discouraging unauthorized exploitation. Overall, by providing provable robustness guarantees against watermark removal, this work contributes positively to trustworthy intellectual property governance in the generative AI ecosystem and promotes the sustainable development of accountable generative systems.


\bibliography{reference}

@inproceedings{karras2017progressive,
  title     = {Progressive Growing of {GANs} for Improved Quality, Stability, and Variation},
  author    = {Karras, Tero and Aila, Timo and Laine, Samuli and Lehtinen, Jaakko},
  booktitle = {ICLR},
  year      = {2018}
}

@article{liao2022artbench,
  title   = {The {ArtBench} Dataset: Benchmarking Generative Models with Artworks},
  author  = {Liao, Peiyuan and Li, Xiuyu and Liu, Xihui and Keutzer, Kurt},
  journal = {arXiv preprint arXiv:2206.11404},
  year    = {2022}
}

@inproceedings{elson2007asirra,
  title     = {{Asirra}: A {CAPTCHA} that Exploits Interest-Aligned Manual Image Categorization},
  author    = {Elson, Jeremy and Douceur, John R. and Howell, Jon and Saul, Jared},
  booktitle = {CCS},
  year      = {2007}
}

@inproceedings{schuhmann2022laion,
  title     = {{LAION}-5{B}: An Open Large-Scale Dataset for Training Next Generation Image-Text Models},
  author    = {Schuhmann, Christoph and Beaumont, Romain and Vencu, Richard and Gordon, Cade and Wightman, Ross and Cherti, Mehdi and others},
  booktitle = {NeurIPS},
  year      = {2022}
}

@inproceedings{hu2022lora,
  title     = {{LoRA}: Low-Rank Adaptation of Large Language Models},
  author    = {Hu, Edward J. and Shen, Yelong and Wallis, Phillip and Allen-Zhu, Zeyuan and Li, Yuanzhi and Wang, Shean and Wang, Lu and Chen, Weizhu},
  booktitle = {ICLR},
  year      = {2022}
}

@misc{cervenka2022naruto_blip_captions,
  title        = {Naruto {BLIP} Captions},
  author       = {Cervenka, Eole},
  year         = {2022},
  howpublished = {\url{https://huggingface.co/datasets/lambda/naruto-blip-captions/}},
  note         = {Accessed: 2026-01-28}
}

@misc{kaggle2024landscape,
  title        = {Landscape Pictures},
  author       = {Rougetet, Arnaud},
  year         = {2020},
  howpublished = {\url{https://www.kaggle.com/datasets/arnaud58/landscape-pictures}},
  note         = {Accessed: 2026-01-28}
}

@article{qiao2025certdw,
  title   = {{CertDW}: Towards Certified Dataset Ownership Verification via Conformal Prediction},
  author  = {Qiao, Ting and Li, Yiming and Li, Jianbin and Wang, Yingjia and Qi, Leyi and Guo, Junfeng and Feng, Ruili and Tao, Dacheng},
  journal = {arXiv preprint arXiv:2506.13160},
  year    = {2025}
}

@misc{norod78_2022_cartoon_blip_captions,
  title        = {cartoon-blip-captions},
  author       = {{Norod78}},
  year         = {2022},
  howpublished = {\url{https://huggingface.co/datasets/Norod78/cartoon-blip-captions}},
  note         = {Accessed: 2026-01-28}
}

@inproceedings{ma2025progressive,
  title     = {Progressive Rendering Distillation: Adapting Stable Diffusion for Instant Text-to-Mesh Generation without {3D} Data},
  author    = {Ma, Zhiyuan and Liang, Xinyue and Wu, Rongyuan and Zhu, Xiangyu and Lei, Zhen and Zhang, Lei},
  booktitle = {CVPR},
  year      = {2025}
}

@inproceedings{mao2025sir,
  title     = {{SIR-DIFF}: Sparse Image Sets Restoration with Multi-View Diffusion Model},
  author    = {Mao, Yucheng and Wang, Boyang and Kulkarni, Nilesh and Park, Jeong Joon},
  booktitle = {CVPR},
  year      = {2025}
}

@article{student1908probable,
  title   = {The Probable Error of a Mean},
  author  = {Student},
  journal = {Biometrika},
  volume  = {6},
  number  = {1},
  pages   = {1--25},
  year    = {1908}
}

@inproceedings{karras2024analyzing,
  title     = {Analyzing and Improving the Training Dynamics of Diffusion Models},
  author    = {Karras, Tero and Aittala, Miika and Lehtinen, Jaakko and Hellsten, Janne and Aila, Timo and Laine, Samuli},
  booktitle = {CVPR},
  year      = {2024}
}

@inproceedings{zheng2025dense2moe,
  title     = {{Dense2MoE}: Restructuring Diffusion Transformer to {MoE} for Efficient Text-to-Image Generation},
  author    = {Zheng, Youwei and Ren, Yuxi and Xia, Xin and Xiao, Xuefeng and Xie, Xiaohua},
  booktitle = {ICCV},
  year      = {2025}
}

@inproceedings{dubinski2025cdi,
  title     = {{CDI}: Copyrighted Data Identification in Diffusion Models},
  author    = {Dubi{\'n}ski, Jan and Kowalczuk, Antoni and Boenisch, Franziska and Dziedzic, Adam},
  booktitle = {CVPR},
  year      = {2025}
}

@article{neyman1933most,
  title   = {On the Problem of the Most Efficient Tests of Statistical Hypotheses},
  author  = {Neyman, Jerzy and Pearson, Egon Sharpe},
  journal = {Philosophical Transactions of the Royal Society of London. Series A},
  volume  = {231},
  pages   = {289--337},
  year    = {1933}
}

@inproceedings{kumar2020certifying,
  title     = {Certifying Confidence via Randomized Smoothing},
  author    = {Kumar, Aounon and Levine, Alexander and Feizi, Soheil and Goldstein, Tom},
  booktitle = {NeurIPS},
  year      = {2020}
}

@inproceedings{nichol2021improved,
  title     = {Improved Denoising Diffusion Probabilistic Models},
  author    = {Nichol, Alexander Quinn and Dhariwal, Prafulla},
  booktitle = {ICML},
  year      = {2021}
}

@inproceedings{kumari2023multi,
  title     = {Multi-Concept Customization of Text-to-Image Diffusion},
  author    = {Kumari, Nupur and Zhang, Bingliang and Zhang, Richard and Shechtman, Eli and Zhu, Jun-Yan},
  booktitle = {CVPR},
  year      = {2023}
}

@inproceedings{bao2024separate,
  title     = {Separate-and-Enhance: Compositional Finetuning for Text-to-Image Diffusion Models},
  author    = {Bao, Zhipeng and Li, Yijun and Singh, Krishna Kumar and Wang, Yu-Xiong and Hebert, Martial},
  booktitle = {SIGGRAPH},
  year      = {2024}
}

@inproceedings{dhariwal2021diffusion,
  title     = {Diffusion Models Beat {GANs} on Image Synthesis},
  author    = {Dhariwal, Prafulla and Nichol, Alexander},
  booktitle = {NeurIPS},
  year      = {2021}
}

@inproceedings{ronneberger2015u,
  title     = {{U-Net}: Convolutional Networks for Biomedical Image Segmentation},
  author    = {Ronneberger, Olaf and Fischer, Philipp and Brox, Thomas},
  booktitle = {MICCAI},
  year      = {2015}
}

@inproceedings{huang2023scalelong,
  title     = {{ScaleLong}: Towards More Stable Training of Diffusion Model via Scaling Network Long Skip Connection},
  author    = {Huang, Zhongzhan and Zhou, Pan and Yan, Shuicheng and Lin, Liang},
  booktitle = {NeurIPS},
  year      = {2023}
}

@article{voynov2023p+,
  title   = {{p+}: Extended Textual Conditioning in Text-to-Image Generation},
  author  = {Voynov, Andrey and Chu, Qinghao and Cohen-Or, Daniel and Aberman, Kfir},
  journal = {arXiv preprint arXiv:2303.09522},
  year    = {2023}
}

@inproceedings{li2025anti,
  title     = {Anti-Tamper Protection for Unauthorized Individual Image Generation},
  author    = {Li, Zelin and Zong, Ruohan and Liu, Yifan and Yao, Ruichen and Liu, Yaokun and Zhang, Yang and Wang, Dong},
  booktitle = {ICCV},
  year      = {2025}
}

@inproceedings{lyu2025pla,
  title     = {{PLA}: Prompt Learning Attack against Text-to-Image Generative Models},
  author    = {Lyu, Xinqi and Liu, Yihao and Li, Yanjie and Xiao, Bin},
  booktitle = {ICCV},
  year      = {2025}
}

@inproceedings{zhang2025diffusion,
  title     = {Diffusion-4{K}: Ultra-High-Resolution Image Synthesis with Latent Diffusion Models},
  author    = {Zhang, Jinjin and Huang, Qiuyu and Liu, Junjie and Guo, Xiefan and Huang, Di},
  booktitle = {CVPR},
  year      = {2025}
}

@inproceedings{lim2025conceptsplit,
  title     = {{ConceptSplit}: Decoupled Multi-Concept Personalization of Diffusion Models via Token-Wise Adaptation and Attention Disentanglement},
  author    = {Lim, Habin and Won, Yeongseob and Seo, Juwon and Park, Gyeong-Moon},
  booktitle = {ICCV},
  year      = {2025}
}

@inproceedings{wu2025themis,
  title     = {{THEMIS}: Regulating Textual Inversion for Personalized Concept Censorship},
  author    = {Wu, Yutong and Zhang, Jie and Kerschbaum, Florian and Zhang, Tianwei},
  booktitle = {NDSS},
  year      = {2025}
}

@inproceedings{wang2025designdiffusion,
  title     = {{DesignDiffusion}: High-Quality Text-to-Design Image Generation with Diffusion Models},
  author    = {Wang, Zhendong and Bao, Jianmin and Gu, Shuyang and Chen, Dong and Zhou, Wengang and Li, Houqiang},
  booktitle = {CVPR},
  year      = {2025}
}

@inproceedings{yang2025you,
  title     = {Do You Steal My Model? Signature Diffusion Embedded Dual-Verification Watermarking for Protecting Intellectual Property of Hyperspectral Image Classification Models},
  author    = {Yang, Yufei and Xiao, Song and Li, Lixiang and Dong, Wenqian and Qu, Jiahui},
  booktitle = {IJCAI},
  year      = {2025}
}

@inproceedings{feng2024aqualora,
  title     = {{AquaLoRA}: Toward White-box Protection for Customized Stable Diffusion Models via Watermark {LoRA}},
  author    = {Feng, Weitao and Zhou, Wenbo and He, Jiyan and Zhang, Jie and Wei, Tianyi and Li, Guanlin and Zhang, Tianwei and Zhang, Weiming and Yu, Nenghai},
  booktitle = {ICML},
  year      = {2024}
}

@article{gloaguen2025robustllmfingerprintingdomainspecific,
  title   = {Robust {LLM} Fingerprinting via Domain-Specific Watermarks},
  author  = {Gloaguen, Thibaud and Staab, Robin and Jovanovi{\'c}, Nikola and Vechev, Martin},
  journal = {arXiv preprint arXiv:2505.16723},
  year    = {2025}
}

@inproceedings{pasquinillmmapfingerprintinglargelanguage,
  title     = {{LLMmap}: Fingerprinting for Large Language Models},
  author    = {Pasquini, Dario and Kornaropoulos, Evgenios M. and Ateniese, Giuseppe},
  booktitle = {USENIX Security},
  year      = {2025}
}

@misc{radford2019language,
  title        = {Language Models are Unsupervised Multitask Learners},
  author       = {Radford, Alec and Wu, Jeff and Child, Rewon and Luan, David and Amodei, Dario and Sutskever, Ilya},
  year         = {2019},
  howpublished = {OpenAI Blog},
  url          = {https://openai.com/research/language-unsupervised}
}

@article{wang2004image,
  title   = {Image Quality Assessment: From Error Visibility to Structural Similarity},
  author  = {Wang, Zhou and Bovik, Alan C. and Sheikh, Hamid R. and Simoncelli, Eero P.},
  journal = {IEEE Transactions on Image Processing},
  volume  = {13},
  number  = {4},
  pages   = {600--612},
  year    = {2004}
}

@article{li2025move,
  title   = {{MOVE}: Effective and Harmless Ownership Verification via Embedded External Features},
  author  = {Li, Yiming and Zhu, Linghui and Jia, Xiaojun and Bai, Yang and Jiang, Yong and Xia, Shu-Tao and Cao, Xiaochun and Ren, Kui},
  journal = {IEEE Transactions on Pattern Analysis and Machine Intelligence},
  volume  = {47},
  number  = {6},
  pages   = {4734--4751},
  year    = {2025}
}

@inproceedings{liu2023trapdoor,
  title     = {Trapdoor Normalization with Irreversible Ownership Verification},
  author    = {Liu, Hanwen and Weng, Zhenyu and Zhu, Yuesheng and Mu, Yadong},
  booktitle = {ICML},
  year      = {2023}
}

@inproceedings{yang2024fedgmark,
  title     = {{FedGMark}: Certifiably Robust Watermarking for Federated Graph Learning},
  author    = {Yang, Yuxin and Li, Qiang and Hong, Yuan and Wang, Binghui},
  booktitle = {NeurIPS},
  year      = {2024}
}

@inproceedings{shao2024explanation,
  title     = {Explanation as a Watermark: Towards Harmless and Multi-bit Model Ownership Verification via Watermarking Feature Attribution},
  author    = {Shao, Shuo and Li, Yiming and Yao, Hongwei and He, Yiling and Qin, Zhan and Ren, Kui},
  booktitle = {NDSS},
  year      = {2025}
}

@article{yang2026swap,
  title   = {SWAP: Towards Copyright Auditing of Soft Prompts via Sequential Watermarking},
  author  = {Yang, Wenyuan and Sun, Yichen and Chen, Changzheng and Chu, Zhixuan and Zhang, Jiaheng and Li, Yiming and Tao, Dacheng},
  journal = {International Journal of Computer Vision},
  year    = {2026}
}

@inproceedings{yang2024not,
  title={Not all prompts are secure: A switchable backdoor attack against pre-trained vision transfomers},
  author={Yang, Sheng and Bai, Jiawang and Gao, Kuofeng and Yang, Yong and Li, Yiming and Xia, Shu-Tao},
  booktitle={CVPR},
  year={2024}
}

@article{gao2025toward,
  title   = {Toward Dataset Copyright Evasion Attack Against Personalized Text-to-Image Diffusion Models},
  author  = {Gao, Kuofeng and Zhu, Yufei and Li, Yiming and Bai, Jiawang and Yang, Yong and Li, Zhifeng and Xia, Shu-Tao},
  journal = {IEEE Transactions on Information Forensics and Security},
  volume  = {21},
  pages   = {725--740},
  year    = {2026}
}

@inproceedings{li2025towards,
  title     = {Towards reliable verification of unauthorized data usage in personalized text-to-image diffusion models},
  author    = {Li, Boheng and Wei, Yanhao and Fu, Yankai and Wang, Zhenting and Li, Yiming and Zhang, Jie and Wang, Run and Zhang, Tianwei},
  booktitle = {S\&P},
  year      = {2025}
}

@inproceedings{qiao2026dssmoothing,
  title     = {{DSSmoothing}: Toward Certified Dataset Ownership Verification for Pre-trained Language Models via Dual-Space Smoothing},
  author    = {Qiao, Ting and Liu, Xing and Huang, Wenke and Li, Jianbin and Fan, Zhaoxin and Li, Yiming},
  booktitle = {WWW},
  year      = {2026}
}

@article{qiao2026cert,
  title   = {Cert-SSBD: Certified Backdoor Defense With Sample-Specific Smoothing Noises},
  author  = {Qiao, Ting and Wang, Yingjia and Liu, Xing and Wu, Sixing and Li, Jianbin and Li, Yiming},
  journal = {IEEE Transactions on Information Forensics and Security},
  volume  = {21},
  pages   = {2446--2461},
  year    = {2026}
}

@inproceedings{gan2023towards,
  title     = {Towards robust model watermark via reducing parametric vulnerability},
  author    = {Gan, Guanhao and Li, Yiming and Wu, Dongxian and Xia, Shu-Tao},
  booktitle = {ICCV},
  year      = {2023}
}

@inproceedings{shao2026reading,
  title     = {Reading between the lines: Towards reliable black-box llm fingerprinting via zeroth-order gradient estimation},
  author    = {Shao, Shuo and Li, Yiming and Yao, Hongwei and Chen, Yifei and Yang, Yuchen and Qin, Zhan},
  booktitle = {WWW},
  year      = {2026}
}

@article{li2025rethinking,
  title   = {Rethinking data protection in the (generative) artificial intelligence era},
  author  = {Li, Yiming and Shao, Shuo and He, Yu and Guo, Junfeng and Zhang, Tianwei and Qin, Zhan and Chen, Pin-Yu and Backes, Michael and Torr, Philip and Tao, Dacheng and others},
  journal = {arXiv preprint arXiv:2507.03034},
  year    = {2025}
}

@article{zhu2025holmes,
  title   = {Holmes: Towards Effective and Harmless Model Ownership Verification to Personalized Large Vision Models via Decoupling Common Features},
  author  = {Zhu, Linghui and Li, Yiming and Weng, Haiqin and Liu, Yan and Zhang, Tianwei and Xia, Shu-Tao and Wang, Zhi},
  journal = {arXiv preprint arXiv:2507.00724},
  year    = {2025}
}

@article{shao2025sok,
  title   = {Sok: Large language model copyright auditing via fingerprinting},
  author  = {Shao, Shuo and Li, Yiming and He, Yu and Yao, Hongwei and Yang, Wenyuan and Tao, Dacheng and Qin, Zhan},
  journal = {arXiv preprint arXiv:2508.19843},
  year    = {2025}
}

@inproceedings{wang2025vision,
  title     = {Vision-Language Model {IP} Protection via Prompt-based Learning},
  author    = {Wang, Lianyu and Wang, Meng and Fu, Huazhu and Zhang, Daoqiang},
  booktitle = {CVPR},
  year      = {2025}
}

@inproceedings{pmlr-v202-voracek23a,
  title     = {Improving {$\ell_1$}-Certified Robustness via Randomized Smoothing by Leveraging Box Constraints},
  author    = {Voracek, Vaclav and Hein, Matthias},
  booktitle = {ICML},
  year      = {2023}
}

@inproceedings{lyu2024adaptive,
  title     = {Adaptive Randomized Smoothing: Certified Adversarial Robustness for Multi-Step Defences},
  author    = {Lyu, Saiyue and Shaikh, Shadab and Shpilevskiy, Frederick and Shelhamer, Evan and L{\'e}cuyer, Mathias},
  booktitle = {NeurIPS},
  year      = {2024}
}

@inproceedings{cohen2019certified,
  title     = {Certified Adversarial Robustness via Randomized Smoothing},
  author    = {Cohen, Jeremy and Rosenfeld, Elan and Kolter, Zico},
  booktitle = {ICML},
  year      = {2019}
}

@inproceedings{salman2019provably,
  title     = {Provably Robust Deep Learning via Adversarially Trained Smoothed Classifiers},
  author    = {Salman, Hadi and Li, Jerry and Razenshteyn, Ilya and Zhang, Pengchuan and Zhang, Huan and Bubeck, Sebastien and Yang, Greg},
  booktitle = {NeurIPS},
  year      = {2019}
}

@inproceedings{bansal2022certified,
  title     = {Certified Neural Network Watermarks with Randomized Smoothing},
  author    = {Bansal, Arpit and Chiang, Ping-yeh and Curry, Michael J. and Jain, Rajiv and Wigington, Curtis and Manjunatha, Varun and Dickerson, John P. and Goldstein, Tom},
  booktitle = {ICML},
  year      = {2022}
}

@inproceedings{jiang2023ipcert,
  title     = {{IPCert}: Provably Robust Intellectual Property Protection for Machine Learning},
  author    = {Jiang, Zhengyuan and Fang, Minghong and Gong, Neil Zhenqiang},
  booktitle = {ICCV},
  year      = {2023}
}

@inproceedings{ren2023dimension,
  title     = {Dimension-Independent Certified Neural Network Watermarks via Mollifier Smoothing},
  author    = {Ren, Jiaxiang and Zhou, Yang and Jin, Jiayin and Lyu, Lingjuan and Yan, Da},
  booktitle = {ICML},
  year      = {2023}
}

@inproceedings{wang2025sleepermark,
  title     = {{SleeperMark}: Towards Robust Watermark against Fine-tuning Text-to-Image Diffusion Models},
  author    = {Wang, Zilan and Guo, Junfeng and Zhu, Jiacheng and Li, Yiming and Huang, Heng and Chen, Muhao and Tu, Zhengzhong},
  booktitle = {CVPR},
  year      = {2025}
}

@inproceedings{huang2024personalization,
  title     = {Personalization as a Shortcut for Few-shot Backdoor Attack against Text-to-Image Diffusion Models},
  author    = {Huang, Yihao and Juefei-Xu, Felix and Guo, Qing and Zhang, Jie and Wu, Yutong and Hu, Ming and Li, Tianlin and Pu, Geguang and Liu, Yang},
  booktitle = {AAAI},
  year      = {2024}
}

@inproceedings{yang2024gaussian,
  title     = {Gaussian Shading: Provable Performance-Lossless Image Watermarking for Diffusion Models},
  author    = {Yang, Zijin and Zeng, Kai and Chen, Kejiang and Fang, Han and Zhang, Weiming and Yu, Nenghai},
  booktitle = {CVPR},
  year      = {2024}
}

@inproceedings{song2020score,
  title     = {Score-Based Generative Modeling through Stochastic Differential Equations},
  author    = {Song, Yang and Sohl-Dickstein, Jascha and Kingma, Diederik P. and Kumar, Abhishek and Ermon, Stefano and Poole, Ben},
  booktitle = {ICLR},
  year      = {2021}
}

@inproceedings{song2019generative,
  title     = {Generative Modeling by Estimating Gradients of the Data Distribution},
  author    = {Song, Yang and Ermon, Stefano},
  booktitle = {NeurIPS},
  year      = {2019}
}

@inproceedings{ruiz2023dreambooth,
  title     = {{DreamBooth}: Fine Tuning Text-to-Image Diffusion Models for Subject-Driven Generation},
  author    = {Ruiz, Nataniel and Li, Yuanzhen and Jampani, Varun and Pritch, Yael and Rubinstein, Michael and Aberman, Kfir},
  booktitle = {CVPR},
  year      = {2023}
}

@article{zhao2023recipe,
  title   = {A Recipe for Watermarking Diffusion Models},
  author  = {Zhao, Yunqing and Pang, Tianyu and Du, Chao and Yang, Xiao and Cheung, Ngai-Man and Lin, Min},
  journal = {arXiv preprint arXiv:2303.10137},
  year    = {2023}
}

@article{liu2023watermarking,
  title   = {Watermarking Diffusion Model},
  author  = {Liu, Yugeng and Li, Zheng and Backes, Michael and Shen, Yun and Zhang, Yang},
  journal = {arXiv preprint arXiv:2305.12502},
  year    = {2023}
}

@inproceedings{xiong2023flexible,
  title     = {Flexible and Secure Watermarking for Latent Diffusion Model},
  author    = {Xiong, Cheng and Qin, Chuan and Feng, Guorui and Zhang, Xinpeng},
  booktitle = {ACM MM},
  year      = {2023}
}

@inproceedings{fernandez2023stable,
  title     = {The Stable Signature: Rooting Watermarks in Latent Diffusion Models},
  author    = {Fernandez, Pierre and Couairon, Guillaume and J{\'e}gou, Herv{\'e} and Douze, Matthijs and Furon, Teddy},
  booktitle = {ICCV},
  year      = {2023}
}

@article{lei2024diffusetrace,
  title   = {{DiffuseTrace}: A Transparent and Flexible Watermarking Scheme for Latent Diffusion Model},
  author  = {Lei, Liangqi and Gai, Keke and Yu, Jing and Zhu, Liehuang},
  journal = {arXiv preprint arXiv:2405.02696},
  year    = {2024}
}

@article{wen2023tree,
  title   = {Tree-Ring Watermarks: Fingerprints for Diffusion Images that Are Invisible and Robust},
  author  = {Wen, Yuxin and Kirchenbauer, John and Geiping, Jonas and Goldstein, Tom},
  journal = {arXiv preprint arXiv:2305.20030},
  year    = {2023}
}

@inproceedings{clark2024text,
  title     = {Text-to-Image Diffusion Models Are Zero-Shot Classifiers},
  author    = {Clark, Kevin and Jaini, Priyank},
  booktitle = {NeurIPS},
  year      = {2023}
}

@inproceedings{li2023your,
  title     = {Your Diffusion Model Is Secretly a Zero-Shot Classifier},
  author    = {Li, Alexander C. and Prabhudesai, Mihir and Duggal, Shivam and Brown, Ellis and Pathak, Deepak},
  booktitle = {ICCV},
  year      = {2023}
}

@inproceedings{chen2023robust,
  title     = {Robust Classification via a Single Diffusion Model},
  author    = {Chen, Huanran and Dong, Yinpeng and Wang, Zhengyi and Yang, Xiao and Duan, Chengqi and Su, Hang and Zhu, Jun},
  booktitle = {ICML},
  year      = {2024}
}

@inproceedings{Rombach_2022_CVPR,
  title     = {High-Resolution Image Synthesis with Latent Diffusion Models},
  author    = {Rombach, Robin and Blattmann, Andreas and Lorenz, Dominik and Esser, Patrick and Ommer, Bj\"orn},
  booktitle = {CVPR},
  year      = {2022}
}

@inproceedings{weber2023rab,
  title     = {{RAB}: Provable Robustness Against Backdoor Attacks},
  author    = {Weber, Maurice and Xu, Xiaojun and Karla{\v{s}}, Bojan and Zhang, Ce and Li, Bo},
  booktitle = {S\&P},
  year      = {2023}
}

@inproceedings{parmar2022aliased,
  title     = {On Aliased Resizing and Surprising Subtleties in {GAN} Evaluation},
  author    = {Parmar, Gaurav and Zhang, Richard and Zhu, Jun-Yan},
  booktitle = {CVPR},
  year      = {2022}
}

@inproceedings{radford2021learning,
  title     = {Learning Transferable Visual Models from Natural Language Supervision},
  author    = {Radford, Alec and Kim, Jong Wook and Hallacy, Chris and Ramesh, Aditya and Goh, Gabriel and Agarwal, Sandhini and Sastry, Girish and Askell, Amanda and Mishkin, Pamela and Clark, Jack and others},
  booktitle = {ICML},
  year      = {2021}
}

@inproceedings{fu2023dreamsim,
  title     = {{DreamSim}: Learning New Dimensions of Human Visual Similarity using Synthetic Data},
  author    = {Fu, Stephanie and Tamir, Netanel and Sundaram, Shobhita and Chai, Lucy and Zhang, Richard and Dekel, Tali and Isola, Phillip},
  booktitle = {NeurIPS},
  year      = {2023}
}

@inproceedings{ma2024safe,
  title     = {{Safe-SD}: Safe and Traceable Stable Diffusion with Text Prompt Trigger for Invisible Generative Watermarking},
  author    = {Ma, Zhiyuan and Jia, Guoli and Qi, Biqing and Zhou, Bowen},
  booktitle = {ACM MM},
  year      = {2024}
}

@inproceedings{lin2014microsoft,
  title     = {Microsoft {COCO}: Common Objects in Context},
  author    = {Lin, Tsung-Yi and Maire, Michael and Belongie, Serge and Hays, James and Perona, Pietro and Ramanan, Deva and Doll{\'a}r, Piotr and Zitnick, C. Lawrence},
  booktitle = {ECCV},
  year      = {2014}
}

@inproceedings{song2024imprint,
  title     = {{Imprint}: Generative Object Compositing by Learning Identity-Preserving Representation},
  author    = {Song, Yizhi and Zhang, Zhifei and Lin, Zhe and Cohen, Scott and Price, Brian and Zhang, Jianming and Kim, Soo Ye and Zhang, He and Xiong, Wei and Aliaga, Daniel},
  booktitle = {CVPR},
  year      = {2024}
}

@inproceedings{kang2024diffusion2gan,
  title     = {Distilling Diffusion Models into Conditional {GANs}},
  author    = {Kang, Minguk and Zhang, Richard and Barnes, Connelly and Paris, Sylvain and Kwak, Suha and Park, Jaesik and Shechtman, Eli and Zhu, Jun-Yan and Park, Taesung},
  booktitle = {ECCV},
  year      = {2024}
}

@inproceedings{chen2023universal,
  title     = {Universal Watermark Vaccine: Universal Adversarial Perturbations for Watermark Protection},
  author    = {Chen, Jianbo and Liu, Xinwei and Liang, Siyuan and Jia, Xiaojun and Xun, Yuan},
  booktitle = {CVPR Workshops},
  year      = {2023}
}

@article{guo2024copyrightshield,
  title   = {CopyrightShield: Spatial Similarity Guided Backdoor Defense against Copyright Infringement in Diffusion Models},
  author  = {Guo, Zhixiang and Liang, Siyuan and Liu, Aishan and Tao, Dacheng},
  journal = {arXiv preprint arXiv:2412.01528},
  year    = {2024}
}

@inproceedings{liang2024poisoned,
  title     = {Poisoned Forgery Face: Towards Backdoor Attacks on Face Forgery Detection},
  author    = {Liang, Jiawei and Liang, Siyuan and Liu, Aishan and Jia, Xiaojun and Kuang, Junhao and Cao, Xiaochun},
  booktitle = {ICLR},
  year      = {2024}
}

@article{liang2024unlearning,
  title   = {Unlearning Backdoor Threats: Enhancing Backdoor Defense in Multimodal Contrastive Learning via Local Token Unlearning},
  author  = {Liang, Siyuan and Liu, Kuanrong and Gong, Jiajun and Liang, Jiawei and Xun, Yuan and Chang, Ee-Chien and Cao, Xiaochun},
  journal = {arXiv preprint arXiv:2403.16257},
  year    = {2024}
}

@inproceedings{zhu2024breaking,
  title     = {Breaking the false sense of security in backdoor defense through re-activation attack},
  author    = {Zhu, Mingli and Liang, Siyuan and Wu, Baoyuan},
  booktitle = {NeurIPS},
  year      = {2024}
}

@inproceedings{wang2025your,
  title     = {Your scale factors are my weapon: Targeted bit-flip attacks on vision transformers via scale factor manipulation},
  author    = {Wang, Jialai and Wu, Yuxiao and Xu, Weiye and Huang, Yating and Zhang, Chao and Li, Zongpeng and Xu, Mingwei and Liang, Zhenkai},
  booktitle = {CVPR},
  year      = {2025}
}

@inproceedings{wang2023aegis,
  title     = {Aegis: Mitigating targeted bit-flip attacks against deep neural networks},
  author    = {Wang, Jialai and Zhang, Ziyuan and Wang, Meiqi and Qiu, Han and Zhang, Tianwei and Li, Qi and Li, Zongpeng and Wei, Tao and Zhang, Chao},
  booktitle = {USENIX Security},
  year      = {2023}
}

@inproceedings{liang2022imitated,
  title     = {Imitated detectors: Stealing knowledge of black-box object detectors},
  author    = {Liang, Siyuan and Liu, Aishan and Liang, Jiawei and Li, Longkang and Bai, Yang and Cao, Xiaochun},
  booktitle = {ACM MM},
  year      = {2022}
}

@inproceedings{guo2023isolation,
  title     = {Isolation and Induction: Training Robust Deep Neural Networks against Model Stealing Attacks},
  author    = {Guo, Jun and Zheng, Xingyu and Liu, Aishan and Liang, Siyuan and Xiao, Yisong and Wu, Yichao and Liu, Xianglong},
  booktitle = {ACM MM},
  year      = {2023}
}

@inproceedings{liang2024badclip,
  title     = {Badclip: Dual-embedding guided backdoor attack on multimodal contrastive learning},
  author    = {Liang, Siyuan and Zhu, Mingli and Liu, Aishan and Wu, Baoyuan and Cao, Xiaochun and Chang, Ee-Chien},
  booktitle = {CVPR},
  year      = {2024}
}

@article{liang2025vl,
  title   = {{VL-Trojan}: Multimodal Instruction Backdoor Attacks against Autoregressive Visual Language Models},
  author  = {Liang, Jiawei and Liang, Siyuan and Liu, Aishan and Cao, Xiaochun},
  journal = {International Journal of Computer Vision},
  volume  = {133},
  number  = {7},
  pages   = {3994--4013},
  year    = {2025}
}

@inproceedings{liang2025revisiting,
  title     = {Revisiting Backdoor Attacks against Large Vision-Language Models from Domain Shift},
  author    = {Liang, Siyuan and Liang, Jiawei and Pang, Tianyu and Du, Chao and Liu, Aishan and Zhu, Mingli and Cao, Xiaochun and Tao, Dacheng},
  booktitle = {CVPR},
  year      = {2025}
}

@article{liu2025pre,
  title   = {Pre-trained trojan attacks for visual recognition},
  author  = {Liu, Aishan and Liu, Xianglong and Zhang, Xinwei and Xiao, Yisong and Zhou, Yuguang and Liang, Siyuan and Wang, Jiakai and Cao, Xiaochun and Tao, Dacheng},
  journal = {International Journal of Computer Vision},
  volume  = {133},
  number  = {6},
  pages   = {3568--3585},
  year    = {2025}
}

@article{shao2025databench,
  title={Databench: Evaluating dataset auditing in deep learning from an adversarial perspective},
  author={Shao, Shuo and Li, Yiming and Zheng, Mengren and Hu, Zhiyang and Chen, Yukun and Li, Boheng and He, Yu and Guo, Junfeng and Tao, Dacheng and Qin, Zhan},
  journal={arXiv preprint arXiv:2507.05622},
  year={2025}
}

@inproceedings{or2025torchao,
  title     = {{TorchAO}: {PyTorch}-Native Training-to-Serving Model Optimization},
  author    = {Or, Andrew and Jain, Apurva and Vega-Myhre, Daniel and Cai, Jesse and Hernandez, Charles David and Zheng, Zhenrui and Guessous, Driss and Kuznetsov, Vasiliy and Puhrsch, Christian and Saroufim, Mark and Rao, Supriya and Tran, Thien and Samardzic, Aleksandar},
  booktitle = {CODEML@ICML},
  year      = {2025}
}

@inproceedings{zhu2026obsdiff,
  title     = {{OBS-Diff}: Accurate Pruning for Diffusion Models in One-Shot},
  author    = {Zhu, Junhan and Wang, Hesong and Su, Mingluo and Wang, Zefang and Wang, Huan},
  booktitle = {ICLR},
  year      = {2026}
}

@article{podell2023sdxl,
  title   = {{SDXL}: Improving Latent Diffusion Models for High-Resolution Image Synthesis},
  author  = {Podell, Dustin and English, Zion and Lacey, Kyle and Blattmann, Andreas and Dockhorn, Tim and M{\"u}ller, Jonas and Penna, Joe and Rombach, Robin},
  journal = {arXiv preprint arXiv:2307.01952},
  year    = {2023}
}

@article{xie2024sana,
  title   = {{SANA}: Efficient High-Resolution Image Synthesis with Linear Diffusion Transformers},
  author  = {Xie, Enze and Li, Wentao and Luo, Weixin and Wang, Yixiao and Ma, Shengcun and Yu, Jifeng and Gao, Yutong and Yu, Guoxin and Sun, Hongsheng},
  journal = {arXiv preprint arXiv:2410.10629},
  year    = {2024}
}
\bibliographystyle{icml2026}

\newpage
\appendix

\setcounter{equation}{0}
\setcounter{theorem}{0}
\renewcommand{\theequation}{A.\arabic{equation}}
\renewcommand{\thetheorem}{A.\arabic{theorem}}

\section{Theoretical Proofs}
\label{app:certification}

\subsection{Proof of Theorem~\ref{thm:threshold}}

\begin{theorem}[Closed-Form Threshold for Ownership Verification]
\label{app:thm:threshold}
Consider testing the null hypothesis $H_0: \mathrm{WR} = \mathrm{RP}$ against the alternative $H_1: \mathrm{WR} > \mathrm{RP}$. Given an upper bound $\mathrm{RP} \leq \zeta$, at significance level $\alpha$, the suspicious model $G(\cdot;\boldsymbol{\theta}_{\textup{sus}})$
is verified as watermarked  whenever
\begin{equation}
\label{app:eq:detection_threshold}
\mathrm{WR} >
\frac{2MN\zeta + t_\alpha^2 + \sqrt{\Gamma}}{2\bigl(MN + t_\alpha^2\bigr)},
\end{equation}
where {\small$\Gamma =
\bigl(2MN\zeta + t_\alpha^2\bigr)^2
- 4\bigl(MN + t_\alpha^2\bigr)\Bigl(MN\zeta^2 - t_\alpha^2\zeta + t_\alpha^2\zeta^2\Bigr)$}, and $t_\alpha$ denotes the $(1-\alpha)$-quantile of the $t$-distribution with $(N-1)$ degrees of freedom.
\end{theorem}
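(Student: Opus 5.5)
The plan is to write down the paired-sample $t$-statistic for the difference $\mathrm{WR}-\mathrm{RP}$. I would use a plug-in variance built from the Bernoulli structure of the indicators in the definition of WR and RP. I would then check that the rejection event $T>t_\alpha$ is monotone in the unknown $\mathrm{RP}$, so that the worst case is $\mathrm{RP}=\zeta$. Finally I would solve the resulting inequality, which turns out to be quadratic in $\mathrm{WR}$. The target is to identify the closed form in \eqref{app:eq:detection_threshold} as the larger root of
\begin{equation*}
MN(w-\zeta)^2 = t_\alpha^2\bigl[w(1-w)+\zeta(1-\zeta)\bigr].
\end{equation*}
Expanding this gives $(MN+t_\alpha^2)w^2-(2MN\zeta+t_\alpha^2)w+(MN\zeta^2-t_\alpha^2\zeta+t_\alpha^2\zeta^2)=0$. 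Its discriminant is exactly $\Gamma$, which confirms that this is the intended equation.

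\textbf{Step 1 (test statistic).} Each of WR and RP is an average of $MN$ indicator variables $\mathbb{I}\{q_{\boldsymbol{\phi}}(\cdot)=\tilde{y}\}$. Their per-draw variances are therefore $\mathrm{WR}(1-\mathrm{WR})$ and $\mathrm{RP}(1-\mathrm{RP})$. I would pair the suspicious and reference draws by index $(i,j)$. I would treat the two generators' samples as independent, or argue that a nonnegative covariance can only shrink the variance, so dropping it is conservative. This gives
\begin{equation*}
T=\frac{\mathrm{WR}-\mathrm{RP}}{\sqrt{\bigl(\mathrm{WR}(1-\mathrm{WR})+\mathrm{RP}(1-\mathrm{RP})\bigr)/(MN)}}.
\end{equation*}
Under $H_0$, $T$ is compared to $t_\alpha$, the $(1-\alpha)$-quantile with $N-1$ degrees of freedom, and $H_0$ is rejected when $T>t_\alpha$.

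\textbf{Step 2 (reduction to $\mathrm{RP}=\zeta$).} Since only $\mathrm{RP}\le\zeta$ is known, I would show that for fixed $\mathrm{WR}>\zeta$, the statistic $T$ is nonincreasing in $\mathrm{RP}$ on $[0,\zeta]$. Then $T(\mathrm{WR},\zeta)>t_\alpha$ implies $T(\mathrm{WR},\mathrm{RP})>t_\alpha$ for every admissible $\mathrm{RP}$. The route is to differentiate the squared statistic in $r=\mathrm{RP}$. The numerator decreases as $r$ increases, while the denominator term $r(1-r)$ may increase. The sign then reduces to
\begin{equation*}
-2\bigl(w(1-w)+r(1-r)\bigr)-(w-r)(1-2r)\le 0.
\end{equation*}
I need to verify this holds for $r<w$; a crude bound such as $|(w-r)(1-2r)|\le 2\bigl(w(1-w)+r(1-r)\bigr)$ may need care near the boundaries. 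If it fails in some corner, the fallback is to state the result as the test evaluated at the plug-in bound $\zeta$, which I believe is how the closed form is intended.

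\textbf{Step 3 (solving the inequality).} With $\mathrm{RP}=\zeta$ and $\mathrm{WR}>\zeta$, both sides of $T>t_\alpha$ are positive, so squaring preserves the inequality. Squaring gives $MN(w-\zeta)^2>t_\alpha^2[w(1-w)+\zeta(1-\zeta)]$, i.e., the quadratic above is positive. The leading coefficient $MN+t_\alpha^2$ is positive. At $w=\zeta$ the quadratic equals $-t_\alpha^2\cdot 2\zeta(1-\zeta)\le 0$, so $\Gamma\ge 0$ and $\zeta$ lies between the two roots. The admissible region $w>\zeta$ is therefore exactly $w>\frac{2MN\zeta+t_\alpha^2+\sqrt{\Gamma}}{2(MN+t_\alpha^2)}$, which is the stated threshold.

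I expect the main obstacle to be Step 2 together with the modelling choice in Step 1. Steps 1 and 2 together require justifying the variance estimate with effective sample size $MN$ against a $t$-quantile with $N-1$ degrees of freedom, and the monotonicity in RP. Step 3 is routine algebra.
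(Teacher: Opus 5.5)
Your proposal is correct and is essentially the paper's argument. The paper also runs the paired $t$-test over the $N$ per-image differences $D_j=\widehat{\mathrm{WR}}_j-\widehat{\mathrm{RP}}_j$, which is where the $N-1$ degrees of freedom come from. It replaces $s_D^2$ by the Bernoulli plug-in $\frac{1}{M}\bigl(\mathrm{WR}(1-\mathrm{WR})+\mathrm{RP}(1-\mathrm{RP})\bigr)$, giving your $\sqrt{MN}$ form, then substitutes $\zeta$ for RP, squares, and takes the larger root. To locate that root the paper uses $f(\zeta)<0$ together with $f(1)>0$, whereas your $f(\zeta)\le 0$ alone suffices. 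Your Step 2, which the paper merely asserts as a ``conservative rule'', goes through with no boundary trouble: $2\bigl(w(1-w)+r(1-r)\bigr)+(w-r)(1-2r)=w(3-2w)+r(1-2w)$, which is positive for $w\le 1/2$ and exceeds $4w(1-w)\ge 0$ for $w>1/2$ (using $r<w$), so $T$ is strictly decreasing in RP on $[0,w)$.
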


\begin{proof}
Let $q_{\boldsymbol{\phi}}(\boldsymbol{x})
\triangleq
\arg\max_{c\in\mathcal{Y}}
p_{\boldsymbol{\phi}}(c \mid \boldsymbol{x})$ denote the (smoothed) diffusion classifier.
For each verification sample $j\in[N]$ and noise draw $i\in[M]$, define
\begin{equation}
\begin{aligned}
E_{\textup{sus}}^{(j,i)}
& \triangleq  \mathbb{I}\!\left\{
q_{\boldsymbol{\phi}}\!\left( \boldsymbol{x}_{\boldsymbol{\theta}_{\textup{sus}}+\boldsymbol{\epsilon}_k^{(i)}}^{(j)} \right)=\tilde{y}
\right\},\\
E_{\textup{ref}}^{(j,i)}
& \triangleq  \mathbb{I}\!\left\{
q_{\boldsymbol{\phi}}\!\left( \boldsymbol{x}_{\boldsymbol{\theta}_{\textup{ref}}+\boldsymbol{\epsilon}_k^{(i)}}^{(j)} \right)=\tilde{y}
\right\}.
\end{aligned}
\end{equation}
For fixed $(j,i)$, these are Bernoulli variables with means
\begin{equation}
\begin{aligned}
p_{\textup{sus}}
& \triangleq  \Pr\!\left[
q_{\boldsymbol{\phi}}\!\left(\boldsymbol{x}_{\boldsymbol{\theta}_{\textup{sus}}+\boldsymbol{\epsilon}_k}\right)=\tilde{y}
\right],\\
p_{\textup{ref}}
& \triangleq  \Pr\!\left[
q_{\boldsymbol{\phi}}\!\left(\boldsymbol{x}_{\boldsymbol{\theta}_{\textup{ref}}+\boldsymbol{\epsilon}_k}\right)=\tilde{y}
\right].
\end{aligned}
\end{equation}
For each $j$, define the noise-averaged scores
\begin{equation}
\begin{aligned}
\widehat{\mathrm{WR}}_j
& \triangleq  \frac{1}{M}\sum_{i=1}^M E_{\textup{sus}}^{(j,i)},\\
\widehat{\mathrm{RP}}_j
& \triangleq  \frac{1}{M}\sum_{i=1}^M E_{\textup{ref}}^{(j,i)}.
\end{aligned}
\end{equation}
Therefore,
\begin{equation}
\mathrm{WR}=\frac{1}{N}\sum_{j=1}^N \widehat{\mathrm{WR}}_j,
\qquad
\mathrm{RP}=\frac{1}{N}\sum_{j=1}^N \widehat{\mathrm{RP}}_j.
\end{equation}

Define paired differences $D_j  \triangleq  \widehat{\mathrm{WR}}_j-\widehat{\mathrm{RP}}_j$ and
$\bar{D} \triangleq \frac{1}{N}\sum_{j=1}^N D_j=\mathrm{WR}-\mathrm{RP}$.
The paired $t$-test uses
\begin{equation}
\begin{aligned}
T
& \triangleq  \frac{\sqrt{N}\,\bar{D}}{s_D}\approx t(N-1),\\
s_D^2
& \triangleq  \frac{1}{N-1}\sum_{j=1}^N (D_j-\bar{D})^2.
\end{aligned}
\end{equation}
Expanding $s_D^2$ gives
\begin{equation}
\begin{aligned}
s_D^2
&= \frac{1}{N-1}\sum_{j=1}^N
\Bigl[(\widehat{\mathrm{WR}}_j-\mathrm{WR})-(\widehat{\mathrm{RP}}_j-\mathrm{RP})\Bigr]^2 .
\end{aligned}
\end{equation}
Since $\widehat{\mathrm{WR}}_j$ and $\widehat{\mathrm{RP}}_j$ are averages of $M$ Bernoulli variables, we use the standard plug-in approximation
\begin{equation}
\begin{aligned}
\Var(\widehat{\mathrm{WR}}_j)
&\approx \frac{1}{M}\,\mathrm{WR}\,\overline{\mathrm{WR}},\\
\Var(\widehat{\mathrm{RP}}_j)
&\approx \frac{1}{M}\,\mathrm{RP}\,\overline{\mathrm{RP}},
\end{aligned}
\end{equation}
where $\overline{\mathrm{WR}} \triangleq 1-\mathrm{WR}$ and $\overline{\mathrm{RP}} \triangleq 1-\mathrm{RP}$.
Accordingly, we approximate the sample variance of $\{D_j\}_{j=1}^N$ by
\begin{equation}
\begin{aligned}
s_D^2
&\approx \frac{1}{M}
\left(\mathrm{WR}\,\overline{\mathrm{WR}}+\mathrm{RP}\,\overline{\mathrm{RP}}\right).
\end{aligned}
\end{equation}

To reject $H_0$ at significance level $\alpha$, we require $T>t_\alpha$, \ie,
\begin{equation}
\frac{\sqrt{N}\,(\mathrm{WR}-\mathrm{RP})}{s_D} > t_\alpha .
\end{equation}
Substituting  $s_D$ and rearranging yields
\begin{equation}
\begin{aligned}
\sqrt{MN}\,(\mathrm{WR}-\mathrm{RP})
- t_\alpha\,
\sqrt{\mathrm{WR}\,\overline{\mathrm{WR}}+\mathrm{RP}\,\overline{\mathrm{RP}}}
> 0.
\end{aligned}
\end{equation}

We next derive the closed-form threshold on $\mathrm{WR}$ under a known upper bound $\mathrm{RP}\le \zeta$.

\begin{equation}
\label{eq:cor_worstcase_M}
\begin{aligned}
\sqrt{MN}\,(\mathrm{WR}-\zeta)
- t_\alpha\,
\sqrt{\mathrm{WR}\,\overline{\mathrm{WR}}+\zeta(1-\zeta)}
> 0.
\end{aligned}
\end{equation}
Under the above conservative rule, declaring ownership is guaranteed whenever~\eqref{eq:cor_worstcase_M} holds.

Since $\sqrt{MN}>0$, we square both sides of~\eqref{eq:cor_worstcase_M} (noting that it already enforces $\mathrm{WR}>\zeta$) and obtain
\begin{equation}
\begin{aligned}
MN(\mathrm{WR}-\zeta)^2
&> t_\alpha^2\Bigl(\mathrm{WR}(1-\mathrm{WR})+\zeta(1-\zeta)\Bigr).
\end{aligned}
\end{equation}
Expanding and rearranging yields a quadratic inequality:
\begin{equation}
\label{eq:quad_wr_M}
\begin{aligned}
&\Bigl(MN+t_\alpha^2\Bigr)\mathrm{WR}^2
-\Bigl(2MN\zeta+t_\alpha^2\Bigr)\mathrm{WR} \\
&+\Bigl(MN\zeta^2-t_\alpha^2\zeta+t_\alpha^2\zeta^2\Bigr)
> 0.
\end{aligned}
\end{equation}

Let $f(\mathrm{WR})$ denote the left-hand side of~\eqref{eq:quad_wr_M}. Since $MN+t_\alpha^2>0$, $f$ is an upward parabola. Moreover,
\begin{equation}
f(\zeta)=2t_\alpha^2\zeta(\zeta-1)<0,
\qquad \text{for }\zeta\in(0,1),
\end{equation}
and
\begin{equation}
\begin{aligned}
f(1)
&=(1-\zeta)\Bigl(MN(1-\zeta)-t_\alpha^2\zeta\Bigr).
\end{aligned}
\end{equation}
In regimes where $MN(1-\zeta)>t_\alpha^2\zeta$, we have $f(1)>0$. Therefore, $f$ admits a unique root $\mathrm{WR}_2\in(\zeta,1)$, and $f(\mathrm{WR})>0$ holds iff $\mathrm{WR}>\mathrm{WR}_2$ over $\mathrm{WR}\in[\zeta,1]$.

Applying the quadratic formula to~\eqref{eq:quad_wr_M} and selecting the larger root yields
\begin{equation}
\begin{aligned}
\mathrm{WR}
&> \frac{
2MN\zeta+t_\alpha^2+\sqrt{\Gamma}
}{
2\bigl(MN+t_\alpha^2\bigr)
},
\end{aligned}
\end{equation}
where
$
\Gamma
= \bigl(2MN\zeta+t_\alpha^2\bigr)^2 
- 4\bigl(MN+t_\alpha^2\bigr)
\Bigl(MN\zeta^2-t_\alpha^2\zeta+t_\alpha^2\zeta^2\Bigr).
$
\end{proof}

\subsection{Proof of Theorem~\ref{thm:gaussian_smoothing}}
In this section, we prove Theorem~\ref{app:thm:gaussian_smoothing}, which establishes certified robustness guarantees for the ownership verification of Cert-LAS. Since our layer-adaptive allocation assigns different noise levels across layers, a uniform $\ell_2$-ball fails to capture the geometry of admissible perturbations, motivating the Mahalanobis-based neighborhood in Definition~\ref{def:ellipsoidal_neighborhood}. Before presenting the proof, we first introduce Definition~\ref{def:type_error_watermark} and Lemmas~\ref{lem1}--\ref{lem:general_robustness} as preliminaries.

\begin{definition}[Mahalanobis Ellipsoidal Neighborhood]
\label{def:ellipsoidal_neighborhood}
Let $\boldsymbol{\sigma}_k = (k\sigma_1, \ldots, k\sigma_L)$ denote the layerwise standard deviations under the Gaussian smoothing in Remark~\ref{rem:gaussian_instantiation}, and let $\|\boldsymbol{\delta}\|_{\boldsymbol{\sigma}_k}  \triangleq  \bigl(\sum_{l=1}^L \tfrac{\|\boldsymbol{\delta}^l\|_2^2}{\sigma_{k,l}^2}\bigr)^{1/2}$ be the induced Mahalanobis norm. The  ellipsoidal neighborhood of $\boldsymbol{\theta}$ with radius $R_k$ is defined as:
\begin{equation}
\mathcal{B}_{\boldsymbol{\sigma}_k}(\boldsymbol{\theta}; R_k)  \triangleq  \left\{ \boldsymbol{\theta} + \boldsymbol{\delta} \in \mathbb{R}^d  |  \|\boldsymbol{\delta}\|_{\boldsymbol{\sigma}_k} \leq R_k \right\},
\end{equation}
where $R_k$ bounds the $\boldsymbol{\sigma}_k$-weighted magnitude of admissible perturbations. Since $\boldsymbol{\sigma}_k = k \boldsymbol{\sigma}_1$, we define the normalized radius $\bar{R}  \triangleq  R_1$, so that $R_k = \bar{R}/k$ for any $k > 0$.
\end{definition}

\begin{definition}[Type-I/II Error in Model Watermark Detection]
\label{def:type_error_watermark}
Let $X$ be a random variable taking values in $\mathcal{X}$, with distribution $\mathbb{P}_0$ under the null hypothesis $H_0$ (\ie, the suspected generator $G(\cdot;\boldsymbol{\theta}_{\textup{sus}})$ 
does not contain watermark $\mathcal{W}$) and distribution $\mathbb{P}_1$ under the alternative hypothesis $H_1$ (\ie, $G(\cdot;\boldsymbol{\theta}_{\textup{sus}})$ contains $\mathcal{W}$). For a sample $\boldsymbol{x} \in \mathcal{X}$, 
a randomized test $\psi: \mathcal{X} \to [0,1]$ specifies the probability of rejecting $H_0$. The Type-I and Type-II errors are defined as follows:
\begin{itemize}
    \item \textbf{Type-I Error} ($\beta_1$): The probability of incorrectly identifying an unwatermarked model as watermarked (\ie, $H_0$ is true but rejected):
    \begin{equation}
        \beta_1(\psi; \mathbb{P}_0) = \mathbb{E}_{\mathbb{P}_0}[\psi(\boldsymbol{x})].
    \end{equation}
    
    \item \textbf{Type-II Error} ($\beta_2$): The probability of incorrectly identifying a watermarked model as unwatermarked (\ie, $H_0$ is false but accepted):
    \begin{equation}
        \beta_2(\psi; \mathbb{P}_1) = \mathbb{E}_{\mathbb{P}_1}[1 - \psi(\boldsymbol{x})].
    \end{equation}
\end{itemize}
\end{definition}
In model ownership verification, Type-I errors  mistakenly flag an unwatermarked model as infringing, while Type-II errors  allow adversarially modified models to evade detection. Since watermark detection often serves as a preliminary step before legal proceedings, provable bounds on false positives are essential for evidential admissibility. We note that the Type-I/II errors defined here are instantiated in the MOV context with $(\mathbb{P}_0, \mathbb{P}_1)$ specifying the unwatermarked and watermarked hypotheses; in the subsequent analysis (\eg, Lemmas~\ref{lem1}--\ref{lem2}), the same notation $\beta_1(\cdot)$ and $\beta_2(\cdot)$ refers to the generic Type-I/II errors from the Neyman-Pearson framework, where $(\mathbb{P}_0, \mathbb{P}_1)$ may correspond to other distribution pairs such as $(Z, Z+\boldsymbol{\delta})$ under smoothing noise.

Accordingly, inspired by the optimal likelihood ratio test $\psi^*$ established by the Neyman-Pearson lemma~\cite{neyman1933most}, we seek to maximize verification power subject to a controlled false positive rate. Let $\alpha$ denote the significance level specifying the maximum tolerable false positive rate. The optimal test $\psi^*$ then satisfies:
\begin{equation}
    \beta_1(\psi^*; \mathbb{P}_0) = \alpha, \qquad \beta_2(\psi^*; \mathbb{P}_1) = \beta_2^*(\alpha; \mathbb{P}_0, \mathbb{P}_1),
    \label{eq:np_target}
\end{equation}
where $\beta_2^*(\alpha; \mathbb{P}_0, \mathbb{P}_1) = \inf_{\psi:\, \beta_1(\psi; \mathbb{P}_0) \leq \alpha} \beta_2(\psi; \mathbb{P}_1)$.

\begin{lemma}\hspace{-4pt}\cite{weber2023rab}
    \label{lem1}
    Let $X_0$ and $X_1$ be two random variables with densities $f_0$ and $f_1$ with respect to a measure $\mu$ and denote by $\Lambda$ the likelihood ratio $\Lambda(x) = f_1(x) / f_0(x)$. For $p\in[0,\,1]$ let $t_p \triangleq \inf\{t\geq 0\colon \, \mathbb{P}(\Lambda(X_0) \leq t) \geq p\}$. Then it holds that
    \begin{equation}
        \label{eq:sandwich}
        \mathbb{P}\left(\Lambda(X_0) < t_p\right) \leq p \leq \mathbb{P}(\Lambda(X_0) \leq t_p).
    \end{equation}
\end{lemma}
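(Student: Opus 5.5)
The plan is to reduce the statement to two standard facts about the cumulative distribution function $F(t) \triangleq \mathbb{P}(\Lambda(X_0) \le t)$, $t\in\mathbb{R}$: it is nondecreasing and right-continuous. The quantity $t_p$ is then just a generalized (left) quantile of $F$ restricted to $[0,\infty)$. The right inequality will come from right-continuity at $t_p$, and the left inequality from the defining infimum applied to points strictly below $t_p$.

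First I will check that $t_p$ is well defined and finite. Since $X_0$ has density $f_0$, the set $\{f_0 = 0\}$ has $X_0$-probability zero. Hence $\Lambda(X_0) = f_1(X_0)/f_0(X_0)$ is almost surely a finite nonnegative real number, and $F(t)\to 1$ as $t\to\infty$. So for every $p\in[0,1]$ the set $\{t\ge 0 : F(t)\ge p\}$ is nonempty, and it is bounded below by $0$, so $t_p\in[0,\infty)$. The case $p=1$ needs no separate treatment, because $F(t)\to 1$ still makes the set nonempty via right-continuity at the infimum. This is the step I expect to need the most care: making sure the null set where $f_0=0$ (where $\Lambda$ could be $+\infty$ or undefined) is handled explicitly, so that "nonempty" is justified.

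For the right inequality $p \le \mathbb{P}(\Lambda(X_0)\le t_p)$, I will take a sequence $t_n \downarrow t_p$ with $t_n$ in the defining set, so that $F(t_n)\ge p$. Right-continuity of $F$ (continuity from above of probability measures on the decreasing events $\{\Lambda\le t_n\}$) then gives $F(t_p)=\lim_n F(t_n)\ge p$.

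For the left inequality, I split into two cases.
\begin{itemize}
\item If $t_p = 0$, then $\mathbb{P}(\Lambda(X_0)<0)=0\le p$, because $\Lambda\ge 0$.
\item If $t_p>0$, then every $t\in[0,t_p)$ lies outside the defining set, so $F(t)<p$. The events $\{\Lambda(X_0)\le t_p - 1/n\}$ increase to $\{\Lambda(X_0)<t_p\}$, so continuity from below gives $\mathbb{P}(\Lambda(X_0)<t_p)=\lim_n F(t_p-1/n)\le p$.
\end{itemize}
Combining the two inequalities yields \eqref{eq:sandwich}.
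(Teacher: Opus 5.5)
The paper gives no proof of this lemma; it imports it from \cite{weber2023rab}. Your argument is the standard generalized-quantile proof of this fact: right-continuity of $F$ at $t_p$ for the upper bound, and the infimum property plus continuity from below for the lower bound. For $p<1$ your proof is complete, including the handling of the null set $\{f_0=0\}$ and the case $t_p=0$.

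The one step that fails is your claim that $p=1$ needs no separate treatment because $F(t)\to 1$ makes the defining set nonempty ``via right-continuity at the infimum.'' This is circular. $F(t)\to 1$ does not give $F(t)=1$ for any finite $t$, and if the set is empty there is no infimum at which to use right-continuity.

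This happens exactly in the setting where the paper applies the lemma. For $Z=\boldsymbol{\theta}+\boldsymbol{\epsilon}_k$ Gaussian and $\boldsymbol{\delta}\neq\mathbf{0}$, $\log\Lambda(Z)$ is a nondegenerate Gaussian. So $\Lambda(Z)$ is log-normal and $F(t)<1$ for every finite $t\ge 0$. Then $\{t\ge 0: F(t)\ge 1\}=\emptyset$ and $t_1=\inf\emptyset=+\infty$. Your assertion $t_p\in[0,\infty)$ is therefore false, and the sequence $t_n\downarrow t_p$ inside the defining set, which your right-inequality argument needs, does not exist.

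The repair is short: treat $p=1$ separately.
\begin{itemize}
\item If $F(t)=1$ for some finite $t$, the set is nonempty and your argument applies verbatim.
\item Otherwise, adopt the convention $\inf\emptyset=+\infty$. Both inequalities then hold trivially: $\mathbb{P}(\Lambda(X_0)<+\infty)\le 1=p$ and $\mathbb{P}(\Lambda(X_0)\le +\infty)=1\ge p$.
\end{itemize}
State this convention explicitly, because the lemma as written (and its later use in tests of the form $\Lambda(z)>t_p$) implicitly treats $t_p$ as a real threshold.
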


\begin{lemma}\hspace{-4pt}\cite{weber2023rab}
    \label{lem2}
    Let $X_0$ and $X_1$ be random variables taking values in $\mathcal{Z}$ and with probability density functions $f_0$ and $f_1$ with respect to a measure $\mu$. Let $\varphi^*$ be a likelihood ratio test for testing the null $X_0$ against the alternative $X_1$. Then for any deterministic function $\varphi\colon\mathcal{Z}\to [0,1]$ the following implications hold:
    \begin{enumerate}
        \item[i)] $\beta_1(\varphi) \geq 1 - \beta_1(\varphi^*) \Rightarrow 1 - \beta_2(\varphi) \geq \beta_2(\varphi^*)$
        \item[ii)] $\beta_1(\varphi) \leq \beta_1(\varphi^*) \Rightarrow \beta_2(\varphi) \geq \beta_2(\varphi^*)$.
    \end{enumerate}
\end{lemma}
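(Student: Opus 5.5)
The plan is the classical Neyman--Pearson argument for part~ii), and then a reduction of part~i) to part~ii) by complementing the test. First I fix the form of the likelihood ratio test. Without loss of generality,
\[
\varphi^*(z)=\mathbb{I}\{\Lambda(z)>t\}+\gamma\,\mathbb{I}\{\Lambda(z)=t\},
\qquad t\in[0,\infty],\ \gamma\in[0,1].
\]
Here $\Lambda=f_1/f_0$, with the convention $\Lambda=\infty$ where $f_0=0<f_1$. This is exactly what ``likelihood ratio test'' means in Lemma~\ref{lem1}, whose threshold $t_p$ is what one would plug in. Recall that $\beta_1(\varphi)=\int \varphi f_0\,d\mu$ and $1-\beta_2(\varphi)=\int \varphi f_1\,d\mu$.

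\textbf{Part ii).} The key step is the pointwise inequality
\[
\bigl(\varphi^*(z)-\varphi(z)\bigr)\bigl(f_1(z)-t f_0(z)\bigr)\ge 0 \quad\text{for all } z .
\]
It holds because:
\begin{itemize}
\item on $\{f_1>t f_0\}$ we have $\varphi^*=1\ge\varphi$;
\item on $\{f_1<t f_0\}$ we have $\varphi^*=0\le\varphi$;
\item on $\{f_1=t f_0\}$ the second factor vanishes.
\end{itemize}
Integrating against $\mu$ gives
\[
\bigl(1-\beta_2(\varphi^*)\bigr)-\bigl(1-\beta_2(\varphi)\bigr)\ \ge\ t\bigl(\beta_1(\varphi^*)-\beta_1(\varphi)\bigr).
\]
Since $t\ge 0$ and $\beta_1(\varphi)\le\beta_1(\varphi^*)$, the right side is nonnegative, so $\beta_2(\varphi)\ge\beta_2(\varphi^*)$.

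The case $t=\infty$ needs separate care. Then $\varphi^*$ rejects only where $f_0=0$, so $\beta_1(\varphi^*)=0$. The hypothesis forces $\varphi=0$ $f_0$-a.e., and the conclusion $\int\varphi f_1\le\int_{\{f_0=0\}} f_1$ follows directly.

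\textbf{Part i).} I apply part~ii) to the complementary test $\psi\triangleq 1-\varphi$, which is again a deterministic map into $[0,1]$. Its Type-I error is $\beta_1(\psi)=1-\beta_1(\varphi)$, and the hypothesis $\beta_1(\varphi)\ge 1-\beta_1(\varphi^*)$ is equivalent to $\beta_1(\psi)\le\beta_1(\varphi^*)$. Part~ii) then yields $\beta_2(\psi)\ge\beta_2(\varphi^*)$. Finally,
\[
\beta_2(\psi)=\mathbb{E}_{X_1}[1-\psi]=\mathbb{E}_{X_1}[\varphi]=1-\beta_2(\varphi),
\]
which is precisely the claim.

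\textbf{Expected obstacle.} The only delicate point is the measure-theoretic bookkeeping. This means the convention for $\Lambda$ where $f_0=0$ (or both densities vanish), the degenerate thresholds $t\in\{0,\infty\}$, and the randomization $\gamma$ on the boundary set $\{\Lambda=t\}$. I would handle these by writing the pointwise inequality directly in terms of $f_1-tf_0$ rather than $\Lambda$, so that division by zero never occurs. The case $t=\infty$ is then treated separately as described above.
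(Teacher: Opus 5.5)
The paper gives no proof of this lemma; it is imported from \cite{weber2023rab}. Your main line is the standard Neyman--Pearson argument behind that citation, and it is correct for every finite threshold $t\in[0,\infty)$. The pointwise inequality $(\varphi^*-\varphi)(f_1-tf_0)\ge 0$ holds: the set $\{f_1<tf_0\}$ forces $f_0>0$, and on $\{f_0=f_1=0\}$ the second factor vanishes. Integrating gives $\beta_2(\varphi)-\beta_2(\varphi^*)\ge t\,(\beta_1(\varphi^*)-\beta_1(\varphi))\ge 0$. Your reduction of i) to ii) via $\psi=1-\varphi$ is also exactly right.

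The step that fails is your $t=\infty$ paragraph. In your parametrization, $\varphi^*=\gamma\,\mathbb{I}\{f_0=0<f_1\}$ there, so ii) requires $\int\varphi f_1\,d\mu\le\gamma\int_{\{f_0=0\}}f_1\,d\mu$. The bound you derive, $\int\varphi f_1\,d\mu\le\int_{\{f_0=0\}}f_1\,d\mu$, is weaker unless $\gamma=1$ or $\mathbb{P}(f_0(X_1)=0)=0$.

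A better argument cannot close this gap, because the implications are then false. Suppose $\gamma<1$ and $\mathbb{P}(f_0(X_1)=0)>0$. Then $\varphi=\mathbb{I}\{f_0=0\}$ has $\beta_1(\varphi)=0=\beta_1(\varphi^*)$ but $\beta_2(\varphi)<\beta_2(\varphi^*)$, violating ii). Likewise $\varphi=\mathbb{I}\{f_0>0\}$ satisfies the hypothesis of i) but violates its conclusion.

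Since you propose to plug in the thresholds $t_p$ of Lemma~\ref{lem1}, note that this case is reachable. For $p=1$ and $\Lambda(X_0)$ essentially unbounded, the test $\varphi_p$ from the proof of Lemma~\ref{lem:general_robustness} has $t_1=\infty$ and $q_1=0$, i.e.\ $\varphi_1\equiv 0$.

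The fix is to state explicitly what counts as a likelihood ratio test. Any one of the following suffices:
\begin{itemize}
\item require $t<\infty$;
\item require $\gamma=1$ when $t=\infty$, so that $\int\varphi^*f_1\,d\mu=\int_{\{f_0=0\}}f_1\,d\mu$ and your bound is exactly the conclusion;
\item assume $\mathbb{P}(f_0(X_1)=0)=0$, which holds in the paper's Gaussian instantiation where $f_0>0$ everywhere.
\end{itemize}
With that caveat, your proof is complete.
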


Building on Definition~\ref{def:type_error_watermark}, Eq.~\eqref{eq:np_target}, and Lemmas~\ref{lem1}--\ref{lem2}~\cite{weber2023rab}, and inspired by \citet{kumar2020certifying}, which certifies the stability of continuous probabilistic outputs rather than discrete Top-1 predictions, we derive a general robustness condition for  Cert-LAS.

\begin{lemma}[General Layer-Adaptive Robustness Condition]
\label{lem:general_robustness}
Consider layer-adaptive smoothing noise $\mathcal{E}_k$ and perturbation $\boldsymbol{\delta} = (\boldsymbol{\delta}^1, \ldots, \boldsymbol{\delta}^L)$. For probability thresholds $a \leq s_1 \leq \cdots \leq s_m \leq b$, let $P_{s_j}(\boldsymbol{\theta}) \leq \mathbb{P}_{\boldsymbol{\epsilon}_k \sim \mathcal{E}_k}\left[\frac{1}{N} \sum_{i=1}^N \mathbb{I}\left\{q_{\boldsymbol{\phi}}\left(
\boldsymbol{x}_{\boldsymbol{\theta}+\boldsymbol{\epsilon}_k}^{(i)}\right)=\tilde{y}\right\} \geq s_j\right]$. Diffusion model ownership verification is certified robust against $\boldsymbol{\delta}$ if
{\small
\begin{equation}
\label{eq:general_robustness_condition}
\begin{aligned}
&a + (s_1 - a)\, \beta_2\!\left(1 - P_{s_1}(\boldsymbol{\theta}); \mathbb{P}_0, \mathbb{P}_1\right) \\
&+ \sum_{j=2}^{m} (s_j - s_{j-1})\, \beta_2\!\left(1 - P_{s_j}(\boldsymbol{\theta}); \mathbb{P}_0, \mathbb{P}_1\right) > \tau_{\alpha, \zeta},
\end{aligned}
\end{equation}
}
where $\mathbb{P}_0$ and $\mathbb{P}_1$ are the distributions under $H_0: Z  \triangleq  \boldsymbol{\theta} + \boldsymbol{\epsilon}_k \sim \mathbb{P}_0$ against $H_1: Z + \boldsymbol{\delta} \sim \mathbb{P}_1$, and $\tau_{\alpha, \zeta}$ is the verification threshold in Theorem~\ref{app:thm:threshold}.
\end{lemma}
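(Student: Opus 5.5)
We prove Lemma~\ref{lem:general_robustness} by lower-bounding the expected verification score of the perturbed model, $\mathbb{E}[S(\boldsymbol{\theta}+\boldsymbol{\delta}+\boldsymbol{\epsilon}_k)]$, in the style of \citet{kumar2020certifying}. Here
$S(\boldsymbol{z}) \triangleq \frac{1}{N}\sum_{i=1}^N \mathbb{I}\{q_{\boldsymbol{\phi}}(\boldsymbol{x}^{(i)}_{\boldsymbol{z}})=\tilde{y}\}\in[0,1]$
is the (sample-averaged) score. The first step is to represent this expectation through the layer-cake formula. Since $S$ takes values in $[a,b]$, which we may take to contain the support, and $s_0 \triangleq a$, we write
\[
\mathbb{E}[S(Z+\boldsymbol{\delta})] \;\ge\; a + \sum_{j=1}^{m}(s_j-s_{j-1})\,\mathbb{P}\bigl[S(Z+\boldsymbol{\delta})\ge s_j\bigr].
\]
This bound holds because, for nonnegative $S-a$, $\mathbb{E}[S-a]=\int_a^b \mathbb{P}[S\ge s]\,ds$. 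Dropping the tail above $s_m$ and using monotonicity of $s\mapsto \mathbb{P}[S\ge s]$ on each interval $(s_{j-1},s_j]$ gives the lower Riemann-type sum.

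Next, each term $\mathbb{P}[S(Z+\boldsymbol{\delta})\ge s_j]$ is bounded through Neyman--Pearson. For fixed $j$, consider the deterministic test $\varphi_j(\boldsymbol{z})\triangleq \mathbb{I}\{S(\boldsymbol{z})\ge s_j\}$ for $H_0: Z\sim\mathbb{P}_0$ against $H_1: Z+\boldsymbol{\delta}\sim\mathbb{P}_1$. Its Type-I error is $\beta_1(\varphi_j)=\mathbb{P}[S(Z)\ge s_j]\ge P_{s_j}(\boldsymbol{\theta})$ by hypothesis. Let $\varphi^*$ be a likelihood ratio test with $\beta_1(\varphi^*)=1-P_{s_j}(\boldsymbol{\theta})$; its existence is guaranteed by Lemma~\ref{lem1}, randomizing on the boundary $\{\Lambda=t_p\}$ if needed. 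Then $\beta_1(\varphi_j)\ge 1-\beta_1(\varphi^*)$, so Lemma~\ref{lem2}(i) yields
\[
\mathbb{P}[S(Z+\boldsymbol{\delta})\ge s_j]=1-\beta_2(\varphi_j)\;\ge\;\beta_2(\varphi^*)=\beta_2\bigl(1-P_{s_j}(\boldsymbol{\theta});\mathbb{P}_0,\mathbb{P}_1\bigr).
\]
Substituting this into the sum gives that the left side of Eq.~\eqref{eq:general_robustness_condition} lower-bounds the expected WR of the perturbed model.

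Finally, we connect this bound to verification. If the bound exceeds $\tau_{\alpha,\zeta}$, then the WR of $\boldsymbol{\theta}+\boldsymbol{\delta}$ under smoothing exceeds the closed-form threshold. Theorem~\ref{app:thm:threshold} then implies that $H_0$ is rejected whenever $\mathrm{RP}\le\zeta$, i.e., ownership is verified.

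The main obstacle is this last link. The lemma controls an expectation, whereas Theorem~\ref{app:thm:threshold} is stated for the empirical WR. I would treat WR as the population smoothed quantity, consistent with the plug-in approximation used in that proof, and remark that finite-$M$ estimation error is absorbed into the confidence bounds $P_{s_j}$. A secondary technicality is the discreteness of $S$, which requires randomized LRTs to hit the exact level $1-P_{s_j}$. Lemma~\ref{lem1} handles this, and Lemma~\ref{lem2} as quoted is stated for deterministic $\varphi$, which suffices since $\varphi_j$ is deterministic.
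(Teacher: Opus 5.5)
Your proposal is correct and matches the paper's proof essentially step for step. The paper likewise builds the randomized likelihood ratio test of level $1-P_{s_j}(\boldsymbol{\theta})$ from Lemma~\ref{lem1}, applies Lemma~\ref{lem2}(i) to the indicator test $\mathbb{I}\{S\ge s_j\}$ to get $\mathbb{P}\bigl[S(Z+\boldsymbol{\delta})\ge s_j\bigr]\ge\beta_2\bigl(1-P_{s_j}(\boldsymbol{\theta});\mathbb{P}_0,\mathbb{P}_1\bigr)$, lower-bounds the expected perturbed score by the same Riemann sum (written as a bin decomposition over $[a,s_1),[s_1,s_2),\ldots,[s_m,b]$, which is your layer-cake bound after summation by parts), and closes the link to Theorem~\ref{app:thm:threshold} by identifying the population smoothed score with WR, just as you do but without flagging the issue.
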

\begin{proof}
Let $Z \triangleq \boldsymbol{\theta}+\boldsymbol{\epsilon}_k\sim\mathbb{P}_0$ and $Z'  \triangleq  Z+\boldsymbol{\delta}\sim\mathbb{P}_1$, and denote by
$\Lambda(z) \triangleq \frac{f_{Z'}(z)}{f_Z(z)}$ the likelihood ratio between $\mathbb{P}_1$ and $\mathbb{P}_0$.
For any $p\in[0,1]$, define $t_p \triangleq \inf\{t\ge 0:\mathbb{P}(\Lambda(Z)\le t)\ge p\}$ and
\begin{equation}
q_p=
\begin{cases}
0, & \text{if } \mathbb{P}(\Lambda(Z)=t_p)=0,\\
\frac{\mathbb{P}(\Lambda(Z)\le t_p)-p}{\mathbb{P}(\Lambda(Z)=t_p)}, & \text{otherwise}.
\end{cases}
\end{equation}
By Lemma~\ref{lem1}, we have $\mathbb{P}(\Lambda(Z)\le t_p)\ge p$ and $\mathbb{P}(\Lambda(Z)<t_p)\le p$, hence $q_p\in[0,1]$.
Define the likelihood ratio test
\begin{equation}
\varphi_p(z)=
\begin{cases}
1, & \text{if } \Lambda(z)>t_p,\\
q_p, & \text{if } \Lambda(z)=t_p,\\
0, & \text{if } \Lambda(z)<t_p.
\end{cases}
\end{equation}
Then $\varphi_p$ has type-I error $\beta_1(\varphi_p)=\mathbb{P}_0(\varphi_p=1)=1-p$.

Fix thresholds $a\le s_1\le\cdots\le s_m\le b$.
For each $j\in[m]$, let $\varphi_{s_j}\equiv \varphi_{P_{s_j}(\boldsymbol{\theta})}$ so that
\begin{equation}
\label{eq:typeI_level}
\beta_1(\varphi_{s_j})=1-P_{s_j}(\boldsymbol{\theta}).
\end{equation}
For each threshold $s\in[0,1]$, define the events
\begin{equation}
\begin{aligned}
A_{s}
& \triangleq \Biggl\{
\frac{1}{N}\sum_{i=1}^N
\mathbb{I}\!\left\{
q_{\boldsymbol{\phi}}\!\left(\boldsymbol{x}_{\boldsymbol{\theta}+\boldsymbol{\epsilon}_k}^{(i)}\right)=\tilde y
\right\}
\ge s
\Biggr\},\\
B_{s}
& \triangleq \Biggl\{
\frac{1}{N}\sum_{i=1}^N
\mathbb{I}\!\left\{
q_{\boldsymbol{\phi}}\!\left(\boldsymbol{x}_{\boldsymbol{\theta}+\boldsymbol{\delta}+\boldsymbol{\epsilon}_k}^{(i)}\right)=\tilde y
\right\}
\ge s
\Biggr\}.
\end{aligned}
\end{equation}
By the definition of $P_{s}(\boldsymbol{\theta})$, we have
\begin{equation}
\label{eq:A_prob_lower}
\begin{aligned}
\mathbb{P}_0(A_{s_j})
&=\mathbb{P}_{\boldsymbol{\epsilon}_k\sim\mathcal{E}_k}(A_{s_j})
\ge P_{s_j}(\boldsymbol{\theta}),
\qquad \forall j\in[m].
\end{aligned}
\end{equation}
Consider the deterministic function $\varphi(z) \triangleq \mathbb{I}\{A_{s_j}\}$ and the likelihood ratio test $\varphi^*\equiv \varphi_{s_j}$.
Since $\beta_1(\varphi)=\mathbb{P}_0(\varphi=1)=\mathbb{P}_0(A_{s_j})
\ge P_{s_j}(\boldsymbol{\theta})=1-\beta_1(\varphi_{s_j})$ and $\varphi^*$ yields
\begin{equation}
\label{eq:B_prob_lower}
\begin{aligned}
\mathbb{P}_1(B_{s_j})
&=1-\beta_2(\varphi)
\ge \beta_2(\varphi_{s_j}) \\
&= \beta_2\!\left(1-P_{s_j}(\boldsymbol{\theta});\,\mathbb{P}_0,\mathbb{P}_1\right),
\forall j\in[m].
\end{aligned}
\end{equation}

Next, define the random variable
\begin{equation}
\begin{aligned}
U
& \triangleq \frac{1}{N}\sum_{i=1}^N
\mathbb{I}\!\left\{
q_{\boldsymbol{\phi}}\!\left(\boldsymbol{x}_{\boldsymbol{\theta}+\boldsymbol{\delta}+\boldsymbol{\epsilon}_k}^{(i)}\right)=\tilde y 
\right\}\in[a, b].
\end{aligned}
\end{equation}
We lower bound $\mathbb{E}_{\mathbb{P}_1}[U]$ by a Riemann-sum decomposition over the bins
$[a,s_1),[s_1,s_2),\ldots,[s_m,b]$.
Using $U\ge a$ on $B_{s_1}^c$, $U\ge s_1$ on $B_{s_1}\setminus B_{s_2}$, \ldots, and $U\ge s_m$ on $B_{s_m}$, taking expectations under $\mathbb{P}_1$ yields
\begin{equation}
\begin{aligned}
\mathbb{E}_{\mathbb{P}_1}[U]
&\ge a
+ (s_1-a)\mathbb{P}_1(B_{s_1})
+ \sum_{j=2}^{m} (s_j-s_{j-1})\mathbb{P}_1(B_{s_j}).
\end{aligned}
\end{equation}

Substituting the bound~\eqref{eq:B_prob_lower} into the above yields
\begin{equation}
\label{eq:lower_bound_EU}
\begin{aligned}
\mathbb{E}_{\mathbb{P}_1}[U]
&\ge a
+ (s_1-a)\,
\beta_2\!\left(1-P_{s_1}(\boldsymbol{\theta});\,\mathbb{P}_0,\mathbb{P}_1\right) \\
&\quad + \sum_{j=2}^{m} (s_j-s_{j-1})\,
\beta_2\!\left(1-P_{s_j}(\boldsymbol{\theta});\,\mathbb{P}_0,\mathbb{P}_1\right).
\end{aligned}
\end{equation}

Finally, ownership verification under perturbation $\boldsymbol{\delta}$ follows if the watermark robustness exceeds the verification threshold
$\tau_{\alpha,\zeta}$ in Theorem~\ref{app:thm:threshold}.
Therefore, the  condition
\begin{equation}
\begin{aligned}
&\, a + (s_1-a)\,
\beta_2\!\left(1-P_{s_1}(\boldsymbol{\theta});\,\mathbb{P}_0,\mathbb{P}_1\right) \\
& + \sum_{j=2}^{m} (s_j-s_{j-1})\,
\beta_2\!\left(1-P_{s_j}(\boldsymbol{\theta});\,\mathbb{P}_0,\mathbb{P}_1\right)
> \tau_{\alpha,\zeta}
\end{aligned}
\end{equation}
implies $\mathbb{E}_{\mathbb{P}_1}[U]>\tau_{\alpha,\zeta}$ and thus guarantees successful verification at significance level $\alpha$.
\end{proof}

The general robustness condition in Lemma~\ref{lem:general_robustness} 
applies to any layer-adaptive smoothing distribution. We now instantiate it with Gaussian smoothing.

\begin{theorem}[Certified Radius under Layer-Adaptive Gaussian Smoothing]
\label{app:thm:gaussian_smoothing}
Consider Gaussian smoothing noise $\mathcal{E}_k = \mathcal{N}(\mathbf{0}, \boldsymbol{\Sigma}_k)$ and the normalized radius $\bar{R}$ as in Definition~\ref{def:ellipsoidal_neighborhood}. For probability thresholds $a \leq s_1 \leq \cdots \leq s_m \leq b$, let $P_{s_j}(\boldsymbol{\theta}) \leq \mathbb{P}_{\boldsymbol{\epsilon}_k \sim \mathcal{E}_k}\left[\frac{1}{N} \sum_{i=1}^N \mathbb{I}\left\{q_{\boldsymbol{\phi}}\left(\boldsymbol{x}_{\boldsymbol{\theta}+\boldsymbol{\epsilon}_k}^{(i)}\right)=\tilde{y}\right\} \geq s_j\right]$. For any perturbation $\boldsymbol{\delta} \in \mathcal{B}_{\boldsymbol{\sigma}_k}(\boldsymbol{\theta}; R_k)$, diffusion model ownership verification with layer-adaptive noise is guaranteed if $\bar{R} \leq R^*$, which is obtained by solving
{\small
\begin{equation}
\label{app:eq:certification_condition}
\begin{aligned}
&a + (s_{1}-a)\,\Phi\!\left(\Phi^{-1}\!\left(P_{s_{1}}(\boldsymbol{\theta})\right)-\frac{\bar{R}}{k}\right) \\
&+ \sum_{j=2}^{m}(s_{j}-s_{j-1})\,\Phi\!\left(\Phi^{-1}\!\left(P_{s_{j}}(\boldsymbol{\theta})\right)-\frac{\bar{R}}{k}\right) > \tau_{\alpha,\zeta},
\end{aligned}
\end{equation}
}
where $\Phi$ is the standard Gaussian CDF, $\alpha$ is the significance level and $\tau_{\alpha,\zeta}=\frac{2MN\zeta + t_\alpha^2 + \sqrt{\Gamma}}{2\bigl(MN + t_\alpha^2\bigr)}$ is the verification threshold in Theorem~\ref{app:thm:threshold}.
\end{theorem}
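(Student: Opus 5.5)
The plan is to instantiate Lemma~\ref{lem:general_robustness} with the Gaussian pair $\mathbb{P}_0=\mathcal{N}(\boldsymbol{\theta},\boldsymbol{\Sigma}_k)$ and $\mathbb{P}_1=\mathcal{N}(\boldsymbol{\theta}+\boldsymbol{\delta},\boldsymbol{\Sigma}_k)$. We compute the optimal Type-II quantity $\beta_2(1-p;\mathbb{P}_0,\mathbb{P}_1)$ in closed form, then use monotonicity in the Mahalanobis radius to pass from a fixed $\boldsymbol{\delta}$ to the whole ellipsoid $\mathcal{B}_{\boldsymbol{\sigma}_k}(\boldsymbol{\theta};R_k)$.

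\textbf{Step 1 (whitening).} Apply the linear map $z\mapsto\boldsymbol{\Sigma}_k^{-1/2}(z-\boldsymbol{\theta})$, which acts blockwise as division of layer $l$ by $k\sigma_l$. This sends $\mathbb{P}_0$ to $\mathcal{N}(\mathbf{0},\mathbf{I})$ and $\mathbb{P}_1$ to $\mathcal{N}(\boldsymbol{\mu},\mathbf{I})$ with $\boldsymbol{\mu}=\boldsymbol{\Sigma}_k^{-1/2}\boldsymbol{\delta}$. By Definition~\ref{def:ellipsoidal_neighborhood}, $\|\boldsymbol{\mu}\|_2=\|\boldsymbol{\delta}\|_{\boldsymbol{\sigma}_k}$. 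Likelihood-ratio tests and their error probabilities are invariant under this bijection, so it suffices to treat isotropic Gaussians.

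\textbf{Step 2 (Neyman--Pearson computation).} For isotropic Gaussians, $\log\Lambda(z)=\boldsymbol{\mu}^\top z-\tfrac12\|\boldsymbol{\mu}\|_2^2$ is continuous in law, so $q_p=0$ and the test $\varphi_p$ from the proof of Lemma~\ref{lem:general_robustness} thresholds $\boldsymbol{\mu}^\top z/\|\boldsymbol{\mu}\|_2$. Under $\mathbb{P}_0$ this statistic is $\mathcal{N}(0,1)$, and under $\mathbb{P}_1$ it is $\mathcal{N}(\|\boldsymbol{\mu}\|_2,1)$.
\begin{itemize}
\item Imposing Type-I error $1-p$ places the threshold at $\Phi^{-1}(p)$.
\item The resulting Type-II quantity, i.e., the rejection probability under $\mathbb{P}_1$ used in Eq.~\eqref{eq:B_prob_lower}, is then
\[
\beta_2(1-p;\mathbb{P}_0,\mathbb{P}_1)=\Phi\bigl(\Phi^{-1}(p)-\|\boldsymbol{\delta}\|_{\boldsymbol{\sigma}_k}\bigr).
\]
\end{itemize}
Here I must be careful with the sign convention of the "reject" event, so that the quantity matches $\mathbb{P}_1(B_{s_j})$ in the lemma's proof. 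This is the classical Cohen et al.\ computation, so I would follow it directly; a sign flip would give $\Phi(-\Phi^{-1}(p)+\cdot)$.

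\textbf{Step 3 (radius substitution and monotonicity).} Since $\boldsymbol{\sigma}_k=k\boldsymbol{\sigma}_1$, we have $\|\boldsymbol{\delta}\|_{\boldsymbol{\sigma}_k}=\|\boldsymbol{\delta}\|_{\boldsymbol{\sigma}_1}/k$. For $\boldsymbol{\delta}$ in the ellipsoid with normalized radius $\bar{R}$, this gives $\|\boldsymbol{\delta}\|_{\boldsymbol{\sigma}_k}\le\bar R/k$. I expect this normalization bookkeeping to be the main obstacle, since the paper's $R_k=\bar R/k$ convention must be reconciled with the statement's use of $\bar R/k$ inside $\Phi$.

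The remaining argument then goes as follows:
\begin{itemize}
\item Each term $\Phi(\Phi^{-1}(P_{s_j})-r)$ is decreasing in $r$.
\item The weights $s_1-a$ and $s_j-s_{j-1}$ are nonnegative.
\item Hence the left side of Eq.~\eqref{eq:general_robustness_condition} is bounded below by the same expression evaluated at $r=\bar R/k$.
\end{itemize}
So if Eq.~\eqref{app:eq:certification_condition} holds at $\bar R$, Lemma~\ref{lem:general_robustness} gives $\mathbb{E}_{\mathbb{P}_1}[U]>\tau_{\alpha,\zeta}$ uniformly over the ellipsoid. Theorem~\ref{app:thm:threshold} then yields verification at level $\alpha$.

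Finally, $R^*$ is defined as the supremum of $\bar R$ satisfying the strict inequality. The left side is continuous and strictly decreasing in $\bar R$, so $R^*$ is the unique root obtained by solving for equality (or $0$ if the inequality fails already at $\bar R=0$), and every $\bar R\le R^*$ is certified.
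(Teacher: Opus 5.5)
Your proposal is correct and follows essentially the paper's route: you instantiate Lemma~\ref{lem:general_robustness} with the Gaussian pair, compute the Neyman--Pearson Type-II error $\Phi(\Phi^{-1}(p)-\|\boldsymbol{\delta}\|_{\boldsymbol{\sigma}_k})$ (your whitening is just the paper's Mahalanobis log-likelihood-ratio computation in standardized coordinates), then use $\|\boldsymbol{\delta}\|_{\boldsymbol{\sigma}_k}\le R_k=\bar R/k$ and the monotonicity of each term. Two small remarks: your final formula and sign are right, but it is the likelihood-ratio test's \emph{non-rejection} probability under $\mathbb{P}_1$ (the half-space where $\Lambda$ is small) that lower-bounds $\mathbb{P}_1(B_{s_j})$, not its rejection probability as you label it; and at the exact root $\bar R=R^*$ the strict inequality fails, so the certificate really covers $\bar R<R^*$---a boundary looseness that the paper's own statement and proof share.
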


\begin{proof}[Proof of Theorem~\ref{app:thm:gaussian_smoothing}]
We instantiate Lemma~\ref{lem:general_robustness} for the Gaussian layer-adaptive smoothing noise
$\mathcal{E}_k=\mathcal{N}(\mathbf{0},\boldsymbol{\Sigma}_k)$ and derive the certified condition as follows.
Let
\begin{equation}
Z \triangleq \boldsymbol{\theta}+\boldsymbol{\epsilon}_k,\qquad 
Z'  \triangleq  Z+\boldsymbol{\delta},
\end{equation}
where $\boldsymbol{\epsilon}_k\sim\mathcal{N}(\mathbf{0},\boldsymbol{\Sigma}_k)$ and
\begin{equation}
\boldsymbol{\Sigma}_k
=\mathrm{diag}\!\bigl(\sigma_{k,1}^2\mathbf{I}_{d_1},\ldots,\sigma_{k,L}^2\mathbf{I}_{d_L}\bigr),
\qquad
\sigma_{k,l}  \triangleq  k\sigma_l.
\end{equation}
Denote by $\mathbb{P}_0$ and $\mathbb{P}_1$ the distributions of $Z$ and $Z'$ under the layer-adaptive smoothing, respectively.

For $z\in\mathbb{R}^{\sum_l d_l}$, the likelihood ratio $\Lambda(z) \triangleq \frac{f_{Z'}(z)}{f_Z(z)}$ satisfies
\begin{equation}
\begin{aligned}
\log \Lambda(z)
&=\left\langle z-\boldsymbol{\theta},\,\boldsymbol{\delta}\right\rangle_{\boldsymbol{\Sigma}_k^{-1}}
-\frac{1}{2}\left\langle \boldsymbol{\delta},\,\boldsymbol{\delta}\right\rangle_{\boldsymbol{\Sigma}_k^{-1}},\\
\langle a,b\rangle_{\boldsymbol{\Sigma}_k^{-1}}
& \triangleq a^\top \boldsymbol{\Sigma}_k^{-1}b.
\end{aligned}
\end{equation}
By definition of $\boldsymbol{\Sigma}_k$, we have
\begin{equation}
\left\langle \boldsymbol{\delta},\,\boldsymbol{\delta}\right\rangle_{\boldsymbol{\Sigma}_k^{-1}}
=\sum_{l=1}^{L}\frac{\|\boldsymbol{\delta}^l\|_2^2}{\sigma_{k,l}^2}
=\sum_{l=1}^{L}\frac{\|\boldsymbol{\delta}^l\|_2^2}{(k\sigma_l)^2}
=\|\boldsymbol{\delta}\|_{\boldsymbol{\sigma}_k}^2.
\end{equation}
Since the Gaussian distribution is  continuous, the likelihood ratio test for testing $\mathbb{P}_0$ against $\mathbb{P}_1$ takes the form
\begin{equation}
\varphi_t(z)=\mathbb{I}\{\Lambda(z)\ge t\}.
\end{equation}
For any $p\in[0,1]$, choose $t_p$ such that the test $\varphi_{t_p}$ attains type-I error $1-p$ under $\mathbb{P}_0$.
Then
\begin{equation}
t_p
=\exp\!\left(\Phi^{-1}(p)\,\|\boldsymbol{\delta}\|_{\boldsymbol{\sigma}_k}
-\frac{1}{2}\|\boldsymbol{\delta}\|_{\boldsymbol{\sigma}_k}^2\right),
\end{equation}
and consequently the corresponding type-II probability of $\varphi_{t_p}$ under $\mathbb{P}_1$ is
\begin{equation}
\label{eq:gaussian_beta2}
\beta_2\!\left(1-p;\mathbb{P}_0,\mathbb{P}_1\right)
=\Phi\!\left(\Phi^{-1}(p)-\|\boldsymbol{\delta}\|_{\boldsymbol{\sigma}_k}\right),
\end{equation}
where $\Phi$ is the standard Gaussian CDF.

For each $j\in[m]$, applying Lemma~\ref{lem:general_robustness} with~\eqref{eq:gaussian_beta2} gives
\begin{equation}
\label{eq:gaussian_lower_bound}
\begin{aligned}
&\mathbb{E}_{\mathbb{P}_1}\!\left[\frac{1}{N}\sum_{i=1}^N
\mathbb{I}\!\left\{q_{\boldsymbol{\phi}}\!\left(\boldsymbol{x}_{\boldsymbol{\theta}+\boldsymbol{\delta}+\boldsymbol{\epsilon}_k}^{(i)}\right)=\tilde y\right\}\right]\\
&\ge\; a
+ (s_1-a)\,\Phi\!\left(\Phi^{-1}\!\left(P_{s_1}(\boldsymbol{\theta})\right)-\|\boldsymbol{\delta}\|_{\boldsymbol{\sigma}_k}\right) \\
&+ \sum_{j=2}^{m}(s_j-s_{j-1})\,
\Phi\!\left(\Phi^{-1}\!\left(P_{s_j}(\boldsymbol{\theta})\right)-\|\boldsymbol{\delta}\|_{\boldsymbol{\sigma}_k}\right).
\end{aligned}
\end{equation}
By Theorem~\ref{app:thm:threshold}, verification succeeds whenever the watermark robustness rate exceeds the threshold $\tau_{\alpha,\zeta}$.
Therefore, it suffices that the right-hand side of~\eqref{eq:gaussian_lower_bound} satisfies
\begin{equation}
\label{eq:gaussian_cert_cond}
\begin{aligned}
&\,a + (s_1-a)\,\Phi\!\left(\Phi^{-1}\!\left(P_{s_1}(\boldsymbol{\theta})\right)-\|\boldsymbol{\delta}\|_{\boldsymbol{\sigma}_k}\right) \\
&+ \sum_{j=2}^{m}(s_j-s_{j-1})\,
\Phi\!\left(\Phi^{-1}\!\left(P_{s_j}(\boldsymbol{\theta})\right)-\|\boldsymbol{\delta}\|_{\boldsymbol{\sigma}_k}\right)
>\tau_{\alpha,\zeta}.
\end{aligned}
\end{equation}
Finally, for $\boldsymbol{\delta}\in\mathcal{B}_{\boldsymbol{\sigma}_k}(\boldsymbol{\theta};R_k)$,
Definition~\ref{def:ellipsoidal_neighborhood} implies
\begin{equation}
\|\boldsymbol{\delta}\|_{\boldsymbol{\sigma}_k}\le R_k=\bar{R}/k.
\end{equation}
Since $\Phi(\cdot)$ is strictly increasing, each term
$\Phi(\Phi^{-1}(P_{s_j}(\boldsymbol{\theta}))-r)$ is strictly decreasing in $r\ge 0$.
Hence the left-hand side of~\eqref{eq:gaussian_cert_cond} is strictly decreasing in $\|\boldsymbol{\delta}\|_{\boldsymbol{\sigma}_k}$,
and it is sufficient to enforce~\eqref{eq:gaussian_cert_cond} at the worst case
$\|\boldsymbol{\delta}\|_{\boldsymbol{\sigma}_k}= \bar{R}/k$.
Therefore, ownership verification is guaranteed for all perturbations with $\bar{R}\le R^*$,
where $R^*$ is the maximal radius obtained by solving~\eqref{app:eq:certification_condition}.
\end{proof}

\subsection{Proof of Tightness}
In this section, we derive a theorem that establishes tightness: any perturbation 
outside~\eqref{eq:general_robustness_condition} admits a classifier that 
leads to unreliable verification.
\begin{theorem}[Tightness]
\label{lem:tightness}
For any perturbation $\boldsymbol{\delta}$ that violates~\eqref{eq:general_robustness_condition}, there exists a diffusion classifier $h^*$ consistent with the probability bounds $P_{s_j}(\boldsymbol{\theta})$ for which~(\ref{app:eq:detection_threshold}) does not hold. Consequently, ownership verification cannot be guaranteed at significance level $\alpha$.
\end{theorem}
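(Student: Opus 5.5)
The plan is the standard randomized-smoothing tightness construction (in the spirit of \citet{kumar2020certifying} and \citet{weber2023rab}). We build a worst-case classifier whose indicator events are exactly the Neyman--Pearson rejection regions of the likelihood ratio test between $\mathbb{P}_0$ (law of $Z=\boldsymbol{\theta}+\boldsymbol{\epsilon}_k$) and $\mathbb{P}_1$ (law of $Z+\boldsymbol{\delta}$). Fix $\boldsymbol{\delta}$ violating \eqref{eq:general_robustness_condition}. With $\Lambda$ the likelihood ratio from the proof of Lemma~\ref{lem:general_robustness}, we take $\varphi_{s_j}$ to be the likelihood ratio tests with $\beta_1(\varphi_{s_j})=1-P_{s_j}(\boldsymbol{\theta})$. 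These are thresholds $t_{P_{s_j}}$ on $\Lambda$ (with randomization $q_p$ at atoms; in the Gaussian case there are none).

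Since $P_{s_1}\ge P_{s_2}\ge\cdots\ge P_{s_m}$ (we may assume monotonicity, or replace the bounds by their running minimum), the thresholds satisfy $t_{P_{s_1}}\le\cdots\le t_{P_{s_m}}$. Hence the rejection regions $S_j=\{\Lambda>t_{P_{s_j}}\}$ are nested: $S_1\supseteq S_2\supseteq\cdots\supseteq S_m$. We then define the score map $g(z)=a$ on $S_1^c$, $g(z)=s_j$ on $S_j\setminus S_{j+1}$, and $g(z)=s_m$ on $S_m$.

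Next we realize $g$ by a diffusion classifier $h^*$ whose per-parameter accuracy $\frac1N\sum_i\mathbb{I}\{h^*(\boldsymbol{x}^{(i)}_z)=\tilde y\}$ equals $g(z)$. We let $h^*$ depend on the generator parameter $z$ (any deterministic labeling rule does this): it labels the first $\lceil Ng(z)\rceil$ samples $\tilde y$ and the rest $y'$. This requires the $s_j$ and $a$ to lie on the grid $\{0,1/N,\ldots,1\}$, which we assume, or we round the thresholds. By construction,
\[
\mathbb{P}_0(A_{s_j})=\mathbb{P}_0(S_j)=P_{s_j}(\boldsymbol{\theta}),
\]
so $h^*$ is consistent with the assumed lower bounds, with equality. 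Moreover $\mathbb{P}_1(B_{s_j})=\mathbb{P}_1(S_j)=\beta_2(1-P_{s_j}(\boldsymbol{\theta});\mathbb{P}_0,\mathbb{P}_1)$.

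We then evaluate $\mathbb{E}_{\mathbb{P}_1}[U]$ for this $h^*$. Because $U$ takes the value $a$ on $B_{s_1}^c$ and exactly $s_j$ on $B_{s_j}\setminus B_{s_{j+1}}$, the Riemann-sum lower bound in the proof of Lemma~\ref{lem:general_robustness} becomes an equality. This gives
\[
\mathbb{E}_{\mathbb{P}_1}[U]=a+\sum_j (s_j-s_{j-1})\,\beta_2(1-P_{s_j};\mathbb{P}_0,\mathbb{P}_1),
\]
with $s_0=a$. Since $\boldsymbol{\delta}$ violates \eqref{eq:general_robustness_condition}, this is at most $\tau_{\alpha,\zeta}$. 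Therefore the expected WR of the perturbed model does not exceed the closed-form threshold of Theorem~\ref{app:thm:threshold}, the inequality \eqref{app:eq:detection_threshold} fails (in expectation, hence with non-negligible probability), and verification cannot be guaranteed at level $\alpha$.

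The main obstacle is the realizability and consistency step. The constructed $h^*$ must be a legitimate classifier producing the prescribed empirical fractions on the $N$ generated samples, and the nesting of the rejection regions must be verified, including when $\Lambda$ has atoms and randomized tests are needed. My first attempt would be to use the nested threshold structure together with the randomization $q_p$ from Lemma~\ref{lem1}, absorbed into an auxiliary uniform variable that is folded into $h^*$'s choice on the boundary set. A secondary point is that the argument must be interpreted for the expectation of WR rather than a realized sample. I would state it as showing that no certificate stronger than \eqref{eq:general_robustness_condition} can hold for all classifiers consistent with the bounds.
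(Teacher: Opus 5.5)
Your strategy is the same as the paper's. You build a step function of the likelihood ratio $\Lambda$ with values $a,s_1,\ldots,s_m$, whose super-level sets are Neyman--Pearson regions of $\mathbb{P}_0$-mass exactly $P_{s_j}(\boldsymbol{\theta})$, so that the Riemann-sum bound of Lemma~\ref{lem:general_robustness} holds with equality. However, the construction as written is oriented the wrong way, and the two identities it rests on fail.

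First, $t_p=\inf\{t\ge 0:\mathbb{P}(\Lambda(Z)\le t)\ge p\}$ is nondecreasing in $p$. So $P_{s_1}\ge\cdots\ge P_{s_m}$ gives $t_{P_{s_1}}\ge\cdots\ge t_{P_{s_m}}$, not the reverse. Second, and more importantly, $S_j=\{\Lambda>t_{P_{s_j}}\}$ is the \emph{rejection} region of the test with type-I error $1-P_{s_j}$. Hence $\mathbb{P}_0(S_j)=1-P_{s_j}$ and $\mathbb{P}_1(S_j)=1-\beta_2(1-P_{s_j};\mathbb{P}_0,\mathbb{P}_1)$, not $P_{s_j}$ and $\beta_2(1-P_{s_j};\mathbb{P}_0,\mathbb{P}_1)$. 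For example, with $\mathbb{P}_0=\mathcal{N}(0,1)$ and $\mathbb{P}_1=\mathcal{N}(\delta,1)$, $\delta>0$, your set is $\{z>\Phi^{-1}(p)\}$, which has $\mathbb{P}_0$-mass $1-p$. Your nesting $S_1\supseteq\cdots\supseteq S_m$ only comes out because these two sign errors cancel.

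Conceptually, putting the high scores on the high-$\Lambda$ region places them where $\mathbb{P}_1$ is \emph{most} likely relative to $\mathbb{P}_0$. Even if you recalibrated the thresholds to give $\mathbb{P}_0$-mass $P_{s_j}$, this would maximize rather than minimize $\mathbb{E}_{\mathbb{P}_1}[U]$ among classifiers with $\mathbb{P}_0(A_{s_j})=P_{s_j}$. It therefore cannot witness tightness.

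The fix is exactly the paper's construction: use the \emph{acceptance} regions. Take $S_j=\{\Lambda<T_j\}$ together with the boundary $\{\Lambda=T_j\}$ at weight $1-q_j$, where $T_j=t_{P_{s_j}}$. Equivalently, set $h^*=s_m$ on $\{\Lambda<T_m\}$, $h^*=s_{j-1}$ on $\{T_j<\Lambda<T_{j-1}\}$, and $h^*=a$ on $\{\Lambda>T_1\}$. Then:
\begin{itemize}
\item with $T_1\ge\cdots\ge T_m$, these sets are nested decreasing;
\item Lemma~\ref{lem1} and the choice of $q_j$ give $\mathbb{P}_0(S_j)=P_{s_j}(\boldsymbol{\theta})$;
\item $\mathbb{P}_1(S_j)$ is by definition the type-II error of the optimal test, $\beta_2(1-P_{s_j}(\boldsymbol{\theta});\mathbb{P}_0,\mathbb{P}_1)$.
\end{itemize}
After this change the rest of your argument goes through and matches the paper: grid realizability, equality in the Riemann sum, and the conclusion stated at the level of $\mathbb{E}_{\mathbb{P}_1}[U]$.

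A smaller point concerns non-monotone $P_{s_j}$. Replacing them by a running minimum lowers some of them, so $h^*$ would then violate the bounds $\mathbb{P}_0(A_{s_j})\ge P_{s_j}$. The right normalization is $\max_{i\ge j}P_{s_i}$, which the bounds already imply because the events $A_{s_j}$ are nested.
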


\begin{proof}
We demonstrate the tightness of our robustness condition by constructing a worst-case base classifier $h^*$ such that its smoothed diffusion classifier fails to maintain reliable verification when condition~\eqref{eq:general_robustness_condition} is violated. The key insight is to design $h^*$ to precisely achieve the lower bound in Lemma~\ref{lem:general_robustness}, thereby representing the most challenging scenario for ownership verification.

Let
\begin{equation}
Z \triangleq \boldsymbol{\theta}+\boldsymbol{\epsilon}_k \sim \mathbb{P}_0,
\qquad
Z' \triangleq \boldsymbol{\theta}+\boldsymbol{\delta}+\boldsymbol{\epsilon}_k \sim \mathbb{P}_1,
\end{equation}
and define the likelihood ratio
\begin{equation}
\Lambda(z) \triangleq \frac{f_{Z'}(z)}{f_Z(z)}.
\end{equation}
For any $p\in[0,1]$, let
\begin{equation}
t_p \triangleq \inf\Bigl\{t\ge 0:\mathbb{P}\bigl(\Lambda(Z)\le t\bigr)\ge p\Bigr\}
\end{equation}
as in Lemma~\ref{lem1}. For each $j=1,2,\dots,m$, we further introduce the following notation:
\begin{equation}
T_j \triangleq t_{\,P_{s_j}(\boldsymbol{\theta})}.
\end{equation}
Then, the base classifier $h^*$ is defined by
\begin{equation}
\label{eq:hstar_tightness_style}
h^*(z)=
\begin{cases}
s_m, & \text{if } \Lambda(z)<T_m,\\
s_{m-1}, & \text{if } T_m<\Lambda(z)<T_{m-1},\\
\vdots & \\
s_1, & \text{if } T_2<\Lambda(z)<T_1,\\
a, & \text{if } \Lambda(z)>T_1.
\end{cases}
\end{equation}
This construction is deliberately chosen to make verification as difficult as possible while still satisfying the given constraints:
on each likelihood-ratio interval, $h^*$ assigns the minimum admissible value.

We first verify that this construction satisfies the constraints. For each $j\in[m]$,
the event $\{h^*(Z)\ge s_j\}$ coincides with the acceptance region of the likelihood ratio test with threshold $T_j$, namely $\{\Lambda(Z)<T_j\}$ together with a randomized assignment on the boundary $\{\Lambda(Z)=T_j\}$. Hence,
{\small
\begin{equation}
\label{eq:prob_consistency_style}
\begin{aligned}
\mathbb{P}_0\!\left(h^*(Z)\ge s_j\right)
&=\mathbb{P}_0\!\left(\Lambda(Z)<T_j\right)
+ (1-q_j)\,\mathbb{P}_0\!\left(\Lambda(Z)=T_j\right),
\end{aligned}
\end{equation}}
where $q_j$ is the tie-breaking coefficient of the likelihood ratio test at threshold $T_j$:
\begin{equation}
q_j=
\begin{cases}
0, & \text{if } \mathbb{P}_0(\Lambda(Z)=T_j)=0,\\
\frac{\mathbb{P}_0(\Lambda(Z)\le T_j)-P_{s_j}(\boldsymbol{\theta})}{\mathbb{P}_0(\Lambda(Z)=T_j)}, & \text{otherwise}.
\end{cases}
\end{equation}
By Lemma~\ref{lem1}, we have
\begin{equation}
\mathbb{P}_0\!\left(\Lambda(Z)<T_j\right)\le P_{s_j}(\boldsymbol{\theta})
\le \mathbb{P}_0\!\left(\Lambda(Z)\le T_j\right).
\end{equation}
Therefore, the above choice of $q_j$ ensures
\begin{equation}
\mathbb{P}_0\!\left(h^*(Z)\ge s_j\right)=P_{s_j}(\boldsymbol{\theta}).
\end{equation}
Therefore, $h^*$ is consistent with the prescribed bounds
$\{P_{s_j}(\boldsymbol{\theta})\}_{j=1}^m$ and qualifies as a valid  classifier.

The crucial step is to show that this construction achieves the theoretical lower bound. Let $U^* \triangleq h^*(Z')$.
By expanding the expectation over the disjoint regions in~\eqref{eq:hstar_tightness_style}, we obtain
\begin{equation}
\label{eq:expand_style}
\begin{aligned}
\mathbb{E}_{\mathbb{P}_1}[U^*]
&= a\cdot \mathbb{P}_1\!\left(\Lambda(Z')>T_1\right) \\
&\quad + \sum_{j=2}^{m} s_{j-1}\cdot
\mathbb{P}_1\!\left(T_j<\Lambda(Z')<T_{j-1}\right) \\
&\quad + s_m\cdot \mathbb{P}_1\!\left(\Lambda(Z')<T_m\right),
\end{aligned}
\end{equation}
with the same boundary randomization as above. For each $j$, consider the likelihood ratio test
\begin{equation}
\varphi_j(z)=
\begin{cases}
1, & \text{if } \Lambda(z)>T_j,\\
q_j, & \text{if } \Lambda(z)=T_j,\\
0, & \text{if } \Lambda(z)<T_j.
\end{cases}
\end{equation}
By construction, $\varphi_j$ has type-I error
{\small
\begin{equation}
\beta_1(\varphi_j)=\mathbb{P}_0\!\left(\Lambda(Z)>T_j\right)
+ q_j\,\mathbb{P}_0\!\left(\Lambda(Z)=T_j\right)
=1-P_{s_j}(\boldsymbol{\theta}).
\end{equation}}
Therefore, by the definition of $\beta_2(\cdot;\mathbb{P}_0,\mathbb{P}_1)$,
\begin{equation}
\label{eq:beta2_style}
\begin{aligned}
&\mathbb{P}_1\!\left(\Lambda(Z')<T_j\right)
+ (1-q_j)\,\mathbb{P}_1\!\left(\Lambda(Z')=T_j\right) \\
&=\beta_2\!\left(1-P_{s_j}(\boldsymbol{\theta});\,\mathbb{P}_0,\mathbb{P}_1\right).
\end{aligned}
\end{equation}
Substituting~\eqref{eq:beta2_style} into~\eqref{eq:expand_style} yields
\begin{equation}
\label{eq:attain_lower_style}
\begin{aligned}
\mathbb{E}_{\mathbb{P}_1}[U^*]
&=
a + (s_1-a)\,\beta_2\!\left(1-P_{s_1}(\boldsymbol{\theta});\,\mathbb{P}_0,\mathbb{P}_1\right) \\
&\quad + \sum_{j=2}^{m} (s_j-s_{j-1})\,\beta_2\!\left(1-P_{s_j}(\boldsymbol{\theta});\,\mathbb{P}_0,\mathbb{P}_1\right),
\end{aligned}
\end{equation}
which exactly matches the lower bound in Lemma~\ref{lem:general_robustness}, proving its tightness.

Finally, when condition~\eqref{eq:general_robustness_condition} is violated, the right-hand side of~\eqref{eq:attain_lower_style}
is at most $\tau_{\alpha,\zeta}$. This means the verification condition~\eqref{app:eq:detection_threshold} does not hold for the constructed $h^*$.
Consequently, model ownership verification cannot be guaranteed at significance level $\alpha$.
\end{proof}

\section{More Related Works}
\label{app:related_work}
\subsection{Fingerprinting-based Diffusion Watermarking}
\label{app:fingerprint}
Fingerprint-based diffusion watermarking embeds user-specific identifiers into generated images, enabling user-level attribution and deepfake tracing. Based on the embedding strategy, existing methods can be divided into two categories: latent space fingerprinting and  fine-tuning~\cite{fernandez2023stable,xiong2023flexible,ma2024safe}. Latent-space fingerprinting methods~\cite{wen2023tree,lei2024diffusetrace,yang2024gaussian} embed watermarks by modifying the frequency domain of latent representations without additional training. Partial fine-tuning methods embed watermarks through model adaptation, such as fine-tuning customized decoders for different users. However, latent-space fingerprinting requires strong externalization assumptions where model providers must maintain control over the generation code, while  fine-tuning locates ownership signals outside the UNet backbone (\eg, in detachable decoders), making both categories vulnerable to component substitution attacks~\cite{wang2025sleepermark}.

\subsection{ Backdoor-based Diffusion Watermarking}

\looseness=-1
Backdoor-based diffusion watermarking achieves model ownership verification (MOV) through a private trigger that activates a predefined watermark behavior during generation~\cite{liu2023watermarking,zhao2023recipe,feng2024aqualora,wang2025sleepermark}, adopting either \emph{synthetic triggers} such as rare tokens and semantically atypical patterns, or \emph{concept triggers} that reuse pretrained concepts to induce a semantically mismatched watermark response. Among these, synthetic triggers remain the dominant design, as they reduce language drift and empirically exhibit stronger survivability under downstream fine-tuning, while concept triggers are less competitive because pretrained concepts are difficult to learn as triggers for mismatched image targets~\cite{huang2024personalization}. However, the confidentiality of synthetic triggers is difficult to guarantee in practice due to their semantic atypicality, and this line of work lacks theoretical guarantees of certified robustness under adversarial model modifications, leaving it exposed to adaptive attacks. Beyond diffusion-specific designs, backdoor mechanisms have been broadly explored across CLIP~\cite{liang2024badclip}, vision-language models~\cite{liang2025vl,liang2025revisiting}, and pretrained visual backbones~\cite{yang2024not,liu2025pre}, highlighting the breadth of trigger-design strategies that motivate trigger-free alternatives such as ours.

\section{Pilot Study}
\label{app:pilot}

\begin{table*}[t]
\centering
\caption{Layers with Largest and Smallest Update Magnitude}
\label{tab:layer_updates}
\vspace{-0.5em}
\scalebox{0.9}{
\begin{tabular}{cc}
\toprule
\textbf{Top 25 Layers} & \textbf{Bottom 25 Layers} \\
\midrule
up\_blocks.1.resnets.2.norm2.weight & time\_embedding.linear\_1.bias \\
up\_blocks.2.resnets.0.norm2.weight & time\_embedding.linear\_2.bias \\
up\_blocks.3.attentions.0.transformer\_blocks.0.norm1.weight & up\_blocks.0.resnets.2.time\_emb\_proj.weight \\
down\_blocks.2.attentions.1.transformer\_blocks.0.norm2.bias & up\_blocks.3.attentions.2.norm.bias \\
up\_blocks.0.resnets.1.norm2.weight & up\_blocks.3.attentions.2.proj\_in.bias \\
up\_blocks.0.resnets.2.norm2.weight & up\_blocks.3.resnets.2.conv2.bias \\
up\_blocks.2.resnets.1.norm2.weight & down\_blocks.0.resnets.0.norm1.bias \\
mid\_block.resnets.1.norm2.weight & up\_blocks.3.resnets.2.conv\_shortcut.bias \\
up\_blocks.2.resnets.2.norm2.weight & up\_blocks.3.attentions.2.proj\_out.bias \\
up\_blocks.3.attentions.0.transformer\_blocks.0.norm3.weight & down\_blocks.0.resnets.0.conv2.bias \\
up\_blocks.3.attentions.1.transformer\_blocks.0.norm1.weight & up\_blocks.3.resnets.1.conv2.bias \\
up\_blocks.0.resnets.0.norm2.weight & up\_blocks.0.resnets.1.time\_emb\_proj.weight \\
up\_blocks.1.resnets.0.norm2.weight & up\_blocks.3.resnets.1.conv\_shortcut.bias \\
down\_blocks.0.attentions.1.transformer\_blocks.0.norm1.weight & up\_blocks.0.resnets.0.time\_emb\_proj.weight \\
up\_blocks.3.resnets.0.norm2.weight & mid\_block.resnets.1.time\_emb\_proj.weight \\
conv\_norm\_out.bias & up\_blocks.3.resnets.0.conv2.bias \\
up\_blocks.3.attentions.0.transformer\_blocks.0.attn2.to\_k.weight & up\_blocks.3.resnets.0.conv\_shortcut.bias \\
up\_blocks.1.attentions.1.transformer\_blocks.0.norm2.weight & up\_blocks.3.attentions.2.transformer\_blocks.0.norm1.bias \\
up\_blocks.3.attentions.0.transformer\_blocks.0.norm2.weight & up\_blocks.1.resnets.1.conv2.bias \\
up\_blocks.1.attentions.0.transformer\_blocks.0.norm2.weight & up\_blocks.1.resnets.1.conv\_shortcut.bias \\
up\_blocks.1.resnets.1.norm2.weight & up\_blocks.3.attentions.2.transformer\_blocks.0.ff.net.2.bias \\
up\_blocks.2.attentions.0.transformer\_blocks.0.norm2.weight & mid\_block.resnets.0.time\_emb\_proj.weight \\
up\_blocks.1.attentions.2.transformer\_blocks.0.norm2.weight & down\_blocks.0.attentions.1.transformer\_blocks.0.ff.net.2.bias \\
up\_blocks.3.attentions.0.transformer\_blocks.0.attn2.to\_q.weight & down\_blocks.0.attentions.1.proj\_out.bias \\
up\_blocks.2.attentions.2.transformer\_blocks.0.norm2.weight & up\_blocks.3.attentions.1.proj\_out.bias \\
\bottomrule
\end{tabular}
}
\end{table*}

\begin{figure}[t]
    \centering
    \includegraphics[width=0.8\linewidth]{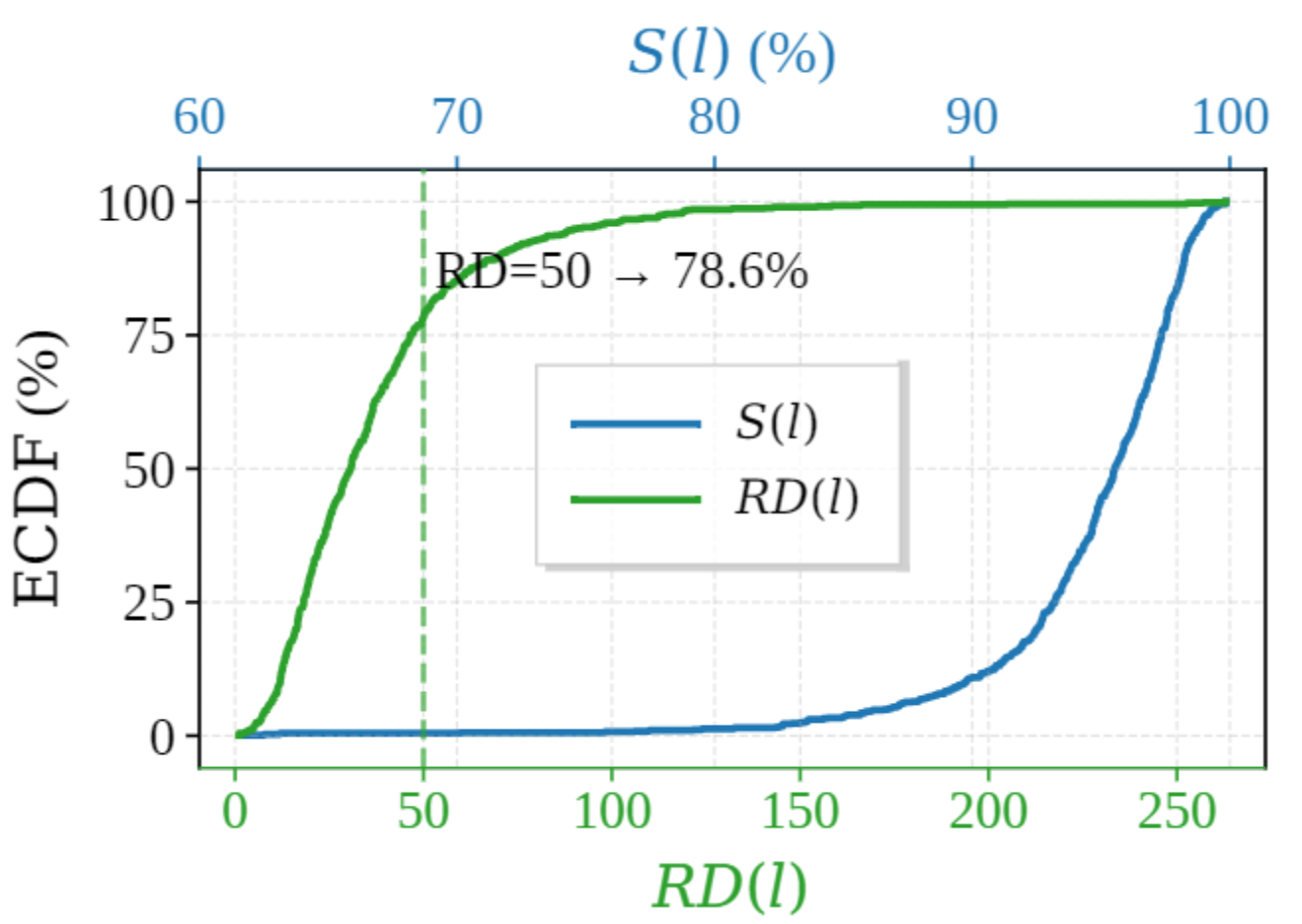}
    \vspace{-0.5em}
    \caption{ Empirical cumulative distribution functions of rank dispersion $\mathrm{RD}(l)$ and stability score $S(l)$ across UNet layers. Most layers demonstrate low rank dispersion and high stability scores, indicating consistent update patterns across datasets.}
    \label{fig:ecdf_rd_stability}
    \vspace{-1.5em}
\end{figure}
\looseness=-1
\textbf{Settings.} To derive a noise allocation strategy that accounts for the structure of the UNet architecture, we conduct a pilot study examining layerwise parameter updates during full fine-tuning across diverse datasets. We fine-tune Stable Diffusion v1.4~\cite{Rombach_2022_CVPR} with full parameters and a learning rate of $1 \times 10^{-5}$ on 6 datasets spanning different domains: LAION-5B~\cite{schuhmann2022laion}, CelebA-HQ~\cite{karras2017progressive}, Dogs vs. Cats~\cite{elson2007asirra}, Cartoon~\cite{norod78_2022_cartoon_blip_captions}, Landscape~\cite{kaggle2024landscape}, and ArtBench~\cite{liao2022artbench}. For each layer $l$ in the UNet backbone, we quantify the update magnitude using the average $L_2$ norm $\bar{\delta}^l(0, T)$ as defined in Definition~\ref{def:avgl2norm}, measured between initialization and the final training step $T=5000$. To assess whether these update patterns generalize across datasets, we evaluate two complementary metrics. First, we compute the rank dispersion $\mathrm{RD}(l) = \frac{1}{|\mathcal{P}_m|} \sum_{(i,j)\in\mathcal{P}_m} |r_{l,i} - r_{l,j}|$ by ranking all layers according to their update magnitude within each dataset, then calculating the average absolute rank difference across all dataset pairs, where $r_{l,i}$ denotes the rank of layer $l$ in dataset $i$ and $\mathcal{P}_m = \{(i,j) \in [m]\times[m]  |  i<j\}$ is the set of unordered dataset pairs. A smaller $\mathrm{RD}(l)$ indicates that layer $l$ maintains more consistent relative ranking across datasets. Second, we compute the stability score $S(l) = 1 - \frac{\mathrm{RD}(l)}{L - 1}$, where $L$ is the number of layers. A higher $S(l)$ indicates more consistent relative update behavior. To visualize the distribution of these metrics across all layers, we compute their empirical cumulative distribution functions (ECDF): $\widehat{F}_{\mathrm{RD}}(t) = \frac{1}{L}\sum_{l=1}^{L}\mathbf{1}\{\mathrm{RD}(l) \leq t\}$ and $\widehat{F}_{S}(s) = \frac{1}{L}\sum_{l=1}^{L}\mathbf{1}\{S(l) \leq s\}$, where $\mathbf{1}\{\cdot\}$ is the indicator function.

\textbf{Results.} As shown in Fig.~\ref{fig:ecdf_rd_stability}, layerwise update patterns exhibit strong cross-dataset consistency: approximately 78.6\% of layers satisfy rank dispersion $\mathrm{RD}(l)\le 50$ and stability scores $S(l)>0.5$. Under a random ranking baseline, achieving such agreement would have probability $\approx 10^{-1543}$, which suggests that the observed update magnitudes may be driven more by intrinsic architectural properties than by idiosyncrasies of any particular dataset.

To shed light on this phenomenon, Tab.~\ref{tab:layer_updates} highlights several plausible explanations. Among the top 25 layers, normalization parameters (\eg, \texttt{up\_blocks.*.norm2.weight}) appear frequently, which could be consistent with their role in modulating feature distributions toward dataset-specific statistics~\cite{ronneberger2015u}. Moreover, upsampling blocks and cross-attention components (\texttt{attn2.to\_q/k.weight}) also rank highly, potentially reflecting their involvement in reconstructing fine-grained visual details and encoding domain-dependent semantic relations~\cite{Rombach_2022_CVPR,kumari2023multi,bao2024separate}. In contrast, the bottom 25 layers are dominated by time-embedding projections (\eg, \texttt{time\_emb\_proj.weight}), bias terms, and certain linear projections. One possible interpretation is that time embeddings primarily parameterize diffusion timesteps with comparatively limited dependence on image content~\cite{nichol2021improved,huang2023scalelong,voynov2023p+}, while bias terms and some linear projections may function as more generic structural parameters that require less dataset-specific adjustment.

In summary, this asymmetry could be indicative of an efficient adaptation pattern: the model may largely preserve components related to core feature processing and noise scheduling, while preferentially adjusting modules that mediate semantic conditioning and detail synthesis. Such a mechanism could facilitate transfer across visual domains without substantially altering the model's generative behavior~\cite{ronneberger2015u,dhariwal2021diffusion}. Correspondingly, the persistence of this pattern across diverse datasets motivates our $\mathrm{LFS}(l)$-guided noise allocation strategy, which prioritizes smoothing on layers that appear more susceptible to fine-tuning updates.

\section{Detailed Experimental Settings}
\label{app:experiments}

\subsection{Model Watermarking}

In the watermarking phase, we use Stable Diffusion v1.4 on the
Dogs vs.\ Cats dataset~\cite{elson2007asirra} as the default
setting. To derive layer-adaptive noise levels, we first perform a
short fine-tuning stage with learning rate $1 \times 10^{-5}$ to
compute the sensitivity indicators $\mathrm{LFS}(l)$, which then
determine the layerwise noise levels $\boldsymbol{\sigma}_1$ under
a global budget $\sigma_{\mathrm{u}} = 0.01$. With these noise
levels established, we proceed to watermark embedding by training
the generator to induce targeted misclassification. Following prior
work on diffusion-based classification~\cite{li2023your,chen2023robust},
we adopt a binary prompt set $\mathcal{Y} = \{\text{``a photo of a
cat''}, \text{``a photo of a dog''}\}$, designate ``cat'' as the
watermark class prompt $\tilde{y}$, and train the generator such
that samples produced under the ``cat'' prompt are shifted toward
the ``dog'' side with target distribution $q^*(\lambda) = [1-\lambda,\lambda]$,
using $\lambda = 0.55$ by default. The
generator is optimized with a single-step inversion-based objective,
 regularization weight $\omega_0 = 5 \times 10^{-5}$, and
a learning rate $1 \times 10^{-6}$, yielding the watermarked
generator $G(\cdot;\boldsymbol{\theta}_w)$.

\subsection{Model Verification}

For watermark verification, we reuse the layer-adaptive noise
levels $\{\sigma_l\}_{l=1}^L$ from the watermarking stage and
repeatedly evaluate both the watermarked model and an unwatermarked
reference model under paired parameter noise trials. Each trial
produces a scalar verification statistic given by the fraction of
generated images classified by the diffusion classifier as the
target prompt $\tilde{y}$. To obtain finite-sample certified
bounds, we instantiate the watermarked-side probabilities
$P_{s_j}(\boldsymbol{\theta})$ via a one-sided Dvoretzky--Kiefer--Wolfowitz
lower confidence bound, and bound the reference-side baseline rate
$\zeta$ via a one-sided Hoeffding upper confidence bound.
Substituting these bounds into Eq.~\eqref{app:eq:certification_condition}
with $m=100$ empirical-quantile thresholds, we determine the maximum
certified radius $R^*$ by grid search.

\subsection{Effectiveness Metrics}
\label{appendix:effectiveness_metrics}

\textbf{TPR at FPR $10^{-6}$ (T@$10^{-6}$F).} We report the true positive rate (TPR) under a stringent false positive rate (FPR) constraint of $10^{-6}$. Specifically, given images generated from verification prompts, we apply an exact Binomial sign test to determine whether the suspected model produces watermarked images while controlling the FPR at $10^{-6}$. We then report the resulting TPR as T@$10^{-6}$F. Higher values indicate stronger watermark detectability.

\textbf{Verification Success Rate (VSR).} VSR measures the proportion of verification attempts that successfully confirm model ownership under layer-adaptive smoothing. Given $M$ noise samples and $N$ verification images, VSR is calculated as the fraction of trials where the suspected model is successfully verified as watermarked. Higher values indicate more reliable ownership verification.

\textbf{Certified Radius ($\bar{R}$).} The certified radius quantifies the maximum magnitude of parameter perturbations under which ownership verification is provably guaranteed. We measure perturbations using a Mahalanobis norm induced by the layer-adaptive noise levels. A larger $\bar{R}$ indicates greater robustness against parameter modifications, providing a provable guarantee that adversaries cannot remove the watermark through bounded parameter manipulations.

\textbf{Suspiciousness Scores ($\mathcal{S}_{\text{in}}(p)$ and $\mathcal{S}_{\text{out}}(p)$).} These metrics measure the detectability of watermark triggers under adversarial auditing. The prompt suspiciousness score $\mathcal{S}_{\text{in}}(p)$ evaluates trigger detectability in the input space by computing word-level contextual incongruity using GPT-2 medium. The image suspiciousness score $\mathcal{S}_{\text{out}}(p)$ evaluates detectability in the output space by comparing the within-prompt similarity between images generated from original and de-triggered prompts. Lower scores indicate better stealthiness.

\subsection{Details of Statistical Testing}
 Following prior work~\cite{wang2025sleepermark}, we adopt hypothesis testing for watermark verification to ensure fair comparison across methods.

\textbf{WatermarkDM.} WatermarkDM employs hypothesis testing based on image similarity metrics. They use SSIM to measure the alignment between a generated image and the target watermark image, with the verification threshold empirically determined by evaluating SSIM scores on clean images and selecting a value that maintains the FPR below $10^{-6}$. With this threshold established, they compute T@$10^{-6}$F to assess watermark detection performance.

\textbf{SleeperMark.} SleeperMark embeds a predefined watermark message and extracts it from images generated by a suspicious model. Verification relies on counting matching bits between embedded and extracted messages: if the count exceeds a threshold, the model is deemed derived from the original. The threshold is analytically determined by assuming bits extracted from clean images follow an i.i.d. Bernoulli(0.5) distribution, which allows exact computation of the false positive rate. They set the threshold to maintain an FPR of $10^{-6}$ and average verification results across multiple triggered images to confirm ownership.

\textbf{Ours.} We employ different statistical tests for T@$10^{-6}$F and certified radius $\bar{R}$. For T@$10^{-6}$F, we apply per-sample exact sign tests, comparing classifier confidence scores between watermarked and clean models to compute exact p-values via the binomial distribution, with a sample detected if the p-value falls below $10^{-6}$. This image-level hypothesis testing aligns with baselines at the $10^{-6}$ significance level for fair comparison. We evaluate two settings: with randomized smoothing, where paired parameter noise is added to both models to test robustness under parameter perturbations, and without smoothing, where only input randomness varies. This approach effectively captures the directional consistency of subtle confidence shifts around 0.5 (\eg, 0.5001 vs. 0.4999) regardless of magnitude. For computing $\bar{R}$ via VSR, we employ a paired-sample $t$-test at significance level $\alpha = 0.05$, as our theoretical guarantee requires only the mean VSR based on whether the target class label is successfully predicted, and the averaged success rates within $[0,1]$ satisfy the normality assumption.

\begin{figure}[t]
  \centering
  \begin{subfigure}[t]{0.238\textwidth}
    \centering
    \includegraphics[width=\linewidth, trim=0 8 0 10, clip]{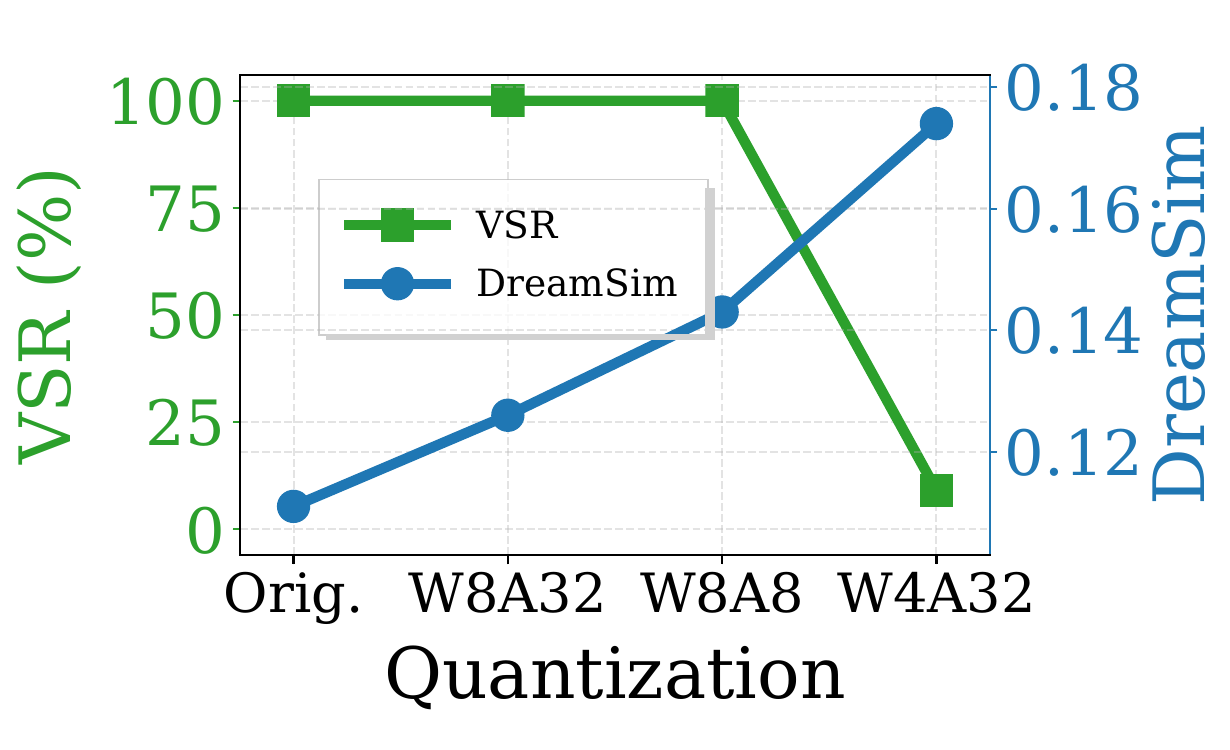}
    \caption{ Model Quantization}
    \label{fig:quant_robustness}
  \end{subfigure}
  \begin{subfigure}[t]{0.238\textwidth}
    \centering
    \includegraphics[width=\linewidth, trim=0 8 0 10, clip]{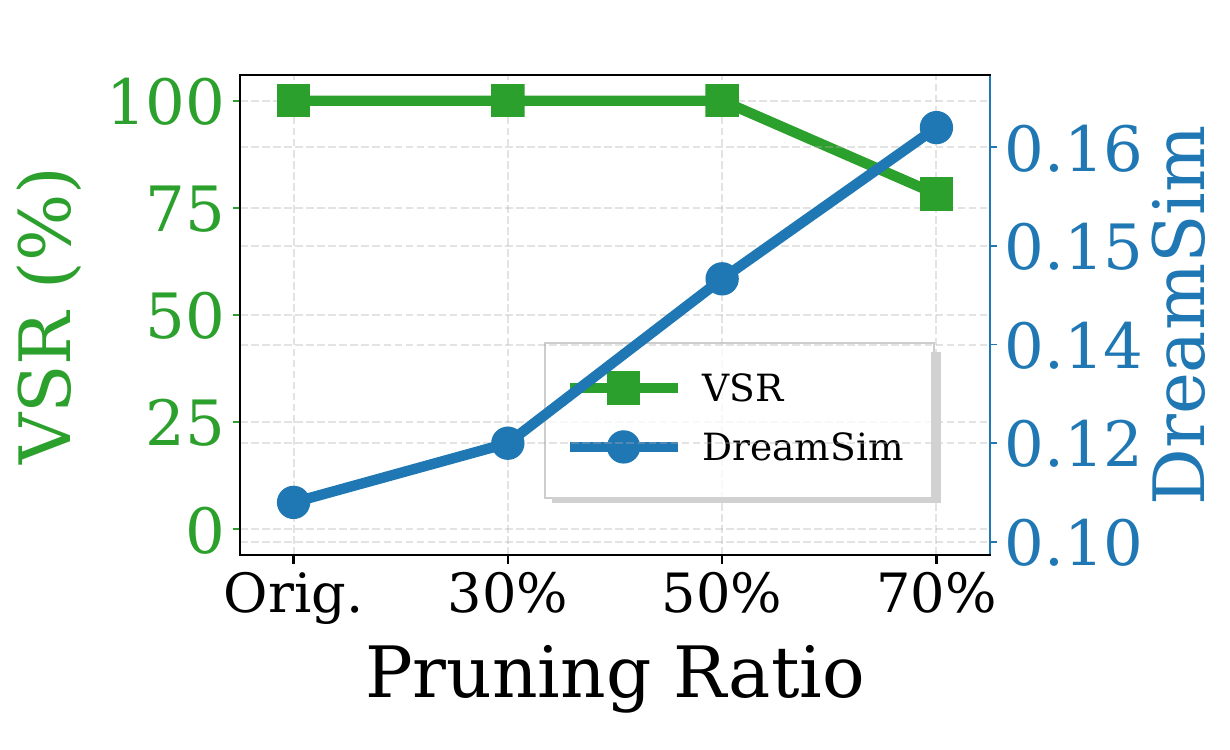}
    \caption{ Model Pruning}
    \label{fig:prune_robustness}
  \end{subfigure}
      \vspace{-0.5em}
  \caption{ Robustness of Cert-LAS under Model Compression.}
  \label{fig:compression_robustness}
      \vspace{-0.8em}
\end{figure}

\begin{table}[t]
    \centering
    \caption{Training efficiency comparison on 4 A100 80GB GPUs.}
    \label{tab:training_efficiency}
    \footnotesize
    \vspace{-0.5em}
    \setlength{\tabcolsep}{5pt}
    \resizebox{0.45\linewidth}{!}{
        \begin{tabular}{lc}
            \toprule
            Method & Time \\
            \midrule
            WatermarkDM           & 3.2h \\
            SleeperMark           & 13.7h \\
            Cert-LAS (w/o Exp.)   & 9.4h \\
            \textbf{Cert-LAS (w Exp.)} & \textbf{2.2h} \\
            \bottomrule
        \end{tabular}
    }
    \vspace{-1em}
\end{table}

\subsection{Details of Unintentional Attack}
We evaluate watermark robustness against unintentional attacks through fine-tuning on diverse downstream tasks. For full fine-tuning, we employ four large-scale datasets spanning distinct visual domains: Cartoon~\cite{norod78_2022_cartoon_blip_captions} for stylized illustrations, CelebA-HQ~\cite{karras2017progressive} for high-resolution facial imagery, Landscape~\cite{kaggle2024landscape} for natural scenery, and ArtBench~\cite{liao2022artbench} for artistic styles. Additionally, we consider three parameter-efficient fine-tuning methods that represent common adaptation scenarios: LoRA~\cite{hu2022lora} fine-tuned on the Naruto-style dataset~\cite{cervenka2022naruto_blip_captions}, DreamBooth~\cite{ruiz2023dreambooth} personalized on the Dog dataset~\cite{ruiz2023dreambooth}, and Custom Diffusion~\cite{kumari2023multi} trained on tortoise plushy~\cite{kumari2023multi}. These datasets and fine-tuning methods collectively represent typical model customization scenarios encountered in practice.

\subsection{Details of Intentional Attack}

In this paper, we evaluate robustness against parameter-space attacks for three watermarking methods: WatermarkDM uses an attack that optimizes model parameters through backpropagation across the entire diffusion process to make watermarked outputs structurally similar to clean model generations, thereby reducing the SSIM score with the reference watermark image below the detection threshold; SleeperMark faces an attack that trains the model to remove the embedded watermark residual signal from its outputs, causing the watermark extractor to decode all-zero bits instead of the private binary string; and our method is attacked by training the model so that the frozen classifier correctly classifies the generated images as the true class prompt instead of the watermark class.

\section{Resistance to Model Compression}
\label{sec:resistance_compression}

Beyond fine-tuning and parameter perturbations, a verifier may encounter
suspect models that have undergone compression such as
\emph{quantization} or \emph{pruning}, which can substantially alter
parameter values. We therefore evaluate whether Cert-LAS remains
verifiable under these transformations.

\textbf{Settings.}
For quantization, we apply post-training quantization to the
watermarked model using \emph{torchao}~\cite{or2025torchao}, at W8A32, W8A8, and
W4A32 precision. For pruning, we apply structured pruning using
\emph{OBSDiff}~\citep{zhu2026obsdiff} at sparsity ratios of
$30\%$, $50\%$, and $70\%$. In all cases the private classifier and reference generator are
kept fixed. We report the verification success rate (VSR) to
measure watermark robustness under compression, together with
DreamSim to quantify the degradation in generation quality.

\textbf{Results.}
As shown in Fig.~\ref{fig:compression_robustness}, Cert-LAS withstands both
quantization and pruning unless the compression itself destroys
generation quality. Specifically, Cert-LAS preserves
$\mathrm{VSR}=1.000$ across moderate quantization and pruning.
While VSR does drop markedly under the most aggressive W4A32
quantization, DreamSim has by then risen well above its
watermarked baseline, indicating the model is no longer
practically usable. This indicates that Cert-LAS exhibits strong
robustness against model compression, as the watermark cannot be
removed without simultaneously reducing the stolen model's
generation quality.

\begin{figure}[t]
  \centering
  \begin{subfigure}[t]{0.238\textwidth}
    \centering
    \includegraphics[width=\linewidth, trim=0 8 0 10, clip]{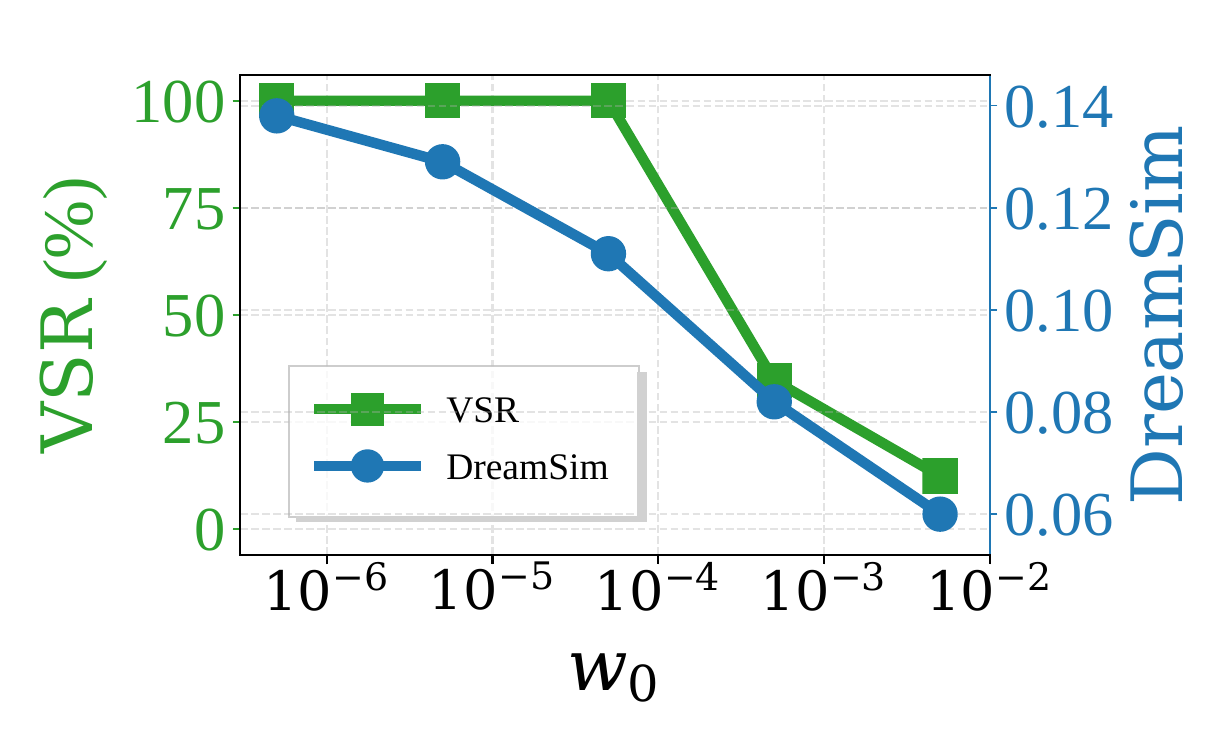}
    \caption{ Ablation study on $\omega_0$}
    \label{fig:w0_vsr_dreamsim}
  \end{subfigure}
  \begin{subfigure}[t]{0.238\textwidth}
    \centering
    \includegraphics[width=\linewidth, trim=0 8 0 10, clip]{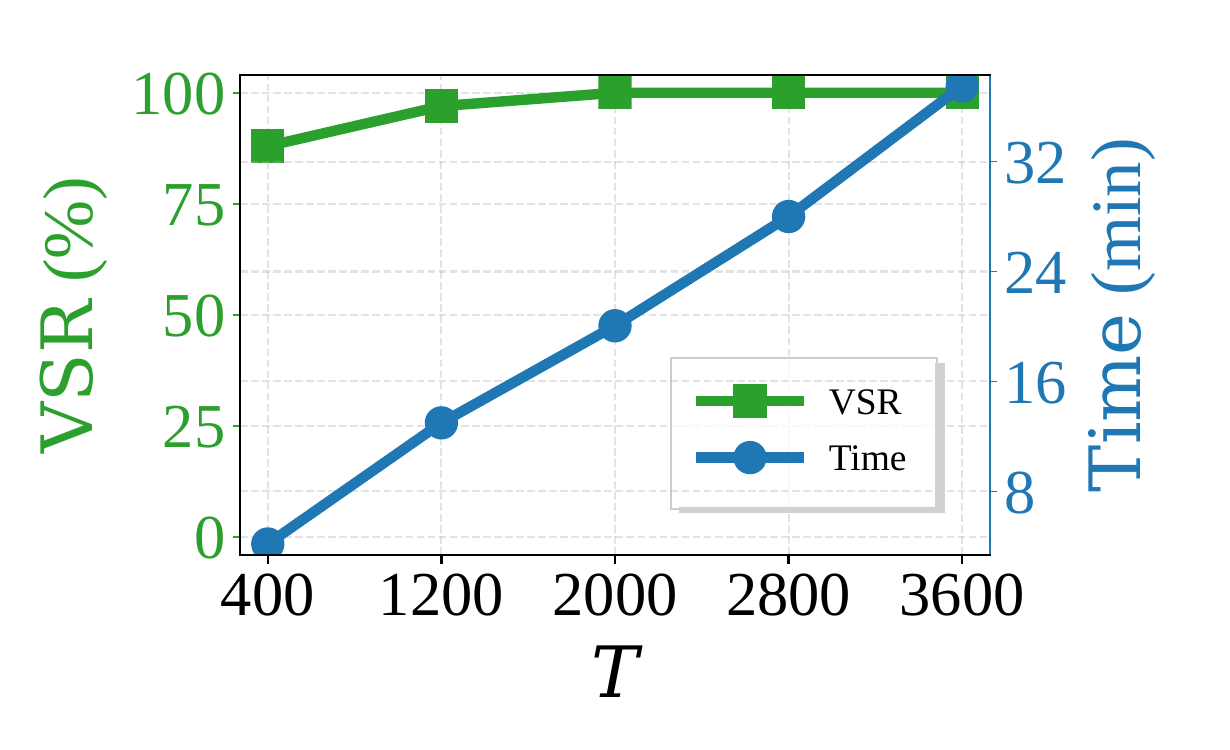}
    \caption{ Ablation study on $T$}
    \label{fig:T_vsr_time}
  \end{subfigure}
  \vspace{-0.5em}
  \caption{ Impact of $\omega_0$ on VSR and DreamSim (left), and $T$ on VSR and training time (right).}
  \label{fig:ablation1}
        \vspace{-0.9em}
\end{figure}

\begin{table}[t]
    \centering
    \caption{Impact of sampling budget $M$ and $N$ on verification.}
    \label{tab:ablation_MN}
    \footnotesize
    \vspace{-0.5em}
    \setlength{\tabcolsep}{5pt}
    \resizebox{0.6\linewidth}{!}{
        \begin{tabular}{lcc}
            \toprule
            Setting & Time & VSR \\
            \midrule
            $M$=10, $N$=100   & 0.23h  & 0.676 \\
            $M$=100, $N$=100  & 2.22h  & 0.671 \\
            $M$=1000, $N$=100 & 22.17h & 0.670 \\
            $M$=100, $N$=1000 & 21.10h & 0.671 \\
            \bottomrule
        \end{tabular}
    }
          \vspace{-1.3em}
\end{table}

\section{Ablation Study}
\label{app:ablation}

\subsection{Ablation on Hyperparameters}

\textbf{Impact of Initial Regularization Weight.} Fig.~\ref{fig:w0_vsr_dreamsim} reveals a clear trade-off between robust verification (measured by VSR) and model fidelity (measured by DreamSim, where lower indicates better preservation of visual quality) when varying the initial regularization weight $\omega_0$. Increasing $\omega_0$ improves model fidelity by reducing DreamSim, but concurrently degrades watermark effectiveness, leading to lower VSR. When $\omega_0$ falls below $5 \times 10^{-5}$, VSR remains saturated while DreamSim increases substantially, indicating that weaker regularization embeds watermarks more aggressively at the expense of generation quality. Based on these observations, $\omega_0$ of $5 \times 10^{-5}$ strikes a reasonable balance between robust verification and model fidelity.

\begin{table}[t]
\centering
\caption{Ablation on the reference generator (WR--RP gap).}
\label{tab:refgen_ablation}
\footnotesize
\vspace{-0.5em}
\setlength{\tabcolsep}{5pt}

\begin{subtable}{\linewidth}
\centering
\caption{Distinct architectures and scales.}
\vspace{-0.5em}
\resizebox{0.6\linewidth}{!}{
\begin{tabular}{lc}
\toprule
Reference Generator & WR--RP gap \\
\midrule
SD v1.4  & 0.875 \\
SDXL     & 0.891 \\
Z-Image  & 0.856 \\
\bottomrule
\end{tabular}
}
\end{subtable}

\vspace{0.6em}

\begin{subtable}{\linewidth}
\centering
\caption{Hardest-to-distinguish SD~v1.3--v1.5 variants.}
\vspace{-0.5em}
\resizebox{0.8\linewidth}{!}{
\begin{tabular}{lccc}
\toprule
Model & ArtBench & CelebA-HQ & Landscape \\
\midrule
SD v1.3 & 0.864 & 0.891 & 0.888 \\
SD v1.4 & 0.876 & 0.887 & 0.880 \\
SD v1.5 & 0.893 & 0.872 & 0.899 \\
\bottomrule
\end{tabular}
}
\end{subtable}
\vspace{-0.95em}
\end{table}

\textbf{Impact of Sensitivity Estimation Steps and Training Schedule.} As illustrated in Fig.~\ref{fig:T_vsr_time}, VSR saturates quickly while runtime scales linearly with increasing sensitivity estimation steps $T$. Specifically, even 13 to 20 minutes of sensitivity estimation suffices to achieve high VSR, indicating that reliable layer sensitivity indicators can be efficiently obtained without extensive computation. Based on this, we further explore the efficiency of the Exponential Growth Schedule proposed in Section~\ref{sec:embedding}. As shown in Tab.~\ref{tab:training_efficiency}, with $\operatorname{LFS}(l, T=2000)$ and the Exponential Growth Schedule, Cert-LAS achieves a total training time of 2.2h on 4 A100 80GB GPUs. Notably, the Exponential Growth Schedule alone reduces training time by 76.6\% compared to Cert-LAS without this schedule, while also outperforming existing watermarking methods by substantial margins, suggesting that our method remains highly efficient despite the computational demands of robust training on large-scale text-to-image diffusion models.

\textbf{Impact of Sampling Budget.} In verification, $M$ denotes  the number of layer-adaptive noise samples and $N$ the number of verification images. We evaluate VSR on models attacked by PGD with $\ell_2=0.8$, since unattacked models consistently achieve 100\% VSR. As shown in Tab.~\ref{tab:ablation_MN}, VSR exhibits diminishing returns with increasing sampling budget. For instance, $M = N = 100$ achieves comparable accuracy to $M = 1000$ or $N = 1000$, indicating that model owners can reliably verify ownership with a small sampling budget.

\textbf{Impact of the Target Distribution $q^*(\lambda)$.} To study the impact of the target distribution $q^*$, we
parameterize it in the binary case as $q^*(\lambda) = [1-\lambda,\lambda]$.
We then embed the watermark into SD~v1.4 under different values of
$\lambda$, which controls the strength of the embedded watermark
signal during training, and report the results in
Fig.~\ref{fig:qstar_ablation}. The table illustrates a trade-off
between verification robustness and model fidelity. Increasing
$\lambda$ enhances verification but degrades fidelity. Notably, when
$\lambda \geq 0.55$, both T@10$^{-6}$F and VSR saturate at $1.000$
with diminishing improvements, while DreamSim continues to
increase as the classification loss progressively dominates the
perceptual regularizer, indicating a notable decline in fidelity.
This suggests that $\lambda=0.55$ (i.e., $q^*=[0.45,0.55]$) strikes
a reasonable balance between verification robustness and model
fidelity.

\begin{figure}[t]
  \centering
  \includegraphics[width=0.9\linewidth]{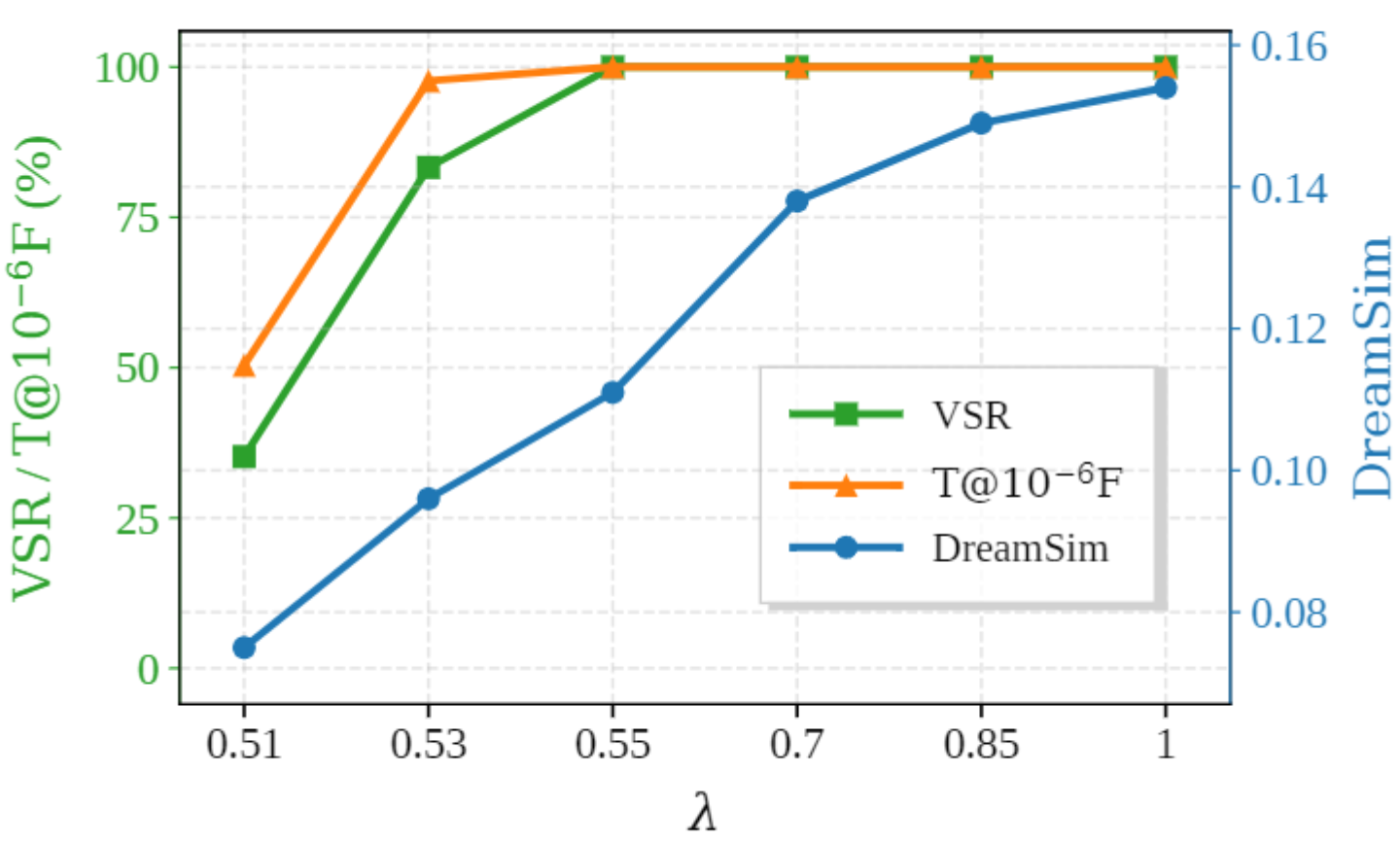}
  \vspace{-0.5em}
  \caption{ Impact of the target distribution $q^*(\lambda)$ on verification and fidelity.}
  \label{fig:qstar_ablation}
  \vspace{-0.95em}
\end{figure}

\subsection{Ablation on Model-Dependent Components}
\label{app:model_dependent}
The WR--RP verification involves three model-dependent components:
the reference generator, the suspect model, and the private
classifier. We ablate each in isolation while keeping the rest of
the pipeline at its default configuration. Throughout, we report
the WR--RP gap, since the
ownership decision depends only on whether this gap is significant:
a large gap indicates reliable verification, while a collapsed gap
indicates the decision threshold $\tau_{\alpha,\zeta}$ cannot be
exceeded. Unless otherwise stated, RP is $0.125$ under the default
reference generator and private classifier.

\textbf{Reference Generator (Inference-Time).}
Since RP is estimated on the reference generator, we test whether
the WR--RP gap is sensitive to its choice, replacing it with
\textbf{(a)} distinct architectures and scales (Stable Diffusion XL
(SDXL)~\cite{podell2023sdxl}, Z-Image),
and \textbf{(b)} the hardest-to-distinguish SD~v1.3--v1.5 variants
fine-tuned on ArtBench, CelebA-HQ, and Landscape, which constitute
a worst case as they share initialization and architecture, differing only in
continued-training steps. As shown in Tab.~\ref{tab:refgen_ablation},
the gap stays consistently large, above $0.85$ across distinct
architectures and within $0.864$ to $0.899$ for the hardest
variants. This indicates the WR--RP separation does not rely on
any particular reference generator.


\begin{table}[t]
\centering
\caption{False-positive evaluation on non-watermarked suspects (the WR--RP gap collapses, WR $<\tau_{\alpha,\zeta}=0.426$; RP $=0.125$).}
\label{tab:false_positive}
\footnotesize
\vspace{-0.5em}
\setlength{\tabcolsep}{5pt}

\begin{subtable}{\linewidth}
\centering
\caption{Distinct architectures and scales.}
\vspace{-0.5em}
\resizebox{0.42\linewidth}{!}{
\begin{tabular}{lc}
\toprule
Suspect Model & WR \\
\midrule
SDXL     & 0.312 \\
Z-Image  & 0.008 \\
\bottomrule
\end{tabular}
}
\end{subtable}

\vspace{0.6em}

\begin{subtable}{\linewidth}
\centering
\caption{Hardest-to-distinguish SD~v1.3--v1.5 variants.}
\vspace{-0.5em}
\resizebox{0.8\linewidth}{!}{
\begin{tabular}{lccc}
\toprule
Model & ArtBench & CelebA-HQ & Landscape \\
\midrule
SD v1.3 & 0.114 & 0.101 & 0.126 \\
SD v1.4 & 0.123 & 0.108 & 0.127 \\
SD v1.5 & 0.107 & 0.102 & 0.110 \\
\bottomrule
\end{tabular}
}
\end{subtable}
\vspace{-0.6em}
\end{table}

\textbf{Suspect Model (Inference-Time).}
Using the same two settings as suspects, with the private
classifier and reference generator fixed (hence RP $=0.125$ and
$\tau_{\alpha,\zeta}=0.426$), we check for false attribution of
non-watermarked models. As shown in Tab.~\ref{tab:false_positive},
the gap collapses for every suspect, with WR staying far below
$\tau_{\alpha,\zeta}$ and under $0.13$ even for the
hardest-to-distinguish variants. This suggests Cert-LAS does not
produce false positives, independent of the suspect's architecture
or pretraining background.

\textbf{Private Classifier (Watermarking-Stage).}
We embed the watermark using SD~v1.4, SDXL, and Z-Image as the
private classifier. As shown in Tab.~\ref{tab:cls_embed}, the
WR--RP gap remains above $0.79$ across all 
architectures. This indicates that the private classifier and the protected generator can adopt different architectures during watermark embedding, granting the defender flexibility in selecting the classifier backbone.


\looseness=-1
\textbf{Private Classifier (Inference-Time).}
We hereby replace the private classifier at verification
stage, using SD~v1.4 variants fine-tuned on ArtBench, CelebA-HQ,
and Landscape, as well as SDXL and Z-Image, while the
watermarking-stage classifier remains unchanged. As shown in
Tab.~\ref{tab:cls_infer}, the gap collapses in all cases, with WR
staying far below $\tau_{\alpha,\zeta}$, likely because the
watermark is embedded as a targeted misclassification specific to
one classifier's energy landscape and does not transfer across
misaligned landscapes. This indicates that only the defender's private classifier can verify the watermark, preventing adversaries from forging it with a substitute classifier.


\begin{table}[t]
\centering
\caption{Ablation on the watermarking-stage private classifier (WR--RP gap $>0.79$).}
\label{tab:cls_embed}
\footnotesize
\vspace{-0.5em}
\setlength{\tabcolsep}{5pt}
\resizebox{0.55\linewidth}{!}{
\begin{tabular}{lc}
\toprule
Private Classifier & WR--RP gap \\
\midrule
SD v1.4  & 0.875 \\
SDXL     & 0.790 \\
Z-Image  & 0.987 \\
\bottomrule
\end{tabular}
}
\vspace{-1.3em}
\end{table}

\begin{table}[t]
\centering
\caption{Replacing the private classifier at inference time (the WR--RP gap collapses, WR $\ll\tau_{\alpha,\zeta}$).}
\label{tab:cls_infer}
\footnotesize
\vspace{-0.5em}
\setlength{\tabcolsep}{5pt}
\resizebox{0.78\linewidth}{!}{
\begin{tabular}{lccc}
\toprule
Classifier (replaced) & WR & RP & $\tau_{\alpha,\zeta}$ \\
\midrule
SD v1.4 (ArtBench)   & 0.004 & 0.102 & 0.351 \\
SD v1.4 (CelebA-HQ)  & 0.048 & 0.008 & 0.219 \\
SD v1.4 (Landscape)  & 0.019 & 0.197 & 0.456 \\
SDXL                 & 0.232 & 0.210 & 0.469 \\
Z-Image              & 0.001 & 0.003 & 0.233 \\
\bottomrule
\end{tabular}
}
\vspace{-1em}
\end{table}

\begin{table}[!t]
    \centering
    \caption{  Cross-architecture robustness ($\mathrm{T@10}^{-6}\mathrm{F}$) under full fine-tuning  on four datasets.}
    \label{tab:crossarch_stf}
    \footnotesize
    \vspace{-0.5em}
    \setlength{\tabcolsep}{4pt}
    \resizebox{0.85\linewidth}{!}{
    \begin{tabular}{llcccc}
        \toprule
        Dataset & Backbone & 500 & 1000 & 1500 & 2000 \\
        \midrule
        \multirow{2}{*}{ArtBench}
            & SDXL      & 1.000 & 1.000 & 0.996 & 0.988 \\
            & Sana-1.6B & 1.000 & 1.000 & 0.999 & 0.993 \\
        \midrule
        \multirow{2}{*}{CelebA-HQ}
            & SDXL      & 1.000 & 1.000 & 1.000 & 0.999 \\
            & Sana-1.6B & 1.000 & 1.000 & 1.000 & 0.995 \\
        \midrule
        \multirow{2}{*}{Cartoon}
            & SDXL      & 1.000 & 1.000 & 1.000 & 0.990 \\
            & Sana-1.6B & 1.000 & 1.000 & 1.000 & 0.995 \\
        \midrule
        \multirow{2}{*}{Landscape}
            & SDXL      & 1.000 & 1.000 & 1.000 & 0.998 \\
            & Sana-1.6B & 1.000 & 1.000 & 1.000 & 0.996 \\
        \bottomrule
    \end{tabular}}
        \vspace{-1.5em}
\end{table}

\section{Generalization Analysis}

\subsection{Generality across Model Architectures}
\label{app:arch_generality}

Although Cert-LAS is implemented on SD~v1.4 by default to ensure
fair comparison with prior watermarking baselines, it is not
inherently tied to this backbone: the LFS indicator only relies on
the layerwise update consistency of diffusion models under
fine-tuning. To examine whether this property is specific to the
SD~v1.4 UNet, we evaluate Cert-LAS on Stable Diffusion XL
(SDXL)~\cite{podell2023sdxl} and Sana-1.6B~\cite{xie2024sana},
representative models of the two mainstream T2I diffusion
architectures (UNet-based and Transformer-based). Notably, on
Sana-1.6B we instantiate the verification classifier with a
flow-matching velocity-MSE score, rather than the denoising score
used for SD~v1.4/SDXL. We find that the same layerwise consistency
holds on both backbones, and hence LFS-guided smoothing transfers
across architectures.

\looseness=-1
\textbf{Layerwise Consistency.} We repeat the pilot study of
Appendix~\ref{app:pilot} on both backbones, using the normalized
stability score $S(l)=1-\mathrm{RD}(l)/(L-1)$ to account for their
different layer counts. As Fig.~\ref{fig:ecdf_crossarch} shows,
the $S(l)$ distributions of SDXL and Sana-1.6B almost coincide
with that of SD~v1.4 at high stability scores, indicating that the
layerwise consistency is an architectural property shared across
UNet- and Transformer-based diffusion models rather than an
artifact of the SD~v1.4 UNet.

\textbf{Watermark Robustness.} We then evaluate watermark
robustness under both standard full fine-tuning
(Tab.~\ref{tab:crossarch_stf}) and advanced fine-tuning
(Tab.~\ref{tab:crossarch_advanced}). Across all datasets and
fine-tuning paradigms, Cert-LAS remains highly robust on both
backbones. Specifically, even under the most challenging regime of
full fine-tuning for $2000$ steps, $\mathrm{T@10}^{-6}\mathrm{F}$
stays above $0.988$ for both SDXL and Sana-1.6B, while advanced
fine-tuning (DreamBooth, LoRA, Custom Diffusion) leaves it
essentially unaffected. This indicates that Cert-LAS transfers
effectively across architectures and scales, confirming that the
proposed LFS-guided layer-adaptive smoothing is not restricted to
the UNet backbone.

\subsection{Generality across Classification Tasks}
\label{app:task_generality}

Although Cert-LAS adopts a Dogs-vs-Cats class pair with a fixed
prompt template by default, the class pair, prompt template, and
number of classes are design choices of the defender rather than
inherent constraints of the method. To verify this, we jointly
embed Cert-LAS on the same backbone under three variations: (i)
multiple alternative binary class pairs, (ii) varied prompt
templates, and (iii) multi-class settings on STL-10 and CIFAR-10.
As shown in Tab.~\ref{tab:task_generality}, all configurations
achieve $\mathrm{T@10}^{-6}\mathrm{F}=1.000$ and
$\mathrm{VSR}=1.000$. This indicates that Cert-LAS generalizes
across diverse class pairs, prompt templates, and multi-class
settings, and is not tied to the default Dogs-vs-Cats instance.

\begin{table}[t]
    \centering
    \caption{Cross-architecture robustness ($\mathrm{T@10}^{-6}\mathrm{F}$) under advanced fine-tuning: DreamBooth, LoRA, and Custom Diffusion.}
    \label{tab:crossarch_advanced}
    \footnotesize
    \vspace{-0.5em}
    \setlength{\tabcolsep}{5pt}

    \begin{subtable}{\linewidth}
    \centering
    \caption{DreamBooth}
    \vspace{-0.5em}
    \resizebox{0.73\linewidth}{!}{
    \begin{tabular}{lcccc}
        \toprule
        Backbone & 250 & 500 & 750 & 1000 \\
        \midrule
        SDXL      & 1.000 & 1.000 & 1.000 & 1.000 \\
        Sana-1.6B & 1.000 & 1.000 & 1.000 & 1.000 \\
        \bottomrule
    \end{tabular}
    }
    \end{subtable}

    \vspace{0.5em}

    \begin{subtable}{\linewidth}
    \centering
    \caption{LoRA (rank 640)}
    \vspace{-0.5em}
    \resizebox{0.73\linewidth}{!}{
    \begin{tabular}{lcccc}
        \toprule
        Backbone & 500 & 1000 & 1500 & 2000 \\
        \midrule
        SDXL      & 1.000 & 1.000 & 1.000 & 1.000 \\
        Sana-1.6B & 1.000 & 1.000 & 1.000 & 0.998 \\
        \bottomrule
    \end{tabular}
    }
    \end{subtable}

    \vspace{0.5em}

    \begin{subtable}{\linewidth}
    \centering
    \caption{Custom Diffusion}
    \vspace{-0.5em}
    \resizebox{0.83\linewidth}{!}{
    \begin{tabular}{lccccc}
        \toprule
        Backbone & 100 & 200 & 300 & 400 & 500 \\
        \midrule
        SDXL      & 1.000 & 1.000 & 1.000 & 1.000 & 1.000 \\
        Sana-1.6B & 1.000 & 1.000 & 1.000 & 1.000 & 1.000 \\
        \bottomrule
    \end{tabular}
    }
    \end{subtable}
    \vspace{-0.5em}
\end{table}

\section{Extended Discussion on Threat Model}
\label{app:threat_model_discussion}

Arguably, the MOV setting adopted in the main paper is compatible with realistic legal and compliance-oriented forensics: a trusted verification authority can obtain an executable copy of the suspect model (\eg, a weight snapshot or an equivalent implementation) via forensic procedures, platform retention, or lawful requests, and perform verifiable ownership determination without disclosing the owner's private verification information. In particular, when direct access to the model's internal parameters is unavailable and verification can only proceed via querying the suspect model, our method could degrade to an \emph{empirical} watermarking approach. The owner queries the suspect model using class prompts, collects the generated outputs, and applies a private diffusion classifier for decision-making. Although this variant no longer provides certified robustness guarantees without inference-time layer-adaptive smoothing, it remains highly effective in practice: as demonstrated in our experiments (in Section~\ref{sec:empirical}), the degraded method exhibits strong empirical robustness against various fine-tuning attacks, while the trigger-free design ensures stealthiness by employing neither explicit triggers on the input side nor visible artifacts in generated images, thereby evading input/output-space watermark auditing (in Section~\ref{sec:main}).

\begin{table}[t]
    \centering
\caption{Generality of Cert-LAS across classification tasks, prompt templates, and the number of classes. Template~1 = ``\{class\}''; Template~2 = ``a blurry photo of a \{class\}''; Template~3 = ``a photo of a \{class\}, a type of pet''.}
    \label{tab:task_generality}
    \footnotesize
    \vspace{-0.5em}
    \setlength{\tabcolsep}{4pt}
    \begin{tabular}{llcc}
        \toprule
        Variation & Configuration & T@10$^{-6}$F$\uparrow$ & VSR$\uparrow$ \\
        \midrule
        \multirow{3}{*}{Class pair}
            & Dogs-vs-Cats   & 1.000 & 1.000 \\
            & Black-vs-White & 1.000 & 1.000 \\
            & Ship-vs-Truck  & 1.000 & 1.000 \\
        \midrule
        \multirow{3}{*}{Prompt template}
            & Template 1 & 1.000 & 1.000 \\
            & Template 2 & 1.000 & 1.000 \\
            & Template 3 & 1.000 & 1.000 \\
        \midrule
        \multirow{2}{*}{Multi-class}
            & STL-10 (10-class)   & 1.000 & 1.000 \\
            & CIFAR-10 (10-class) & 1.000 & 1.000 \\
        \bottomrule
    \end{tabular}
        \vspace{-0.6em}
\end{table}

\begin{figure}[t]
    \centering
    \includegraphics[width=0.75\linewidth]{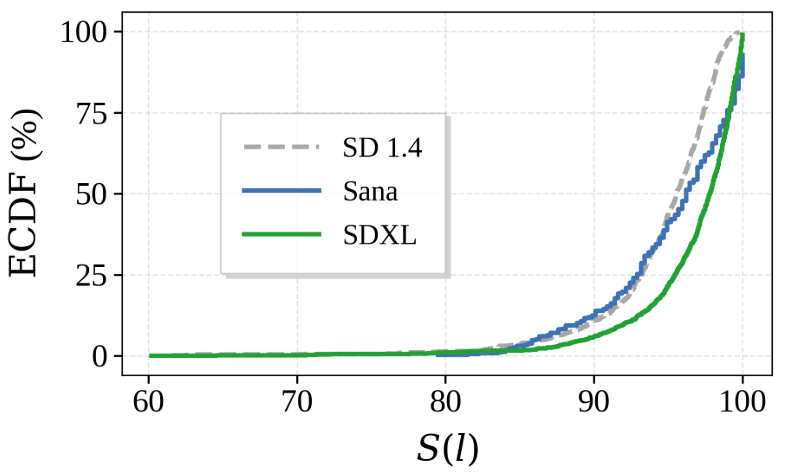}
    \vspace{-0.5em}
    \caption{  Empirical cumulative distribution functions of the stability score $S(l)$ across layers for SDXL  and Sana-1.6B.} 
    \label{fig:ecdf_crossarch}
    \vspace{-1em}

\end{figure}

\section{Extended Discussion: Multi-Owner Scenarios}
\label{app:multi_owner}

Our Cert-LAS mainly targets model ownership verification, \ie, determining
whether a suspect model derives from a protected one, rather than
owner identification among a large pool. Nevertheless, the
multi-owner scenario, where distinct owners' watermarks must
remain mutually distinguishable, is also worth exploring. Cert-LAS
achieves such non-interference along two orthogonal dimensions:
the task dimension and the classifier dimension. Along the task
dimension, the generality across class pairs 
 (Tab.~\ref{tab:task_generality}) implies
that different owners can adopt non-overlapping task
configurations, keeping their watermark responses separable
without degrading verifiability. Along the classifier dimension,
as established in Appendix~\ref{app:model_dependent}
(Tab.~\ref{tab:cls_infer}), the verification signal is uniquely
bound to the watermarking-stage private classifier, and replacing it
at inference renders the watermark undetectable. This
non-transferability ensures that distinct owners' watermarks do
not interfere with one another. Nevertheless, how to enable scalable multi-owner verification remains an important open problem.

\section{Qualitative Results for Downstream Fine-tuning}
\vspace{-0.5em}
\label{app:qualitative}
In this section, we provide qualitative evidence that our watermark embedding does not degrade the model's adaptability for downstream customization. Fig.~\ref{fig:qual_strips} shows generations from SD v1.4 watermarked by Cert-LAS, subsequently fine-tuned using three popular personalization techniques: LoRA, DreamBooth, and Custom Diffusion. Across all three fine-tuning paradigms, the watermarked model successfully learns the target concepts and produces high-quality, style-consistent outputs, demonstrating that Cert-LAS imposes no observable degradation.

\begin{figure}[t]
    \centering
    \begin{subfigure}[t]{\linewidth}
        \captionsetup{justification=raggedright,singlelinecheck=false}
        \caption{Advanced Fine-tuning via LoRA (Rank = 320)}
        \centering
        \includegraphics[width=\linewidth]{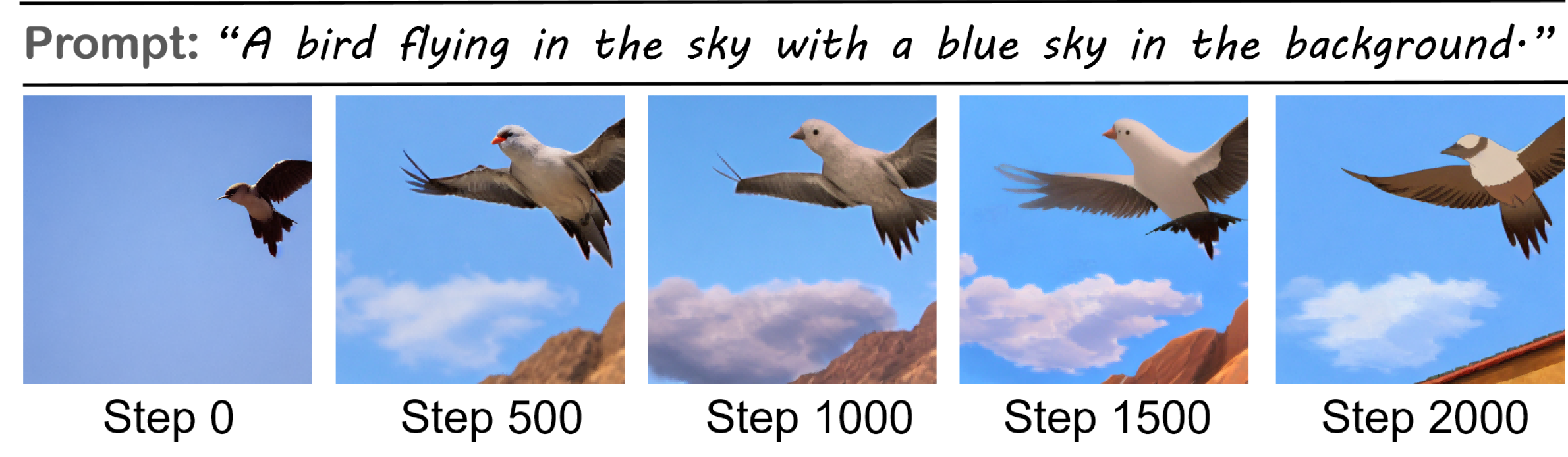}
    \end{subfigure}

    \begin{subfigure}[t]{\linewidth}
        \captionsetup{justification=raggedright,singlelinecheck=false}
        \caption{Advanced Fine-tuning via DreamBooth}
        \centering
        \includegraphics[width=\linewidth]{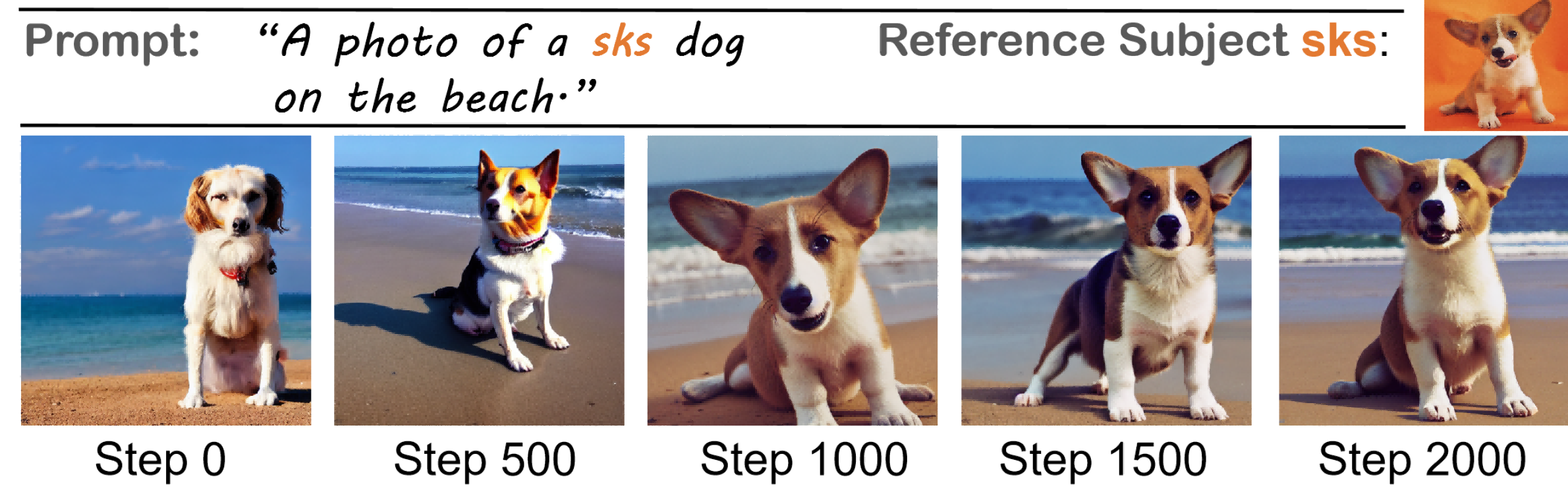}
    \end{subfigure}

    \begin{subfigure}[t]{\linewidth}
        \captionsetup{justification=raggedright,singlelinecheck=false}
        \caption{Advanced Fine-tuning via Custom Diffusion}
        \centering
        \includegraphics[width=\linewidth]{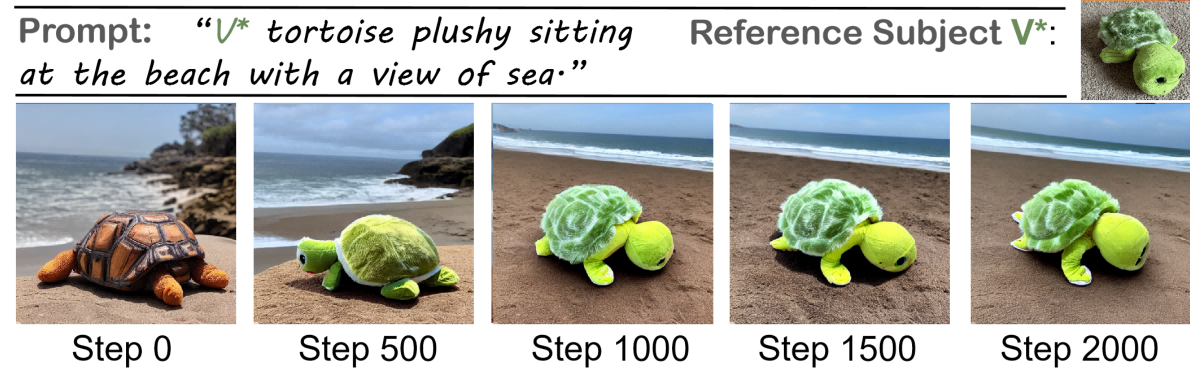}
    \end{subfigure}
    \vspace{-0.5em}
    \caption{Qualitative generations of SD v1.4 watermarked by Cert-LAS after downstream fine-tuning with (a) LoRA, (b) DreamBooth, and (c) Custom Diffusion. The embedded watermark does not impair the model's adaptability for downstream customization.}
    \label{fig:qual_strips}
    \vspace{-0.8em}
\end{figure}


\end{document}